\theoremstyle{definition}
\newtheorem{example}[thm]{Example}
\newenvironment{repthm}[2][]{\par\noindent\textbf{Restatement of Theorem \ref{#2}} (#1).\itshape}{}
\newenvironment{replem}[2][]{\par\noindent\textbf{Restatement of Lemma \ref{#2}} (#1).\itshape}{}
\definecolor{dark-gray}{gray}{0}
\newif\ifpgf\pgftrue  %per avere le immagini
\lstdefinestyle{mystyle}{flexiblecolumns=true,showstringspaces=false,keepspaces=true,basewidth={0em,0em},basicstyle=\sffamily,commentstyle=\itshape,stringstyle=\sffamily}
\definecolor{diffhilight}{rgb}{0.6, 0.81, 0.93}
\definecolor{darkgreen}{rgb}{0,0.5,0}
\newcommand{\scafi}{{ScaFi}\xspace}
\newcommand{\Protelis}{{Protelis}\xspace}
\lstdefinelanguage{scafi}{
  keywords={abstract,case,catch,class,def,%
    do,else,extends,final,finally,%
    for,if,implicit,import,match,mixin,%
    new,null,object,override,package,%
    private,protected,requires,return,sealed,%
    super,this,throw,trait,try,lazy,%
    type,val,var,while,with,yield,forSome},
  otherkeywords={=>,<-,<\%,<:,>:,\#},
  keywordstyle=\color{blue}\textbf,
  keywordstyle=[2]\color{red},
  keywords=[2]{rep,nbr,foldhood,branch,@@,aggregate},
  keywordstyle=[3]\color{violet},
  keywords=[3]{false,true,nbrvar,sense,mid,mux,nbrRange,sumHood,pickHood,countHood,minHood,maxHood,foldhoodPlus,G,C,S,T,broadcast,distanceTo,summarize,minHoodPlus},
  keywordstyle=[4]\color{Emerald},
  keywordstyle=[5]\color{Brown},
  sensitive=true,
  morecomment=[l]{//},
  morecomment=[n]{/*}{*/},
  commentstyle=\color{darkgreen},
  morestring=[b]",
  morestring=[b]',
  morestring=[b]""",
  basicstyle=\lst@ifdisplaystyle\small\fi\ttfamily
}
\lstdefinelanguage{fscafi}{
  keywords={abstract,case,catch,class,def,%
    do,else,extends,final,finally,%
    for,if,implicit,import,match,mixin,%
    new,null,object,override,package,%
    private,protected,requires,return,sealed,%
    super,this,throw,trait,try,lazy,%
    type,val,var,while,with,yield,forSome},
  otherkeywords={=>,<-,<\%,<:,>:,\#},
  keywordstyle=\color{black}\textbf,
  keywordstyle=[2]\color{black}\textbf,
  keywords=[2]{rep,nbr,foldhood,branch,@@},
  keywordstyle=[3]\color{black}\textbf,
  keywords=[3]{false,true,nbrvar,sense,mid,mux,nbrRange,sumHood,pickHood,countHood,minHood,maxHood,foldhoodPlus,G,C,S,T,broadcast,distanceTo,summarize,minHoodPlus},
  keywordstyle=[4]\color{black},
  keywordstyle=[5]\color{black},
  sensitive=true,
  morecomment=[l]{//},
  morecomment=[n]{/*}{*/},
  commentstyle=\color{darkgreen},
  morestring=[b]",
  morestring=[b]',
  morestring=[b]""",
  basicstyle=\lst@ifdisplaystyle\small\fi\ttfamily
}
\lstdefinelanguage{hfc}{
  keywords={def},
  otherkeywords={=>,<-,<\%,<:,>:,\#,@},
  keywordstyle=\color{blue}\textbf,
  keywordstyle=[2]\color{red},
  keywords=[2]{rep,nbr,if,let,in},
  keywordstyle=[3]\color{violet},
  keywords=[3]{false,true,foldhood,mid,mux,min,max,fst,snd,nbrRange,sumHood,pickHood,countHood,minHood,maxHood,foldhoodPlus,G,C,S,T,broadcast,distanceTo,summarize,minHoodPlus,pair,fst,snd,dag_empty,dag_node,dag_join,dag_union},
  keywordstyle=[4]\color{Emerald},
  keywordstyle=[5]\color{Brown},
  sensitive=true,
  morecomment=[l]{//},
  morecomment=[n]{/*}{*/},
  commentstyle=\color{darkgreen},
  morestring=[b]",
  morestring=[b]',
  morestring=[b]"""
}
\newcommand{\setLanguage}[1]{
\lstset{frame=single,basewidth=0.5em,language={#1}}
}
\newcommand{\revisedMinor}[1]{#1}
\newcommand{\revised}[1]{#1}
\newcommand{\FORGET}[1]{}
\newcommand{\HFC}{HFC}
\newcommand{\HFCprime}{$\text{HFC}\,'$}
\newcommand{\FSCAFI}{{NC}\xspace}
\newcommand{\FSCAFIi}{{NC$\,'$}\xspace}
\newcommand{\FeatherweightSCAFI}{{neighbours calculus}\xspace}
\newcommand{\BNFcce}{{\bf ::=}}
\newcommand{\BNFmid}{\;\bigr\rvert\;}
\newcommand{\PROGRAM}{\mathtt{P}}
\newcommand{\FUNCTION}{\mathtt{F}}
\newcommand{\main}{\mathtt{main}}
\newcommand{\e}{\mathtt{e}}
\newcommand{\s}{\mathtt{s}}
\newcommand{\emain}{\e_{\main}}
\newcommand{\fname}{\mathtt{d}}
\newcommand{\xname}{\mathtt{x}}
\newcommand{\yname}{\mathtt{y}}
\newcommand{\zname}{\mathtt{z}}
\newcommand{\bname}{\mathtt{b}}
\newcommand{\oname}{\mathtt{o}}
\newcommand{\anyvalue}{\mathtt{v}}
\newcommand{\lvalue}{\ell}
\newcommand{\fvalue}{\phi}
\newcommand{\funvalue}{\mathtt{f}}
\newcommand{\svalue}{\mathtt{s}}
\newcommand{\snvalue}{\mathtt{r}}
\newcommand{\truevalue}{\mathtt{True}}
\newcommand{\falsevalue}{\mathtt{False}}
\newcommand{\dc}{\mathtt{c}}
\newcommand{\dcOf}[2]{#1(#2)}
\newcommand{\auxNAME}{\textit{aux}}
\newcommand{\aux}[1]{\auxNAME(#1)}
\newcommand{\bodyNAME}{\textit{body}}
\newcommand{\body}[1]{\bodyNAME(#1)}
\newcommand{\argsNAME}{\textit{args}}
\newcommand{\args}[1]{\argsNAME(#1)}
\newcommand{\nameOf}{\textit{name}}
\newcommand{\FVname}{\textbf{FV}}
\newcommand{\FV}[1]{\FVname(#1)}
\newcommand{\FTVname}{\textbf{FTV}}
\newcommand{\FTV}[1]{\FTVname(#1)}
\newcommand{\defK}{\mathtt{def}}
\newcommand{\nbrK}{\mathtt{nbr}}
\newcommand{\repK}{\mathtt{rep}}
\newcommand{\letK}{\mathtt{let}}
\newcommand{\inK}{\mathtt{in}}
\newcommand{\ifK}{\mathtt{if}}
\newcommand{\foldK}{\mathtt{foldhood}}
\newcommand{\toSym}[1]{\stackrel{#1}{\mathrm{\texttt{=>}}}}
\newcommand{\aggrK}{}
\newcommand{\eqSymK}[1]{\mathrm{ \texttt{\aggrK\!\!\{} #1 \texttt{\}} }}
\newcommand{\toSymK}[1]{\mathrm{ \texttt{=> \aggrK\!\!\{} #1 \texttt{\}} }}
\newcommand{\toSymKabs}[1]{\mathrm{ \texttt{=>} #1  }}
\newcommand{\toSymKtag}[2]{\toSym{#1}\mathrm{ \texttt{\aggrK\{} #2 \texttt{\}}}}
\newcommand{\name}{\tau}
\newcommand{\fstK}{\mathtt{fst}}
\newcommand{\sndK}{\mathtt{snd}}
\newcommand{\headK}{\mathtt{head}}
\newcommand{\tailK}{\mathtt{tail}}
\newcommand{\pairK}{\mathtt{pair}}
\newcommand{\consK}{\mathtt{cons}}
\newcommand{\PairK}{\mathtt{Pair}}
\newcommand{\NullK}{\mathtt{Null}}
\newcommand{\ConsK}{\mathtt{Cons}}
\newcommand{\listK}{\mathtt{list}}
\newcommand{\pairltypeOf}[2]{\pairK(#1,#2)}
\newcommand{\listltypeOf}[1]{\listK(#1)}
\newcommand{\nbrlt}{\mathtt{nbrlt}}
\newcommand{\muxK}{\mathtt{mux}}
\newcommand{\type}{\textit{T}}
\newcommand{\ltype}{\textit{L}}
\newcommand{\ftype}{\textit{F}}
\newcommand{\rtype}{\textit{R}}
\newcommand{\stype}{\textit{S}}
\newcommand{\builtintype}{\textit{B}}
\newcommand{\ftypeOf}[1]{\mathtt{field}(#1)}
\newcommand{\btype}{\mathtt{bool}}
\newcommand{\ntype}{\mathtt{num}}
\newcommand{\stvar}{s}
\newcommand{\rtvar}{r}
\newcommand{\ltvar}{l}
\newcommand{\tvar}{t}
\newcommand{\typescheme}{\textit{TS}}
\newcommand{\ltypescheme}{\textit{LS}}
\newcommand{\TtypEnv}{\mathcal{A}}
\newcommand{\OStypEnv}{\mathcal{B}}
\newcommand{\TStypEnv}{\mathcal{D}}
\newcommand{\LTStypEnv}{\mathcal{D}}
\newcommand{\typeofNAME}{\OStypEnv}
\newcommand{\typeof}[1]{\typeofNAME(#1)}
\newcommand{\expTypJud}[4]{#1 ; #2 \vdash #3 : #4}
\newcommand{\funTypJud}[3]{#1 \vdash #2 : #3}
\newcommand{\proTypJud}[2]{\vdash #1 : #2}
\newcommand{\surfaceTyping}[3]{
  \begin{array}{l@{\;}c}
    \stackrel{~}{{\tiny \textrm{[#1]}}} & #2 \\ \hline 
    \multicolumn{2}{c}{#3}
  \end{array}
}
\newcommand{\nullsurfaceTyping}[2]{
  \surfaceTyping{#1}{}{#2}
}
\newcommand{\denotf}[2]{\lambda #1.#2}
\newcommand{\deviceIdSet}{\textbf{D}}
\newcommand{\builtinop}[3]{\llparenthesis #1 \rrparenthesis_{#2}^{#3}}
\newcommand{\filter}{F}
\newcommand{\Trees}{\Theta}
\newcommand{\emptyseq}{\bullet}
\newcommand{\Topo}{\rightarrowtail}
\newcommand{\Sens}{\Sigma}
\newcommand{\Envi}{\textit{Env}}
\newcommand{\EnviS}[2]{\langle #1,#2 \rangle}
\newcommand{\SystS}[2]{\langle #1;#2 \rangle}
\newcommand{\Field}{\Psi}
\newcommand{\Cfg}{N}
\newcommand{\wfn}[1]{\textit{WFE}(#1)}
\newcommand{\senstate}{\sigma}
\newcommand{\Stat}{\textit{Stat}}
\newcommand{\Activation}{\alpha}
\newcommand{\actOFF}{\mathtt{false}}
\newcommand{\actON}{\mathtt{true}}
\newcommand{\nettran}[3]{#1\xrightarrow{#2} #3}
\newcommand{\act}{\textit{act}}
\newcommand{\envact}{\textit{env}}
\newcommand{\envmap}[2]{#1\mapsto #2}
\newcommand{\mapupdate}[2]{#1[#2]}
\newcommand{\globalupdate}[2]{#1\llbracket #2 \rrbracket}
\newcommand{\proj}[2]{{#1}|_{#2}}
\newcommand{\ruleNameSize}[1]{{\scriptsize #1}}
\newcommand{\domofNAME}{\textbf{dom}}
\newcommand{\domof}[1]{\domofNAME(#1)}
\newcommand{\erasureofNAME}{\textbf{erasure}}
\newcommand{\erasureof}[1]{\erasureofNAME(#1)}
\newcommand{\vtree}{\theta}
\newcommand{\mkvtree}[3]{#2 \langle #3 \rangle}
\newcommand{\mkvt}[2]{#1 \langle #2 \rangle}
\newcommand{\piB}[1]{\pi^{#1}}
\newcommand{\piBof}[2]{\piB{#1}(#2)}
\newcommand{\piI}[1]{\pi_{#1}}
\newcommand{\piIof}[2]{\piI{#1}(#2)}
\newcommand{\piIofOv}[1]{\overline{\pi}(#1)}
\newcommand{\lengthOf}[1]{\textit{length}(#1)}
\newcommand{\bsopsem}[5]{#1;#2;#3\vdash #4\Downarrow #5}
\newcommand{\deviceId}{\delta}
\newcommand{\vroot}{\mathbf{\rho}}
\newcommand{\vrootOf}[1]{\vroot(#1)}
\newcommand{\substitution}[2]{#1:=#2}
\newcommand{\applySubstitution}[2]{#1[#2]}
\newcommand{\bsopsemFAIL}[4]{#1;#2;#3\vdash #4\;\FAIL}
\newcommand{\FAIL}{\textup{\textsc{fail}}}
\newcommand{\skiptransition}{\\[10pt]}
\newcommand{\skiptransitionR}{\\[-3pt]}
\newcommand{\skiptransitionN}{\\[-2.22pt]}
\newcommand{\netopsemRule}[3]{\surfaceTyping{#1}{#2}{#3}}
\newcommand{\coherent}[3]{\textit{WFVT}(#3;#1)}
\newcommand{\coherentEnv}[3]{\textit{WFVTE}(#3;#1)}
\newcommand{\ap}[1]{\langle #1 \rangle}
\newcommand{\bp}[1]{\left\lbrace #1 \right\rbrace}
\newcommand{\eqhl}[1]{\colorbox{gray!30}{$\displaystyle#1$}}
\keywords{distributed functional programming, aggregate computing, field calculi, Scala DSL, collective intelligence, self-organisation, decentralised systems} %core calculus, operational semantics, spatial computing} %, type inference system, type soundness}
\theoremstyle{plain} %\crefname{satz}{Satz}{S\"atze}
\begin{document}

%\author{Roberto Casadei}
%\orcid{0000-0001-9149-949X}
%\affiliation{%
%  \institution{Alma Mater Studiorum--Universit{\`a} di Bologna}
%  \streetaddress{Via Cesare Pavese, 50}
%  \city{Cesena}
%  \postcode{47521}
%  \country{Italy}}
%\email{roby.casadei@unibo.it}
%
%\author{Mirko Viroli}
%\orcid{0000-0003-2702-5702}
%\affiliation{%
%  \institution{Alma Mater Studiorum--Universit{\`a} di Bologna}
%  \streetaddress{Via Cesare Pavese, 50}
%  \city{Cesena}
%  \postcode{47521}
%  \country{Italy}}
%\email{mirko.viroli@unibo.it}
%
%\author{Giorgio Audrito}
%\orcid{0000-0002-2319-0375}
%\affiliation{%
%  \institution{Universit{\`a} degli Studi di Torino}
%  \streetaddress{Corso Svizzera 185}
%  \city{Torino}
%  \postcode{10149}
%  \country{Italy}}
%\email{giorgio.audrito@unito.it}
%
%\author{Ferruccio Damiani}
%\orcid{0000-0001-8109-1706}
%\affiliation{%
%  \institution{Universit{\`a} degli Studi di Torino}
%  \streetaddress{Corso Svizzera 185}
%  \city{Torino}
%  \postcode{10149}
%  \country{Italy}}
%\email{ferruccio.damiani@unito.it}

\title[Computation Against a Neighbour]{Computation Against a Neighbour:\texorpdfstring{\\}{}Addressing Large-Scale Distribution and Adaptivity with Functional Programming and Scala}
%\title{\FeatherweightSCAFI{}: a Model of Field Computations for DSL Embedding}
%\titlecomment{{\lsuper*}OPTIONAL comment concerning the title, \eg,
 % if a variant or an extended abstract of the paper has appeared elsewhere.}

 \author[G.~Audrito]{Giorgio Audrito\lmcsorcid{0000-0002-2319-0375}}[a]
\author[R.~Casadei]{Roberto Casadei\lmcsorcid{0000-0001-9149-949X}}[b]
\author[F.~Damiani]{Ferruccio Damiani\lmcsorcid{0000-0001-8109-1706}}[a]
\author[M.~Viroli]{Mirko Viroli\lmcsorcid{0000-0003-2702-5702}}[b]

\address{Universit{\`a} degli Studi di Torino, Corso Svizzera, 185, Turin, Italy}	%optional
\email{giorgio.audrito@unito.it, ferruccio.damiani@unito.it}  %optional
%\thanks{thanks 3, optional.}	%optional

\address{Alma Mater Studiorum--Universit{\`a} di Bologna, Via Cesare Pavese, 50, Cesena, Italy}	%required
\email{roby.casadei@unibo.it, mirko.viroli@unibo.it}  %optional
%\thanks{thanks 2, optional.}	%optional

%% etc.

%% required for running head on odd and even pages, use suitable
%% abbreviations in case of long titles and many authors:

%%%%%%%%%%%%%%%%%%%%%%%%%%%%%%%%%%%%%%%%%%%%%%%%%%%%%%%%%%%%%%%%%%%%%%%%%%%

%% the abstract has to PRECEDE the command \maketitle:
%% be sure not to issue the \maketitle command twice!

%\doneByFD{***FERRUCCIO SAYS: lo scopo del paper è promuovere il concetto di ``Computation Against a Neighbour'': lo motiverei come un mecchanismo lightweight (interessante di per se dal punto di vista della teoria dei linguaggi di programmazione), parlerei dei suoi pregi  e eviterei di parlare esplicitamente di (presunti) inconvenienti del field calculus e delle sue attuali implementazioni. Parlerei solo alla fine (di abstract, introduction e paper) del fatto che, come by-product, il nuovo modello si presta ad una implementazione lightweight in Scala.***}

\begin{abstract}

Recent works in contexts like the Internet of Things (IoT) and large-scale Cyber-Physical Systems (CPS) propose the idea of programming distributed systems by focussing on their global behaviour across space and time.
In this view, a potentially vast and heterogeneous set of devices is considered as an ``aggregate''
 to be programmed as a whole, while abstracting away the details of individual behaviour and exchange of messages, which are expressed declaratively.
One such a paradigm, known as aggregate programming,
 builds on computational models
 inspired by field-based coordination.
%
%The computational model leverages the notion of a ``computational field'', i.e., a map from devices to the values they compute over time, and is formally captured by a corresponding field calculus.
%
%, grounds the
%foundations of this computational model, the design of aggregate specifications and analysis of behavioural
%properties, targeting system engineering and development.
%
Existing models such as the field calculus capture interaction with neighbours by a so-called ``neighbouring field'' (a map from neighbours to values). This requires ad-hoc mechanisms to smoothly compose with standard values, thus complicating programming and introducing clutter in aggregate programs, libraries and domain-specific languages (DSLs).
 % badly impacting aggregate programming.

To address this key issue we introduce the novel notion of ``computation against a neighbour'',
 whereby the evaluation of certain subexpressions of the aggregate program are affected by recent corresponding evaluations in neighbours.
We capture this notion in the \emph{\FeatherweightSCAFI} (\FSCAFI), a new field calculus variant
which is shown to smoothly support declarative specification of interaction with neighbours, and correspondingly facilitate the embedding of field computations as internal DSLs in common general-purpose programming languages---as exemplified by a Scala implementation, called \scafi{}.
This paper formalises \FSCAFI{},
 thoroughly compares it with respect to the classic field calculus,
 and shows its expressiveness by means of a case study in edge computing, developed in \scafi{}.

\end{abstract}

%%To support smooth adoption in the context of mainstream programming, in
%%this paper we present theory and practice of \scafi{} (Scala Fields), a programming language and system for
%%Aggregate Computing based on Scala. \scafi{} provides a Scala-internal Domain-Specific Language (DSL) that
%%integrates the key mechanisms of field computations into the Scala language
%%and type system to enable exploitation of the Java ecosystem, (re-)use of advanced typing and static features, and combination with the Actors paradigm. Integration with Scala is deep with \scafi{} thanks to a semantic model for field computations based on a novel notion of ``computation against a neighbour'', i.e., by expressions whose evaluation depends on the same evaluation that occurred on a neighbour, thus abstracting communication actions.
%%
%%We present \scafi{} architecture, programming techniques, and formal foundation by means of a core calculus (\FSCAFI),
%%and showcase its development framework through a case study of decentralised edge computing.

\maketitle

%\metacomment{
%\begin{itemize}
%\item LMCS style guide and checklist (Appendix A)~\url{https://lmcs.episciences.org/public/instructions.pdf}
%\end{itemize}}

\section{Introduction} \label{sec-intro}

Pervasive computing, Internet of Things (IoT), Cyber-Physical Systems (CPS), Smart Cities and related initiatives,
 all point out a trend in informatics envisioning a future where computation is fully pervasive and ubiquitous,
 and is carried on by a potentially huge and dynamic set of heterogeneous devices deployed in physical space.
To address the intrinsic complexity of these settings, a new viewpoint is increasingly emerging:
 a large-scale network of devices, situated in some environment (e.g., the urban area of a smart city),
 can be seen as a computational overlay of the physical world, to be programmed as a single ``distributed machine''.
These kinds of systems are sometimes referred to as \emph{Collective Adaptive Systems (CAS)}~\cite{anderson2013adaptive-collective-systems-herding-black-sheep,DBLP:journals/sttt/NicolaJW20},
to emphasise that computational activities are collective (i.e., they involve multiple coordinated individuals),
  and that a main expected advantage is inherent adaptivity of behaviours to unforeseen changes---whether they are changes/faults
  in the computational environment or pertain unexpected interaction with humans or other systems.
%
%Some issues in CAS design are paradigmatic and relate to achieving an effective specification of the self-organising and goal-oriented behaviour of an ensemble of computing devices;
%other issues are also practical and relate to how behaviour specifications can be executed in a resilient and efficient way.

\emph{Aggregate Computing} \cite{BPV-COMPUTER2015} is \revised{a macro-approach~\cite{DBLP:journals/corr/abs-2201-03473}} to CAS engineering\revised{~\cite{DBLP:journals/sttt/NicolaJW20}}
\revised{where a network of devices
operating at asynchronous sense-compute-interact rounds
is programmed through a single macro-level program, called an \emph{aggregate program}.}
%
%Its key idea is to \emph{declaratively program a large system as a whole}, that is, to directly consider an ensemble of devices as the target machine to be programmed, providing an abstraction over individual behaviour and exchange of messages.
%%
%Then, it provides under-the-hood, automatic, global-to-local mapping:
% once the desired system-level behaviour is expressed by a global program, then individual computational entities of an aggregate are bound to play a derived, contextualised local behaviour of that program.
%
%Prominently, the distinguishing characteristic of Aggregate Computing as a ``macro-approach'' lies in the ability to formally represent the adaptive behaviour of an ensemble \emph{in a compositional and declarative way}, namely, by combination of functional coordination operators and high-level building blocks expressing the outcome of a collective task. %, which is then carried on in a context-aware, resilient, self-organising way.
%
\revised{An aggregate program expresses both computation and interaction, and its repeated execution by devices on their local contexts promotes the unfolding of collective adaptive behaviour.}
This approach is especially suitable to problems and application domains such as crowd engineering, complex situated coordination, robot/UAV swarms, smart ecosystems and the
 like~\cite{Viroli-et-al:JLAMP-2019}.
One fundamental enabling abstraction for specifying the dynamics of situated collectives
 is that of a \emph{computational field} (or simply, field) \cite{audrito2019tocl,proto06a,tota}: a distributed data structure that
 maps devices to computational objects across time.
Accordingly, \emph{Aggregate Programming} is about describing field computations, namely, how input fields (data coming from sensors) turn into output fields
 (actions feeding actuators)---computations that can be conveniently expressed using the functional paradigm.
%
%From an operational perspective, the approach works by deploying the same piece of program (also called \emph{aggregate program}) to the devices making up the \emph{aggregate system},
% and having any device asynchronously and repeatedly run such program against its local context (built from sensor readings and messages from neighbour devices),
% according to a local semantics enacted by a locally installed ``aggregate virtual machine''.

\revised{Aggregate computing is part of \emph{field-based coordination}~\cite{Viroli-et-al:JLAMP-2019}.
The core formal language
 capturing the essence of aggregate computing using fields
 is the \emph{field calculus} \cite{audrito2019tocl}.
\revised{The field calculus can be seen as a particular kind of \emph{context-oriented programming}~\cite{DBLP:journals/jss/SalvaneschiGP12} approach with \emph{multiple local contexts}
 where behavioural variations
 are determined through branching conditions
 on field values~\cite{DBLP:conf/saso/CasadeiPSV19}.
\revised{
The field calculus
 and its computational model
 also provide an expressive formal framework
for supporting both the investigation of properties %and guarantees
 about distributed computations
 (like self-stabilisation~\cite{viroli:selfstabilisation},
 density-independence~\cite{BVPD-TAAS2017},
 real-time guarantees~\cite{Audrito-et-al:RTSS-2018},
 universality~\cite{abdv:universality},
 time-fluidity~\cite{DBLP:journals/lmcs/PianiniCVMZ21},
 and deployment flexibility~\cite{ieee-iot-2022-deployment-ac})
 and novel constructs
 (like aggregate processes~\cite{casadei19processes} and the \emph{share} construct~\cite{abdpv:lmcs:share}).
}
The field calculus had been implemented by}  Domain Specific Languages (DSLs) such as Proto (a Scheme external DSL)~\cite{proto06a}, Protelis (a Java external DSL)~\cite{Protelis15},
FCPP (a C++ internal DSL)~\cite{FCPP-ACSOS-2020}.
\revised{These DSLs} rely on
the notion of ``neighbouring value'' (a map from neighbours to data values) to model device interaction.
A neighbouring value} is used to locally express the outcome of message reception from neighbours, and to manipulate such information to collect resulting values.
As a consequence, these languages need two classes of operators and types:
 one for dealing with local values,
 and one for dealing with collection-like, neighbouring values.
Managing and reconciling these two kinds of operations may tend to complicate aggregate programming, design of libraries, and language implementation.
Existing DSLs deal with this issue in a variety of ways: relying on dynamic typing (as in Proto), using macros and meta-programming techniques to alleviate such issues to the user (as in FCPP), or requiring duplication of operation across local and neighbouring values (as in Protelis)---none of which is completely satisfactory.

In order to resolve the duality of local/neighbouring values, with the goal of \emph{conceptual economy} and to promote \emph{smooth embedding} of field computations in mainstream languages and programming practice, in this paper we introduce a novel notion of ``computation against a neighbour'', used to express interaction in field-based coordination by entirely replacing the notion of neighbouring value.
The key idea is to allow the evaluation of a certain sub-expression of the aggregate program to depend on a recent outcome of the evaluation of the same subexpression in a neighbour.
Such a dependency is hence expressed fully declaratively, without escaping the functional paradigm of field-based computing, and is then internally implemented by asynchronous message exchange across neighbours.
To present this mechanism and study its implications, in this paper we:
\begin{enumerate}
\item
define syntax, typing, operational semantics of a foundation calculus, called \emph{\FeatherweightSCAFI} (\FSCAFI);
\item
investigate the properties of the \FSCAFI\ and its relationship with the field calculus;
\item show the advantages that \FSCAFI\ brings in term of smooth embedding into \revised{Scala---host language chosen for its hybrid object-oriented/functional nature, flexibility, and suitability for internal DSL development~\cite{DBLP:conf/icfem/ArthoHKY15,Calus:2017:FIM:3141858.3141861}}.
\end{enumerate}
%
%We stress that \FSCAFI{}, as a field calculus variant,
% is interesting \emph{per se}.
%
%Beside the purely theoretical contribution,
% its major practical benefit is conceptual frugality,
% which arguably leads to easier DSL embedding.
%
In particular, the last contribution is based on the implementation of the \FSCAFI\ computational model in \emph{\scafi{}}\footnote{\url{https://github.com/scafi/scafi}} (\emph{Sca}la \emph{Fi}elds).
\scafi{} is a Scala~\cite{sloane:hal-00350269} aggregate programming toolkit comprising an internal DSL for aggregate programming, which has been used in a variety of
applications \cite{casadei2021eaai-procs,casadei2019scc,casadei2019fmec,casadei2018attackresistantaggregate}.
Hence, in addition to formalising the ``computation against a neighbour'' mechanism, \FSCAFI\ also serves as formalisation of the core of \scafi{}.
%
%\FSCAFI\ effectively formalises the core of \scafi{}
% and provides a formal framework for studying its expressiveness.
%
%Hence, \FSCAFI\ also serves as formalisation of the core of \scafi{}.

The remainder of this paper is structured as follows.
\Cref{sec-motivation} provides background information about Aggregate Computing and field computations\revised{, as well as motivation for this work}.
\Cref{sec-calculus-syntax-and-typing} describes syntax, typing, and operational semantics  of \FSCAFI. %; in particular, \Cref{sec-scafi-impl} delves into the implementation details of \scafi, focussing on the reification of the field calculus constructs in the interpreter.
\Cref{sec-properties} provides a detailed account on the properties of \FSCAFI, by comparison with the field calculus.
\Cref{sec-NC-at-work} presents the \scafi{} Scala-based \FSCAFI{} implementation, and the key advantage of the approach over other implementations approaches for field calculi.
\Cref{sec-case} describes an edge computing case study built on \FSCAFI{}/\scafi{}.
Finally, \Cref{sec-related} summarises related works, and
\Cref{sec-conclusion} ends up the paper with a wrap-up and discussion of future works.

\setLanguage{scafi}

\section{Background \revised{and Motivation}} \label{sec-motivation}

In this section, we recap Aggregate Computing (\Cref{background:ac}) and its formal embodiment into a computational model based on a \emph{field} abstraction (\Cref{background:fc}), as a response to the need of engineering the collective adaptive behaviour of large-scale distributed and physically situated systems.
\revised{Then, we describe how Aggregate Computing
 can be used to express the paradigmatic example
 of self-healing channels (\Cref{sec:background:example}),
 to give an early idea of
 how the approach works
 and its main benefits (e.g., composability).
Finally,
 we motivate
 the work developed in this paper in terms of goals, decisions, and technical challenges (\Cref{sec:motivation-detail}).
}

%Then, we introduce the novel concept of ``computation against a neighbour'' (\Cref{concept-intro}) that removes the need of explicitly dealing with neighbouring values in programs.
%As a consequence, it simplifies the embedding of the aggregate paradigm in common general-purpose programming languages, as exemplified by \DoneByFD{the \scafi{} internal Scala DSL.}
%Then, we discuss and motivate the peculiarities of an implementation support for field computations in mainstream development environments, highlighting pragmatic advantages and main technical issues.

\subsection{Aggregate Computing}
\label{background:ac}

Aggregate Computing is an approach to CAS development
that abstracts from the traditional, device-centric viewpoint (where the programs describe what a device should do)
in favour of a ``holistic stance'' where the target of design/development is the whole collection of situated and interacting devices that compose the system, seen as a single programmable, distributed, computational body \cite{BPV-COMPUTER2015}.
The core idea is to provide a single specification \revised{-- which we call an \emph{aggregate program} --
of how the system globally behaves. This is done} without even mentioning the existence of individuals, and hence independently of the shape and size of the set of devices: it is ``under-the-hood'' that the local behaviours to be executed by the individual devices are derived.
This approach may be labelled as a \emph{macro-to-micro} one, to differentiate it from the more classical (micro-to-macro) approach where the components are individually implemented in order to produce the intended system-level behaviour by more or less ``controlled'' emergence.
Hence, from the programming perspective, the key advantage is the ability of \emph{declaratively} expressing the logic of an ensemble, avoiding to directly solve the generally intractable local-to-global mapping problem and clearly \emph{separating the concerns} of overall aggregate behaviour from the device-specific ones.
%
%In other words, Aggregate Computing fosters working at the macro-level --
 %so that the desired global outcome can be described from such favoured position
 %and used to steer self-organising processes --
 %while reducing the macro-micro bridging to a language implementation and platform issue.

The general idea of programming at the macro level dates back to works such as \cite{regiment,bealamorphous} in the context of wireless sensor networks, but it has recently received considerable renewed interest (see, e.g., the
 survey\revised{s~\cite{SpatialIGI2013,DBLP:journals/corr/abs-2201-03473}} and the many related approaches mentioned in \Cref{sec-related}) with the emergence of the IoT.
In this research context, the key contribution of Aggregate Computing lies in supporting \emph{abstract and resilient composition of collective adaptive behaviours}.
Since an aggregate program is expressed in a way that is independent of the actual number and dislocation of devices -- and the resulting computation is a repetitive, gossip-like process of data distribution and computation -- adaptation to changes and faults is inherent, and can be controlled by relying on specific programming patterns \cite{VBDP-SASO2015}, often biologically inspired.
Then, at this level, compositionality refers to the capability of combining behaviours in such a way that the result of the combination,
and its properties, are in several cases fully predictable
 \cite{BVPD-TAAS2017,viroli:selfstabilisation,vcp-ubicomp16}.
In particular, \emph{self-stabilisation} is retained by a large class of aggregate programs: it makes an aggregate system able to tolerate circumscribed failures and react to disruptive changes in the environment in order to re-establish, after some transient adjusting phase, the proper operation.

\subsection{Computing with Fields}
\label{background:fc}

\begin{figure}
\centering
\includegraphics[width=\textwidth]{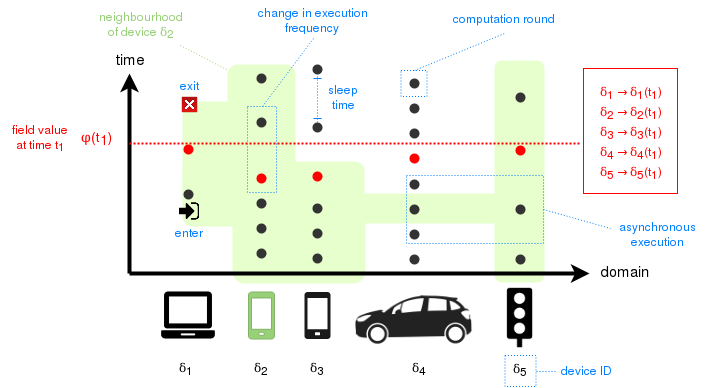}
\caption{Dynamic view of a computational field: a field maps devices to values over time; devices can enter and exit a given \emph{space-time domain} freely (openness); devices fire asynchronously (possibly at varying rates), execute a \emph{computational round} upon fires producing the local value of the field, and  then ``sleep'' in between fires. At a given time, a snapshot of the most-recent value at each device gives a field value (this view of an external, global observer is conceptually useful but global fields are not to be computationally manipulated in a direct fashion).
A device can only directly communicate with a subset of other devices which is known as its \emph{neighbourhood} (e.g., for $\delta_2$, represented by the green area), which usually (but not necessarily) reflects spatial proximity in situated systems.}
\label{fig:compfield}
\end{figure}

The computational framework incarnating the ideas and goals of Aggregate Computing
is based on the notion of \emph{computational field} (or simply field) \cite{Viroli-et-al:JLAMP-2019}.
Intuitively, as shown in \Cref{fig:compfield}, a computational field is an abstraction that represents a distributed data structure mapping points in space-time (the field \emph{domain}) to values produced (by some device) as computation result at those points.
%
%This key interpretation is what
% supports an effective bridging of the physical and the digital worlds,
% which makes the Aggregate Computing framework particularly suitable in contexts like the IoT.
%
Programming field computations hence encourages to reason in terms of evolving global structures that continuously express both the result of computation and the relationships between individuals.

Field computations generally comprise mechanisms for:
\emph{(i)} lifting standard values and computations, which are ``local'', to work as whole fields by flatly applying them to each space-time point of the domain;
\emph{(ii)} expressing the dynamics of fields, namely, how fields evolve over time;
\emph{(iii)} exchanging values across \emph{neighbours}, such that information can flow beyond localities and interaction can unfold to realise complex patterns such as, e.g., outward propagation of local events; and
\emph{(iv)} branching fields into spatio-temporally-isolated sub-fields, in order to organise a computation into multiple parts co-existing in different space-time regions.
%; and \emph{(v)} convenient features of abstraction and composition, to structure complex computations out of simpler elements.

Inspired by %traditional
minimal core
calculi such as $\lambda$-calculus~\cite{LambdaCalculus} %, $\pi$-calculus~\cite{},
and FJ~\cite{FJ}, the above mechanisms
 have been formalised by the \emph{field calculus}  \cite{audrito2019tocl}.
The field calculus relies on the basic functional programming model abstractions (i.e., first-class functions and functional composition) to support composability of distributed behaviour.
In a nutshell,
 expressions denote whole fields; a specific construct, called \texttt{rep}, deals with field dynamics;
another construct, called \texttt{nbr}, declaratively expresses interaction with neighbours; finally, higher-order function call captures behaviour activation as well as \emph{branching} (as a ``field of functions'' can be invoked through the \emph{call} operator, space-time regions are naturally defined by where/when the same function is actually called).

As suggested in \Cref{fig:compfield}, field computations assume a set of networked devices that
  run at asynchronous or partially synchronous \emph{rounds of execution}.
 It is such an iterative execution that, combined with repetitive local sensing
  and device-to-device interaction,
  enables intrinsic adaptation to environmental perturbations
  and progressive steering of the system towards the desired states---much like in \emph{swarm} and \emph{computational collective intelligence}~\cite{Bonabeau:1999,szuba2001computational-collective-intelligence}.
 In each computation round, a device: \emph{(i) determines its local context},
  by retrieving any previous state and collecting sensor values as well as messages potentially received from its neighbours;
  \emph{(ii) locally executes a field computation expression}, in a contextual fashion and
  according to the local semantics, producing an output and
  an \emph{export} value which provides information for collective coordination;
  \emph{(iii) shares the export with its neighbours}, through a conceptual broadcast;
  and finally \emph{(iv) feeds actuators}, with the produced output.
In other words, this execution model implements a sort of distributed, continuous closed loop
 between field computation and the environment, where each device computes values
 that are contextually related with those computed by neighbours.

%\revisedMinor
\revised{
\subsection{Example: self-healing channel}\label{sec:background:example}
A paradigmatic example of what can be done using the Aggregate Computing approach is the \emph{self-healing channel} (cf. \Cref{fig:example-channel}), i.e., the construction of a Boolean field that is true only in the set of devices
 that connect two endpoints of the system
 through a minimum path,
 and that automatically adapts
 to changes of the endpoints and the system connectivity network.
Such a collective adaptive behaviour can be expressed through an aggregate program as follows:
}
\begin{lstlisting}[language={hfc},basicstyle=\small\ttfamily]
// Definition
def channel(endpoint1, endpoint2) {
  gradient(endpoint1) + gradient(endpoint2) <= distanceBetween(endpoint1,endpoint2)
}
// Use
channel(isSource(), isDestination(), 10)
\end{lstlisting}
\revised{
where \lstinline|isSource| and \lstinline|isDestination| are assumed to be sensors available in every device.
The self-healing channel algorithm
 can be implemented through a function \lstinline|channel|
 that reuses existing building blocks of collective behaviour,
 such as \lstinline|gradient(s)| (computing the self-healing field of minimum distances from the source \lstinline|s|)
 and \lstinline|distanceBetween(a,b)|
 (spreading in the network the distance between \lstinline|a| and \lstinline|b|, which can be built over \lstinline|gradient|s).
\revisedMinor{It works by exploiting the triangle inequality principle: given a device at position $p$, the sum of the distance $d_1$ from \lstinline|endpoint1| to $p$ (computed by the first call to \lstinline|gradient|) with the distance $d_2$ from \lstinline|endpoint2| to $p$ (computed by the second call to \lstinline|gradient|) must be greater than (or equal to, if we consider degenerate triangles) the distance $d_{1\leftrightarrow{}2}$ from \lstinline|endpoint1| to \lstinline|endpoint2| (which is computed by the call to \lstinline|distanceBetween|). So, the devices belonging to the channel are those for which $d_1+d_2 \leq d_{1\leftrightarrow{}2}$, i.e., the devices that lie in the path from \lstinline|endpoint1| to \lstinline|endpoint2|.
}
\revisedMinor{Then, the execution is decentralised, progressive, and open-ended. Given the round-based execution model mentioned previously, the expression corresponding to the \lstinline|channel|
  function application is to be repeatedly evaluated
  by every device,
  so that up-to-date local information (perceived by sensors or obtained from recent messages from neighbours)
  is properly incorporated to progressively adjust the collective result.
So, local changes (e.g., a device un/becomes a source, as perceived by \lstinline|isSource|, or the distance to neighbours changes as a result of mobility, which is perceived internally in \lstinline|gradient|s)
 will affect the local evaluation of the involved gradients,
 and such effects will propagate to neighbours,
 eventually affecting the rest of the network
 and the global channel result.
Furthermore, it can be shown that this computation is self-stabilising~\cite{viroli:selfstabilisation},
 which means that it would eventually converge to the correct global result once the environmental inputs stop changing.
}
It turns out that \lstinline|gradient|
 can be implemented in many ways~\cite{ACDV-SASO2017},
 but a simple implementation leverages a combination of \lstinline|rep| and \lstinline|nbr|---this will be described in detail and incrementally in \Cref{sec-calculus-syntax-and-typing} (cf. \Cref{ex:time-nbr-gradient,exa:typing}).
\revisedMinor{
Notice that the use of \lstinline|nbr|
 means that an evaluation of \lstinline|gradient|
 will cause pertinent information to be exchanged with neighbours.
}
The point of this example is not to fully understand the technical details involved, which will be clarified in \Cref{sec-calculus-syntax-and-typing},
 but to give the intuition of the compositionality of Aggregate Computing, where functions expressing collective adaptive behaviour can be combined to express more complex collective adaptive behaviour.

\begin{figure}
\centering
\includegraphics[width=0.45\textwidth]{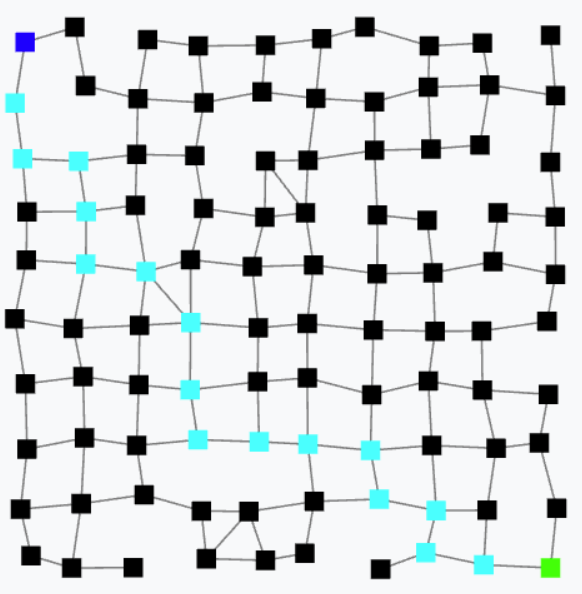}\hfill\includegraphics[width=0.45\textwidth]{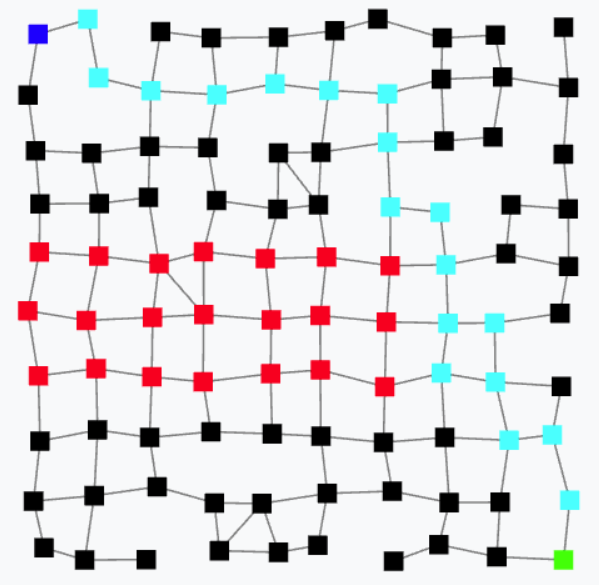}
\caption{\revised{A pictorial view of the self-healing channel. The devices where the \lstinline|channel| yields \lstinline|true| are shown as cyan dots, whereas the blue and green dots denote the two endpoints of the channel, and red dots denote non-working devices (e.g., as a result of a blackout). Through the two snapshots, we see that a channel eventually self-adjust itself reacting to the perturbation induced by the blackout.}}
\label{fig:example-channel}
\end{figure}

\subsection{Motivation}\label{sec:motivation-detail}

Given the background on Aggregate Computing and field computations,
 we are now able to provide the motivation for this work.
This is done in three parts:
 (i) we first motivate the usefulness of internal or embedded DSLs for field computations;
 then,
 (ii) we motivate the choice of Scala
 as our target host language;
 finally,
 (iii) we explain technical issues that have driven the design of NC
 in the attempt to
 embed field computations
 with a small set of requirements at the host side.

\subsubsection{Embedded DSLs for Field Computations}

A key issue for the practical applicability of the aggregate programming model enabled by field calculi is to find ways to smoothly integrate with the standard development practice---languages, processes and tools.
We observe that an aggregate program is generally a small part of an overall application or system, addressing issues like advanced coordination
and adaptivity, and interacting with other subsystems to gather inputs
or provide system services~\cite{cv2017actorcas}.

Most former aggregate programming languages such as  Proto~\cite{proto06a} and Protelis~\cite{Protelis15}
 are \emph{standalone} DSLs.
Also called \emph{external} DSLs~\cite{Riti2018external-dsls},
 these languages have their own syntax and semantics,
 and are either interpreted
 or compiled to an intermediate language of some target platform (e.g., Java and the JVM, as in Protelis).
Despite the flexibility and
 the support provided by modern language workbenches
 (cf.\ Xtext~\cite{XtextBook2016}),
 effort in development and maintenance of an external DSL is typically large,
 high expertise is required,
 and, additionally, integration with other development languages and tools
 tends to be cumbersome.

A prominent, modern approach to address this problem is to devise an \emph{internal DSL}~\cite{voelter2013dsl} (also called \emph{embedded DSL}~\cite{DBLP:journals/cacm/Ghosh11}) that provides mechanisms to support the new features on top of an adequate and well-known host programming language---through a very expressive API.
This provides easy reuse of functionality and tools of the host language, for a smooth adoption.
Given a source DSL and a target language, \emph{DSL embedding} is the problem of implementing the DSL internally to the target language.

Field calculi deal with fields, i.e., collections of values, which are to be treated as first-class entities.
However, common GPLs are subject to functional and typing constraints when dealing with ``collections'' so that point-wise manipulation typically requires widespread, explicit mapping operations which make the programs verbose.
So far, this approach has been followed by the aggregate programming language FCPP \cite{FCPP-ACSOS-2020}, a C++ internal DSL. The difficulties of implementing \HFC{} as an internal DSL are partially addressed in FCPP by operator overloading, template metaprogramming techniques and macros; however, several low-level details are still left to the responsibility of the programmer.

%
%Instead, in \FSCAFI{}, fields are not directly manipulated, but rather implicitly constructed in the computation, while the programmer works against purely local values.
%%
%In other words, neighbouring fields, being substituted
% with the notion of ``computation against a neighbour''
% which is performed while \emph{folding} over neighbourhoods, simplify the embedding of field computations in mainstream languages like Scala.
%%
%Fold operations, which are built-ins in previous field calculus,
% become operations for inspecting the context of neighbours;
% in other words, \texttt{nbr} expressions passed to \texttt{fold}s
% are evaluated against neighbours
% by retrieving the corresponding values produced by the computations at those very neighbours.
%

\subsubsection{Scala as Host Language}

%
%Of course,
Both the syntax and the semantics of an embedded DSL are limited by the constraints exerted by its host language.
%possibly up to the point that the DSL might be unable to effectively provide all intended benefits;
%Therefore, it is fundamental to take into account the requirements as well as the desiderata for the DSL, which in our case include:
Therefore, the choice of the host language is led by requirements and desiderata for the DSL. In our case these include:
\begin{itemize}
\item \emph{pragmatism} (supporting easy reuse of existing programming structures and mechanisms),
\item \emph{reliability} (intercepting errors concerning syntax and types at compile-time), and
\item \emph{expressivity} (offering a concise and eloquent syntax, minimising the accidental language complexity induced by the environment);
\end{itemize}
which
%All these considerations
led us to select the Scala programming language.
 % as a reasonable ``target'' host language.
%
Indeed, Scala is a premier language
 for implementing embedded DSLs~\cite{DBLP:conf/icfem/ArthoHKY15,dubochet2011embedded-dsls,DBLP:conf/oopsla/ChafiDMRSHOO10,Calus:2017:FIM:3141858.3141861}.
It runs on the Java Virtual Machine (JVM) and thus enables straightforward interaction and reuse of libraries in the Java ecosystem.
It has a flexible syntax, powerful type system, and convenient features (such as by-name parameters, type members, traits etc.) for building type-safe, expressive APIs.
Additionally, its smooth integration of the object-oriented and functional paradigms~\cite{odersky2014unifying}, as well as its ecosystem of libraries for distributed computing, make it a convenient tool for in-large software development of frameworks.
Functional programming support is especially important
 for implementing field calculi;
 however, compared with languages such as Haskell,
 it is more pragmatic and versatile.
On the other hand, compared with very practical languages such as Kotlin,
 it is also (like Haskell) oriented towards
 programming language research
 and a philosophy of scalable abstractions~\cite{odersky2005scalacomponents}.

\subsubsection{Technical Issues in Embedding Field Computations in a Host Language}

A first key problem in embedding field computations in a host language lies in the potential mismatch between the representation of local values (values of local, standard types) and the representation of field expressions.
For the former, one would seek for the natural representation of the host language,  e.g., literal \texttt{1} representing a local value of type \texttt{Int}.
For the latter, one has to combine the natural representation with the additional constructs manipulating fields; namely,  as in field calculus, literal \texttt{1} (or some variation) should represent a computational field equal to $1$ in all space-time points of the local device's neighbourhood: this should in principle be given a type \texttt{Field[Int]}, inheriting all the operations one would use over the local counterpart \texttt{Int} (\texttt{+}, \texttt{-}, and so on), since in field calculus \texttt{1+2} is the sum of two fields giving a field of $3$ in all space-time points.
As a consequence, standard types such as numbers, Booleans, characters, objects, and the corresponding operators, need to be coherently lifted in order to work under the field interpretation.
That is, given any type \texttt{T}, the type \texttt{Field[T]} should support all the operations provided by \texttt{T} but lifted to the field context:
for instance, expression $\texttt{e}_1+\texttt{e}_2$ where $\texttt{e}_1$ and $\texttt{e}_2$ have type \texttt{Field[Int]} should naturally give an element of type \texttt{Field[Int]}.
Ideally, the type system should continue to do its job with field types, and field expressions should be written in a notation which is analogously simple as the local counterpart.

% PROBLEM 2: evaluation
A second key problem stems from the declarativeness and compositionality of field computations: the host language should provide the means for defining blocks of code
and for controlling \emph{when} and \emph{how} these are executed, mainly by deferring their evaluation until a later time and properly contextualising them to the local information available in each device.
These mechanisms should also be lightweight so as to keep the impact on the user-side of the DSL as little as possible.
For instance, using Java as host language would require many expression to be wrapped in a 0-ary lambda, which would clutter code.

The third, key technical difficulty is to properly deal with the interaction model of field computations, which is neighbour-driven.
Field computations are equipped with a (logical or physical) notion of \emph{neighbourhood} that basically expresses the boundary of direct communication, i.e., what devices can be reached by a given device through a communication act occurring in a certain position of computation.
The observation of values that a device reads from its neighbours is typically modelled by reification into a \emph{neighbouring value} (a map from neighbours to values), which can be manipulated functionally until being collapsed back to a local value by means of some \emph{folding} operator---e.g., computing the minimum value.
This requires one to explicitly differentiate, syntactically and semantically, the two classes of types, neighbouring types and local types, raising the issue of how to lift standard local operators to neighbour types.
Thanks to the features of the Scala programming language, and as detailed in this paper,
%\scafi{}~\cite{casadei2020isola-fscafi} handles this problem as follows:
the problem could be handled as follows:
\begin{itemize}
 \item the same type, say \texttt{Int}, is used both for local types and neighbouring types;
 \item the notion of computation over neighbouring values is semantically turned into a notion of computation ``against'' a neighbour (namely, a computation whose outcome depends on recent result of computation in that neighbour), hence there is no longer need of two kinds of type;
 \item folding operations are the triggers for a universal quantification process, iterating computations against all pertinent neighbours.
\end{itemize}
Such changes may impact theory and practice of field computations.
On a practical level,
 they enable expressive and smooth integration with Scala programming.
On a theoretical side, however,
 we need to investigate
 how these changes affect the expressiveness
 with respect to the field calculus~\cite{audrito2019tocl}.
%
%In order to formally and precisely
% characterise the \scafi{} implementation,
% we have extracted its core
% into \FeatherweightSCAFI{},
% which is the subject of this work.
}

\setLanguage{hfc}

\section{The \FeatherweightSCAFI{}%: a Minimal Core Calculus for SCAFI
}\label{sec-calculus-syntax-and-typing}

\setLanguage{fscafi}

%We now move to the semantic details of the proposed model, which amounts to how an aggregate specification turns into computational rounds for devices and messages to be exchanged among devices.
%
This section presents  the \emph{\FeatherweightSCAFI  (\FSCAFI)}, a minimal core calculus that models Aggregate Computing via \emph{computations against a neighbour}, instead of the {neighbouring values} used in the \emph{higher-order field calculus (\HFC{})}~\cite{audrito2019tocl}---a thoughtful comparison between \FSCAFI\ and \HFC{} is presented in Section~\ref{sec-properties}.

Devices undergo computation in rounds. When a round starts, the device gathers information about messages received from neighbours (only the last message from
 each neighbour is actually considered), performs an evaluation of the program, and finally emits a message to all neighbours with information about the outcome of
 computation. The scheduling policy of such rounds is abstracted in this formalisation, so any (synchronous or asynchronous) scheduling policy is admissible, even though in many aggregate programming settings it is typically assumed fair\footnote{An infinite scheduling sequence is \emph{fair} if every device appears infinitely many times in the sequence.} and non-synchronous.

\Cref{sec-calculus-syntax} presents the syntax of \FSCAFI;
\Cref{sec-typing} presents its type system;
\Cref{sec-calculus-device-semantics} presents an operational semantics for the computation that takes place on individual devices;
and  \Cref{sec-calculus-network-semantics} presents an operational semantics for the evolution of whole networks.

\subsection{Syntax}\label{sec-calculus-syntax}

%\FSCAFI\ is a core subset of \scafi{}, strictly retaining its syntax.
%
%However, to keep it general and not too tight with the \scala{} programming language and API,
The syntax of \FSCAFI\ is given in Figure~\ref{fig:source:syntax}.
Following~\cite{FJ}, the overbar notation denotes metavariables over sequences and the empty sequence is denoted  by $\emptyseq$;
e.g., for expressions, we let $\overline\e$ range over sequences of expressions, written $\e_1,\,\e_2,\,\ldots\,\e_n$ $(n\ge 0)$.
\FSCAFI\ focusses on aggregate programming constructs: hence,
%\begin{itemize}
%\item
%it neglects the many orthogonal \scala features that one can use (object-oriented constructs, and the like), and
%\item
it is  parametric in the set of built-in data constructors and functions. In the examples, we consider the set of built-in data constructors and functions listed (with their types) in Section~\ref{sec-typing}.
\revised{In following explanations, we use symbol $\deviceId$ (and variants as $\deviceId', \deviceId_1, \ldots$) to identify devices.
For convenience, we also summarise meta-variables and symbols commonly used throughout the paper in \Cref{table:metasymbols}.
}

\begin{table}
\revised{
\begin{tabular}{|l|l|l|l|}
\hline
\textbf{} & \textbf{Description} &
\textbf{} & \textbf{Description}
\\\hline
$\PROGRAM$ & Program
&
$\FUNCTION$ & Function declaration
\\
$\e$ & Expression
&
$\xname$ & Variable
\\
$\anyvalue$ & Value
&
$\lvalue$ & Local value
\\
$\fvalue$ & Field value
&
$\funvalue$ & Function value
\\
$\fname$ & Defined function
&
$\bname$ & Built-in function
\\
$\dc$ & Data constructor
&
$\deviceId$ & Device identifier
\\
$\type$ & Type
&
$\builtintype$ & Built-in type
\\
$\tvar$ & Type variable
&
$\typescheme$ & Type scheme
\\
$\TStypEnv$ & Type-scheme environment
&
$\OStypEnv$ & Built-in type-scheme environment
\\
$\TtypEnv$ & Type assumptions
&
$\vtree$ & Value-tree
\\
$\Trees$ & Value-tree environment
&
$\senstate$ & (Local) Sensor state
\\
$\Field$ & Value-tree field
&
$\Activation$ & Activation predicate
\\
$\Stat$ & Status
&
$\Sens$ & (Global) Sensor state
\\
$\Envi$ & Environment
&
$\Cfg$ & Network configuration
\\
$\act$ & Action label
&&
\\
\hline
\end{tabular}
}
\caption{\revised{Summary of meta-variables and symbols used in this paper.}}
\label{table:metasymbols}
\end{table}

%in Figure~\ref{fig:typeof-built-in}.
%that are different from those provided by the \scala{} API.
%
%For instance, it has functions \texttt{Pair} and \texttt{fst/snd} to create and access pairs, \texttt{Cons/Null} and \texttt{head/tail} for lists, and so on.
%\end{itemize}
%\todo[inline]{THIS TEXT MAY BE USED IN THE CASE STUDY SECTION: Note that---apart from specific \scala{} syntax---the  examples of \scafi{} code given in Section~\ref{s:examples} are actually  examples of \FSCAFI code. In particular, in order to turn functions \texttt{foldhoodPlus},  \texttt{gradient} and \texttt{branch} into \FSCAFI{} functions it is enough to drop type annotations and the default value for the parameter \texttt{metric} of \texttt{gradient}. To distinguish with respect to actual \scafi{} code, in writing \FSCAFI\ code we do not provide syntax colouring.}

\begin{figure}[!t]
\centering
\centerline{\framebox[\textwidth]{$
\begin{array}{l@{\hspace{1mm}}c@{\hspace{1mm}}l@{\hspace{20mm}}r}
        \PROGRAM & \BNFcce & \overline{\FUNCTION}  \; \e
                                                                                                                                                                                                        &   {\footnotesize \mbox{program}}                                                                                                                                                                                                         \\[3pt]
        \FUNCTION & \BNFcce &  \defK \; \fname (\overline{\xname}) \; \eqSymK{\e}                                                                                                                                                                                                        &   {\footnotesize \mbox{function declaration}}
        \\[3pt]
        \e & \BNFcce & \xname \, \BNFmid \, \anyvalue \, \BNFmid \, (\overline{\xname}) \; \toSymK{\e} \, \BNFmid \, \e(\overline{\e}) \, \BNFmid \,
         %\repK(\e)\{ \e \}
         \repK(\e)\{ \e \}
          \, \BNFmid \, \nbrK\{\e\}
 &   {\footnotesize \mbox{expression}} \\[1pt]
     & \BNFmid & \foldK(\e, \e, \e)
        \\[3pt]
        \anyvalue & \BNFcce &  {\dc(\overline{\anyvalue})} \; \BNFmid \; \funvalue                                                                                                                                                                                                        &   {\footnotesize\mbox{value}}
        \\[3pt]
        \funvalue & \BNFcce & \bname \; \BNFmid \; \fname \; \BNFmid \; (\overline{\xname}) \; \toSymK{\e}                                                                                                                                                                                                        &   {\footnotesize\mbox{function value}} \\
\end{array}
$}
}
\caption{Syntax of \FSCAFI{}.}
\label{fig:source:syntax}
\end{figure}

A program  $\PROGRAM$ consists of a sequence $\overline{\FUNCTION}$ of function declarations and a main expression $\e$.
A function declaration  $\FUNCTION$ defines a (possibly recursive) function; it consists of a name $\fname$, $n\ge 0$ variable names $\overline{\xname}$ representing the formal parameters, and an expression $\e$ representing the body of the function.

Expressions $\e$ are the main entities of the calculus, modelling a whole field computation. An expression can be:
a variable $\xname$, used as function formal parameter;
a value $\anyvalue$;
an anonymous function $(\overline{\xname}) \; \toSymK{\e}$ (where $\overline{\xname}$ are the formal parameters and $\e$ is the body),
%and $\name$ is a \emph{tag});
a function call $\e(\overline{\e})$;
a $\repK$-expression $\repK(\e)\{\e\}$, modelling time evolution;
an $\nbrK$-expression $\nbrK\{\e\}$, modelling  neighbourhood interaction;
or a $\foldK$-expression $\foldK(\e,\e,\e)$ which combines values obtained from neighbours.

The set of the \emph{free variables} of an expression $\e$, denoted by $\FV{\e}$, is defined as usual (the only binding construct is
 $(\overline{\xname}) \; \toSymK{\e}$). An expression $\e$ is \emph{closed}  if $\FV{\e}=\emptyseq$. The main expression of a program must be closed, \revised{whereas anonymous functions $(\overline{\xname}) \; \toSymK{\e}$ may be \emph{open} (i.e., not closed).}

A value can be either a \emph{data value} $\dc(\overline{\anyvalue})$ or a \emph{functional value} $\funvalue$.
A data value consists of a \emph{data constructor} $\dc$ of some arity $m\ge 0$ applied to a sequence of $m$ data values
$\overline{\anyvalue}=\anyvalue_1,...,\anyvalue_m$. For readability, the parenthesis may be omitted for arity $m=0$, writing $\dc()$ as $\dc$.
According to the data constructors listed in Figure~\ref{fig:typeof-built-in}, examples of data values are:
the Booleans $\truevalue$ and $\falsevalue$, numbers, pairs (like $\PairK(\truevalue,\PairK(5,7))$) and lists (like $\ConsK(3,\ConsK(4,\NullK))$).

Functional values $\funvalue$ comprise:
\begin{itemize}
        \item
        declared function names $\fname$;
        \item
        \revised{\emph{closed}} anonymous function expressions $(\overline{\xname}) \; \toSymK{\e}$ (i.e., such that $\FV{\e} \subseteq \{\overline\xname\}$);
        \item
        built-in functions $\bname$, which can in turn be:
        \begin{itemize}
                \item {\emph{pure operators} $\oname$, such as functions for building and decomposing pairs (\texttt{pair}, \texttt{fst}, \texttt{snd})
      and lists (\texttt{cons}, \texttt{head}, \texttt{tail}),
 %the \texttt{mux} function,
 the equality function ($=$), mathematical and logical functions (\texttt{+}, \texttt{\&\&}, ...), and so on;
                }
                \item \emph{sensors} $\svalue$, which depend on the current environmental conditions of the computing device $\deviceId$, such as a \texttt{temperature} sensor; %---modelling construct \texttt{sense} in \scafi{};
                \item \emph{relational sensors} $\snvalue$, %modelling construct \texttt{nbrvar} in \scafi{},
                which in addition depend also on a specific neighbour device $\deviceId'$ (e.g., $\texttt{nbrRange}$, which measures the distance with a neighbour device).
        \end{itemize}
{In case $\e$ is a binary built-in function $\bname$, we  write
$\e_1 ~\bname~ \e_2$ for the function call $\bname(\e_1,\e_2)$ whenever convenient for readability of the whole expression in which it is contained.}
\end{itemize}

The key constructs of the calculus are:
\begin{itemize}
        \item
        {\em Function  call:} $\e(\e_1,\ldots,\e_n)$ is the main construct of the language.
        The function call evaluates to the result of applying the function value $\funvalue$ produced by the evaluation of $\e$  to the value of the parameters $\e_1, \ldots, \e_n$
      \emph{relatively to the aligned neighbours}, that is, relatively to the neighbours that  in their last execution round
    have evaluated $\e$ to a function value
    with the same name of  $\funvalue$. Hence, calling an expression $\e$ of function type acts as a branch, where each function $\funvalue$ obtainable from $\e$ is applied only on the subspace of devices which evaluated $\e$ to a (syntactically) identical function $\funvalue$.
        \item
        {\em Time evolution:} $\repK (\e_1)\{\e_2\}$ is a %``repeat''
         construct for dynamically changing fields through the ``\texttt{rep}eated'' application of the functional expression $\e_2$.
         % by device $\deviceId$
          %repeatedly in asynchronous rounds.
          %in each computation round.
          At the first computation round
          (or, more precisely, when no previous state is available---e.g., initially or at re-entrance after state was cleared out due to branching), $\e_2$ is applied to $\e_1$, then at each other step it is applied to the value obtained at the previous step.
        For instance, $\repK(0)\{(\xname)\; \toSymK{\xname + 1}\}$ counts how many rounds each device has computed (from the beginning, or more generally, since that piece of state was missing).
        \item
        {\em Neighbourhood interaction:} $\foldK(\e_1, \e_2, \e_3)$ and $\nbrK \{\e\}$ model device-to-device interaction, and are at the core of the ``computation against a neighbour'' mechanism.
        The $\foldK$ construct evaluates expression $\e_3$ against every aligned neighbour (excluding the device itself),
		then aggregates the values collected through binary operator $\e_2$ together with the initial value $\e_1$.
		\revised{The interaction pattern is thus a common ``map (expression $\e_3$ to each neighbours' data) and reduce (folding through $\e_2$)''. Following standard practice, data aggregation is performed through binary folding, which smoothly covers most application scenarios.\footnote{\revised{In case where a non-binary aggregation is needed, it is still possible to fold neighbour's data into a list through a binary join operator, and then process the resulting list with the needed aggregation.}}}
        Used inside $\e_3$, the $\nbrK$ construct tags a sub-expression $\e$ signalling that, when evaluated against a neighbour $\deviceId$, it should not be actually evaluated as usual, but should give as result the one obtained by evaluating $\e$ in $\deviceId$. Put in other words, when evaluated against $\deviceId$, $\nbrK \{\e\}$ means ``observing'' the recent resulting value of $\e$ in $\deviceId$---and also let later $\deviceId$ observe the local valued of $\e$, conversely.
		\revised{Such behaviour requires to effectively broadcast\footnote{\label{footnote:broadcast}\revised{Even though $\texttt{nbr}$ requires to share values with neighbours, we do not assume the underlying implementation to be necessarily based on network broadcasts. Indeed, messages from different $\texttt{nbr}$s are usually packed together, and may be dispatched to neighbours even through different point-to-point messages.}} the values evaluated for $\e$, in order to make them accessible to neighbour devices.} Subexpressions of $\e_3$ not containing $\nbrK$ are evaluated as usual instead, i.e., with \emph{no} gathering of information from neighbours.

        As an example, consider the expression \[\foldK(2, \texttt{+}, \texttt{min}(\nbrK\{\texttt{temperature}()\}, \texttt{temperature}()) )\]
evaluated in device $\deviceId_1$ (in which $\texttt{temperature}() = 10$) with neighbours $\deviceId_2$ and $\deviceId_3$
(in which $\texttt{temperature}()$ gave $15$ and $5$ in their last evaluation round, orderly).
The result of the expression is then computed adding $2$, $\texttt{min}(15,10)$ and $\texttt{min}(5,10)$ for a final value of $17$.
\end{itemize}

\subsubsection*{Additional syntactic sugar} To facilitate specification of examples, we  write $\ifK (\e_1) \{ \e_2 \} \{ \e_3 \}$ as a shorthand for $\muxK(\e_1, () ~\toSymK{\e_2}, () ~\toSymK{\e_3})()$, where the multiplexer function $\muxK$ selects between its second and third argument based on the value of the first, in formulas: $\muxK(b, \anyvalue_\top, \anyvalue_\bot) = \anyvalue_b$.
Additionally, in order to improve readability, we  also sometimes write $\letK \; \xname = \e_1 \; \inK \; \e_2$  as syntactic sugar for $((\xname) \toSymK{\e_2}) \, (\e_1)$.

%---see the discussion at the end of \Cref{s:examples}.

%\begin{example}[Neighbourhood interaction and branching]\label{exa:NIandB-informal} %ELIMINATO
%
%\begin{lstlisting}[]
%branch(isObstacle){ Double.PositiveInfinity }{ gradient(isSource) }
%\end{lstlisting}
%
%\end{example}

%It is worth observing that functional values allow code to be dynamically injected, moved, and executed in network (sub)domains. Namely,
%\emph{(i)} functions can take functions as arguments and return a function as result (higher-order functions);
%\emph{(ii)} (anonymous) functions can be created ``on the fly'';
%\emph{(iii)} functions can be moved between devices (via the $\nbrK$ and $\foldK$ constructs); and
%\emph{(iv)} the function one executes can change over time (via $\repK$ construct).

\subsection{Typing}\label{sec-typing}

We now present a type system for \FSCAFI. Since the type system is a customisation of the Hindley-Milner type system~\cite{Damas-Milner:POPL-1982},
there is an algorithm (not presented here) that, given an expression $\e$ and type assumptions for its free variables,
either fails (if the expression cannot be typed under the given type assumptions) or returns its \emph{principal type}, i.e.,
a type such that all the types that can be assigned to $\e$ by the type inference rules can be obtained from the principal type by substituting
type variables with types. The syntax of type and type schemes is presented in Figure~\ref{fig:SurfaceTyping} (top),
where $\builtintype$ ranges over the built-in types provided by the host language (e.g., $\ntype$, $\btype$, $\pairltypeOf{\type_1}{\type_2}$ for any $\type_1$ and $\type_2$, $\listltypeOf{\type}$ for any $\type$). We do not expand on the structure of built-in types and type operators, as it is not needed to define \FSCAFI\ typing.
\revised{So,
 a type $\type$
 is either a built-in type $\builtintype$,
 a function type $(\overline{\type}) \to \type$,
 or a type variable $\tvar$.
A type scheme has syntax $\forall\overline{\tvar}.\type$,
 and denotes a type quantifying over type variables.
}
The set of type variables occurring in a type $\type$ is denoted by $\FTV{\type}$.
\revised{Type variables $\overline\tvar$ occurring in a type $\type$ can be substituted with concrete types $\overline\type$ through the common bracket-notation $\applySubstitution{\type}{\substitution{\overline\tvar}{\overline\type}}$.}

\begin{figure}[!t]{
 \framebox[1\textwidth]{
 $\begin{array}{l}
%%%  TYPES
\textbf{Types:}\\
\begin{array}{rcl@{\hspace{7.5cm}}r}
\type & \BNFcce &  \tvar  \; \BNFmid \;  \builtintype \; \BNFmid \;  (\overline\type) \rightarrow \type        &   {\footnotesize \mbox{type}} \\
\typescheme & \BNFcce &  \forall\overline{\tvar}.\type      &   {\footnotesize \mbox{type scheme}} \\
%
%\localenv & \BNFcce & \envS{\senstate}{\overline\vtree} &   {\footnotesize \mbox{local environment}} \\
\end{array}\\
\hline\\[-8pt]
%%%  TYPE RULES
\textbf{Expression typing:}
  \hfill
  \boxed{\expTypJud{\TStypEnv}{\TtypEnv}{\e}{\type}}
\vspace{0.1cm}
  \\
\begin{array}{c}
\nullsurfaceTyping{T-VAR}{
\expTypJud{\TStypEnv}{\TtypEnv,\xname:\type}{\xname}{\type}
}
\qquad
{
\surfaceTyping{T-DAT}{ \quad
\applySubstitution{\type'}{\substitution{\overline\tvar}{\overline\type''}} = (\overline{\type})\rightarrow\type
\qquad
\expTypJud{\TStypEnv}{\TtypEnv}{\overline{\anyvalue}}{\overline{\type}}
}{
\expTypJud{\TStypEnv,\dc: \forall \overline{\tvar}. \type'}{\TtypEnv}{\dc(\overline{\anyvalue})}{\type} }
}
\skiptransition
\surfaceTyping{T-A-FUN}{ \quad
\expTypJud{\TStypEnv}{\;\TtypEnv,\,\overline{\xname}:\overline{\type}}{\e}{\type}
}{ \expTypJud{\TStypEnv}{\TtypEnv}{ (\overline{\xname}) \toSymK{\e}}{(\overline{\type})\rightarrow\type} }
\qquad
{
\surfaceTyping{T-N-FUN}{ \quad
\mbox{$\funvalue$ is a (built-in or declared) function}}{
\expTypJud{\TStypEnv,\funvalue: \forall \overline{\tvar}. \type}{\TtypEnv}{\funvalue}{\applySubstitution{\type}{\substitution{\overline\tvar}{\overline\type}}} }
}
\skiptransition
\surfaceTyping{T-APP}{ \quad
\expTypJud{\TStypEnv}{\TtypEnv}{\e}{(\overline{\type})\rightarrow\type} \qquad
\expTypJud{\TStypEnv}{\TtypEnv}{\overline{\e}}{\overline{\type}} }{
\expTypJud{\TStypEnv}{\TtypEnv}{\e(\overline{\e})}{\type} }
\skiptransition
%
%\surfaceTyping{T-BRANCH}{ \qquad
%\expTypJud{\TStypEnv}{\TtypEnv}{\e_1}{\btype}
%\quad \expTypJud{\TStypEnv}{\TtypEnv}{\e_2}{\type}
%\quad \expTypJud{\TStypEnv}{\TtypEnv}{\e_3}{\type} }{
%\expTypJud{\TStypEnv}{\TtypEnv}{\ifK(\e_1) \{\e_2\} \{\e_3\} }{\type} }
%%
%\skiptransition
%%
\surfaceTyping{T-REP}{ \qquad
\expTypJud{\TStypEnv}{\TtypEnv}{\e_1}{\type}
\qquad \expTypJud{\TStypEnv}{\TtypEnv}{\e_2}{(\type)\to \type} }{
\expTypJud{\TStypEnv}{\TtypEnv}{\repK(\e_1)\{\e_2\}}{\type} }
\qquad\qquad
\surfaceTyping{T-NBR}{ \qquad
\expTypJud{\TStypEnv}{\TtypEnv}{\e}{\type}
}{ \expTypJud{\TStypEnv}{\TtypEnv}{\nbrK\{\e\}}{\type} }
\skiptransition
\surfaceTyping{T-FOLD}{ \qquad
\expTypJud{\TStypEnv}{\TtypEnv}{\e_1}{\type}
\quad \expTypJud{\TStypEnv}{\TtypEnv}{\e_2}{(\type, \type) \to \type}
\quad \expTypJud{\TStypEnv}{\TtypEnv}{\e_3}{\type} }{
\expTypJud{\TStypEnv}{\TtypEnv}{\foldK(\e_1, \e_2, \e_3) }{\type} }
\skiptransition
\end{array}
\\
\textbf{Function typing:}
  \hfill
  \boxed{\funTypJud{\TStypEnv}{\FUNCTION}{\typescheme}}
  \\
\begin{array}{c}
\surfaceTyping{T-FUNCTION}{
\qquad
\expTypJud{\TStypEnv,\,\fname:\forall\emptyseq.(\overline{\type})\rightarrow\type}{\overline{\xname}:\overline{\type}}{\e}{\type}
\qquad
\overline{\tvar}=\FTV{(\overline{\type})\rightarrow\type}
}{ \funTypJud{\TStypEnv}{\defK \; \fname (\overline{\xname}) \; \eqSymK{\e}}{\forall\overline{\tvar}.(\overline{\type})\rightarrow\type}}
\skiptransition
\end{array}
\\
\textbf{Program typing:}
  \hfill
  \boxed{\proTypJud{\PROGRAM}{\type}}
  \\
\begin{array}{c}
\surfaceTyping{T-PROGRAM}{
\\
\TStypEnv_0=\OStypEnv
\\
\FUNCTION_i = \defK \; \fname_i (\_) \; \eqSymK{\_}
\qquad
\funTypJud{\TStypEnv_{i-1}}{\FUNCTION_i}{\typescheme_i}
\qquad
\TStypEnv_i=\TStypEnv_{i-1},\, \fname_i:\typescheme_i
\qquad
 (i \in 1..n)
\\
\expTypJud{\TStypEnv_n}{\emptyset}{\e}{\type}
}{ \proTypJud{\FUNCTION_1\cdots\FUNCTION_n  \;
\e}{\type}}
\end{array}
\end{array}$}
} \caption{Type rules for expressions, function declarations, and programs.} \label{fig:SurfaceTyping}
\end{figure}

\emph{Type environments}, ranged over by $\TtypEnv$ and written $\overline{\xname}:\overline{\type}$,
are used to collect type assumptions for program variables {(i.e., formal parameters of functions).}
\emph{Type-scheme environments}, ranged over by $\TStypEnv$ and written $\overline{\anyvalue}:\overline{\typescheme}$, are used to collect the type schemes for built-in constructors and built-in operators together with the type schemes inferred for user-defined functions. In particular,
the distinguished  \emph{built-in type-scheme environment}  $\OStypEnv$ associates a type scheme
{to each  built-in constructor $\dc$ and}
to each built-in function $\bname$---Figure~\ref{fig:typeof-built-in} shows the type schemes for the {built-in constructors and} built-in functions
 used throughout this paper.

\begin{figure}[t]{
\centerline{\framebox[\textwidth]{ $\begin{array}{l}
\textbf{Built-in data constructors}\\
\begin{array}{lcl@{\hspace{0.2cm}}r}
\typeof{\truevalue} & = & \btype
 \\
\typeof{\falsevalue} & = & \btype
 \\
\typeof{n} & = & \ntype, \quad \text{where } n \text{ is a number or } \texttt{PositiveInfinity}
 \\
\typeof{\PairK} & = & \forall\tvar_1\tvar_2.(\tvar_1,\tvar_2)\rightarrow \pairltypeOf{\tvar_1}{\tvar_2}
 \\
\typeof{\NullK} & = & \forall\tvar. \listltypeOf{\tvar}
 \\
\typeof{\ConsK} & = & \forall\tvar. (\tvar,\listltypeOf{\tvar})\to \listltypeOf{\tvar}
\\
 \end{array}
 \\
 \textbf{Built-in functions: pure operators}\\
\begin{array}{lcl@{\hspace{0.2cm}}r}
\typeof{\pairK} & = & \forall\tvar_1\tvar_2. \tvar_1 \to \tvar_2 \to \pairltypeOf{\tvar_1}{\tvar_2}
 \\
\typeof{\fstK} & = & \forall\tvar_1\tvar_2.(\pairltypeOf{\tvar_1}{\tvar_2}) \to \tvar_1
 \\
\typeof{\sndK} & = & \forall\tvar_1\tvar_2.(\pairltypeOf{\tvar_1}{\tvar_2}) \to \tvar_2
 \\
\typeof{\consK} & = & \forall\tvar. \tvar\to \listltypeOf{\tvar} \to \listltypeOf{\tvar}
 \\
\typeof{\headK} & = & \forall\tvar.(\listltypeOf{\tvar}) \rightarrow \tvar
 \\
\typeof{\tailK} & = & \forall\tvar.(\listltypeOf{\tvar}) \rightarrow \listltypeOf{\tvar}
 \\
\typeof{\texttt{=}} & = & \forall\tvar.(\tvar,\tvar) \to \btype
\\
\typeof{\texttt{mux}} & = & \forall\tvar.(\btype,\tvar,\tvar)\to\tvar
\\
\typeof{\texttt{+}} & = & (\ntype,\ntype)\to\ntype
\\
\typeof{\texttt{and}} & = & (\btype,\btype)\to\btype
 \\
\typeof{\texttt{min}} & = & (\ntype, \ntype) \to \ntype
 \\
\typeof{\texttt{<}} & = & (\ntype,\ntype)\to \btype
\\
 \end{array}
   \\
 \textbf{Built-in functions: sensors}\\
\begin{array}{lcl@{\hspace{0.2cm}}r}
\typeof{\texttt{temperature}} & = & () \to \ntype
\\
 \end{array}
    \\
 \textbf{Built-in functions: relational sensors}\\
\begin{array}{lcl@{\hspace{0.2cm}}r}
\typeof{\texttt{nbrRange}} & = & () \to \ntype
\\
 \end{array}
\end{array}
 $}}}
\caption{Type schemes for the built-in value constructors and functions used in the examples.}
\label{fig:typeof-built-in}
\end{figure}

The typing judgement for expressions is of the form ``$\expTypJud{\TStypEnv}{\TtypEnv}{\e}{\type}$'', to be read:
``$\e$ has type $\type$ under the type-scheme assumptions $\TStypEnv$ ({for built-in constructors and} for  built-in and user-defined functions)
and the type assumptions $\TtypEnv$ (for the program variables occurring in $\e$), respectively''.
As a standard syntax in type systems~\cite{FJ}, given
$\overline{\type}=\type_1,\ldots,\type_n$ and
$\overline{\e}=\e_1,\ldots,\e_n$ ($n\ge 0$), we write
$\expTypJud{\TStypEnv}{\TtypEnv}{\overline{\e}}{\overline{\type}}$
as short for $\expTypJud{\TStypEnv}{\TtypEnv}{\e_1}{\type_1}$
$\cdots$ $\expTypJud{\TStypEnv}{\TtypEnv}{\e_n}{\type_n}$.

{The typing rules for expressions are presented in Figure~\ref{fig:SurfaceTyping} (bottom).
The rules for variables (\ruleNameSize{[T-VAR]}), data values (\ruleNameSize{[T-DAT]}), anonymous function expressions (\ruleNameSize{[T-A-FUN]}),
built-in or defined function names (\ruleNameSize{[T-N-FUN]}), and function application (\ruleNameSize{[T-APP]}),  are almost standard.
Rule \ruleNameSize{[T-REP]} (for $\repK$-expressions) ensures that both the initial value $\e_1$  and the domain and range of function $\e_2$
have the same type, and then assigns it to $\repK(\e_1)\{\e_2\}$;
rule \ruleNameSize{[T-NBR]} (for $\nbrK$-expressions) assigns to $\nbrK\{\e\}$ the same type as $\e$; and rule
\ruleNameSize{[T-FOLD]} (for $\foldK$-expressions) ensures that
  $\e_1$ and  $\e_3$ have the same type $\type$ and that $\e_2$ has type $(T, T) \to T$, and then assigns
 type $\type$ to $\foldK(\e_1,\e_2,\e_3)$.}
The typing rules for declared functions (\ruleNameSize{[T-FUNCTION]})
and programs (\ruleNameSize{[T-PROGRAM]}) are almost standard
 \revised{and leverage judgements of the form $\funTypJud{\TStypEnv}{\FUNCTION}{\typescheme}$ and $\funTypJud{\LTStypEnv_0}{\PROGRAM}{\type}$, respectively for a function $\FUNCTION$ and program $\PROGRAM$}.
%Note that the type rules are syntax-directed, so they straightforwardly describe a type inference algorithm.

\begin{example}[Typing]\label{exa:typing}
Consider the following simple implementation of a self-healing gradient~\cite{ACDV-SASO2017} (i.e., the field of minimum distances from source devices) that also circumvents obstacles. % (as in \Cref{s:examples}, translated in \FSCAFI).
Its semantics will be presented later (see \Cref{ex:time-nbr-gradient,exa:NIandB-formal}); now, we merely consider its typing.
\begin{lstlisting}[language={hfc}]
def gradient(source, metric) { // : (bool, ()->num) -> num
  rep(PositiveInfinity){ (distance) => {
    mux(source, 0.0,
      foldhood(PositiveInfinity, min, nbr{distance} + metric())
    )
  }}
}
if (isObstacle) { PositiveInfinity } { gradient(isSource) }  // : num
\end{lstlisting}
The types of the gradient function and of the main expression inferred by the type system are inserted above as comments. By rule \ruleNameSize{[T-APP]} and assumptions on built-in \texttt{mux}, the type system infers that the third argument of the \texttt{mux} expression must be $\ntype$, since the second argument is also $\ntype$. It follows that \texttt{distance} must be of type $\ntype$ (rule \ruleNameSize{[T-NBR]}) as well and \texttt{metric} must be of type $() \to \ntype$ (rule \ruleNameSize{[T-APP]}), from which function \texttt{gradient} can be inferred to have type $(\btype, () \to \ntype) \to \ntype$ (rule \ruleNameSize{[T-FUNCTION]}). The overall program has then type $\ntype$ (rule \ruleNameSize{[T-PROGRAM]}).
\end{example}

\subsection{Operational Semantics: Device Semantics}\label{sec-calculus-device-semantics}

This section presents a formal semantics of device computation as happens in \FSCAFI{}. %, modelling the outcome one actually achieves when executing computation rounds in \scafi{}.
%The syntax  of \FSCAFI{}  has been introduced in \Cref{sec-calculus-syntax} (\Cref{fig:source:syntax}).
Starting from \FSCAFI{} syntax as previously described, we  assume a fixed program $\PROGRAM$. We say that ``device $\deviceId$ \emph{fires}'', to mean that the main expression of $\PROGRAM$ is evaluated on $\deviceId$.

\begin{rem}[On termination of device computation] \label{rem:termination}
As \FSCAFI\ allows recursive functions, termination of a device firing is not decidable. In the rest of the paper we assume
 %without loss of generality for the results of this paper
that only terminating programs are considered.
%A decidable subset of the termination fragment of \FSCAFI could be identified, e.g., by means of  some static analysis technique for
% termination (see, e.g.,~\cite{Gasarch:Termination}).
\end{rem}

\subsubsection{Device semantics: overall picture and preliminary definitions}\label{sec-calculus-device-semantics-overall}

We model device computation by a big-step operational semantics
where the result of evaluation is a \emph{value-tree} $\vtree$  (see \Cref{fig:deviceSemantics}, first frame),
which is an ordered tree of values, tracking the result of any evaluated subexpression. Intuitively, the evaluation of an expression at a given time in a device $\deviceId$ is performed against the recently-received value-trees of neighbours, namely, its outcome depends on those value-trees.
The result is a new value-tree that is conversely made available to $\deviceId$'s neighbours (through a broadcast) for their firing; this includes $\deviceId$ itself, so as to support a form of state across computation rounds (note that an implementation may massively compress the value-trees, storing only enough information for expressions to be aligned).
\revised{
In other words, a value-tree is a data structure
 that embeds all of the following:
 (i) the final and partial results of a local computation;
 (ii) state resulting from a previous computation;
 (iii) incoming data from neighbours; and
 (iv) outcoming data to be sent to neighbours.
Trees are in general a suitable representation
 for denoting programs (statically)
 and computations (dynamically); notice that
 value-trees may differ in structure
 if different sub-computations are chosen
 e.g. due to branching (cf. $\ifK$).
In particular, state and communication data
 are located at specific nodes -- corresponding to $\repK$ and $\nbrK$ construct applications --
 which are the only ones that may be actually stored and transmitted, but
 whose position
 is important for the evaluation semantics.
}

A \emph{value-tree environment} $\Trees$ is a map from device identifiers to value-trees, \revisedMinor{a local view of} the outcomes of the last evaluations on neighbours, \revisedMinor{as it is locally reconstructed through received messages (no global knowledge is ever assumed).}
This is written $\envmap{\overline\deviceId}{\overline\vtree}$ as short for $\envmap{\deviceId_1}{\vtree_1},\ldots,\envmap{\deviceId_n}{\vtree_n}$.
The syntax of value-trees and value-tree environments is given in Figure~\ref{fig:deviceSemantics} (first frame).

\begin{example}[Value-trees]\label{exa:value-trees}
The graphical representation of the value-trees
\[
  \mbox{$\vtree_1=\mkvt{\truevalue}{\mkvt{<}{},\mkvt{-2}{},\mkvt{5}{},\mkvt{\truevalue}{}}$ $\qquad$
and $\qquad$ $\vtree_2=\mkvt{4}{\mkvt{f}{},\mkvt{3}{},\mkvt{4}{\mkvt{+}{},\mkvt{3}{},\mkvt{1}{},\mkvt{4}{}}}$}
\]
 is as follows:

\noindent\begin{minipage}{\textwidth}
%\begin{verbatim}
%     True                   4
%    / / \ \                /| \
%   < -2 5 True            f 3  4
%                             / /\ \
%                            + 3  1 4
%\end{verbatim}
\centering
\begin{tikzpicture}[-,node distance=1.5cm and 1.5cm, node/.style={rectangle,draw}, node2/.style={draw}]
%\tikzset{>={Latex[width=2mm,length=2mm]}}
  \node[node] (a1) [] {\texttt{True}};
  \node[node] (a2) [below left=0.5cm and 1.5cm of a1] {\texttt{<}};
  \node[node] (a3) [below left=0.5cm and 0cm of a1] {\texttt{-2}};
  \node[node] (a4) [below right=0.5cm and 0cm of a1] {\texttt{5}};
  \node[node] (a5) [below right=0.5cm and 1.5cm of a1] {\texttt{True}};

 \node[node] (b1) [right=6cm of a1] {\texttt{4}};
 \node[node] (b2) [below left=0.5cm and 1.5cm of b1] {\texttt{f}};
 \node[node] (b3) [below=0.5cm of b1] {\texttt{3}};
 \node[node] (b4) [below right=0.5cm and 1.5cm of b1] {\texttt{4}};
 \node[node] (b41) [below left=0.5cm and 1.5cm of b4] {\texttt{+}};
 \node[node] (b42) [below left=0.5cm and 0cm of b4] {\texttt{3}};
 \node[node] (b43) [below right=0.5cm and 0cm of b4] {\texttt{1}};
 \node[node] (b44) [below right=0.5cm and 1.5cm of b4] {\texttt{4}};

% labels
%  \node[right] at ([xshift=-3cm,yshift=-0.8cm]7.east) [align=center] {
%  Field Calculus~\cite{audrito2019tocl}
%  };

  \path %[every node/.style={font=\sffamily\small}]
    (a1) edge [] node [] {} (a2)
    (a1) edge [] node [] {} (a3)
    (a1) edge [] node [] {} (a4)
    (a1) edge [] node [] {} (a5)
    %(4) (4.south) edge [<->] node [anchor=west] {many computers} (5.east)
    (b1) edge [] node [] {} (b2)
    (b1) edge [] node [] {} (b3)
    (b1) edge [] node [] {} (b4)
    (b4) edge [] node [] {} (b41)
    (b4) edge [] node [] {} (b42)
    (b4) edge [] node [] {} (b43)
    (b4) edge [] node [] {} (b44)
    ;
\end{tikzpicture}
\end{minipage}
\end{example}
In the following, for sake of readability, we sometimes write the value $\anyvalue$ as shorthand for the value-tree $\mkvt{\anyvalue}{}$. Following this convention, the value-tree $\vtree_1$ is shortened to $\mkvt{\truevalue}{<,-2,5,\truevalue}$,  and the value-tree $\vtree_2$ is shortened to $\mkvt{4}{\texttt{f},3,\mkvt{4}{+,3,1,4}}$.

%\metacomment{TODO: at I feel that some of the examples we put in the appendix of TOMACS can inspire some small explanatory examples to use here, being sure they focus on peculiarities of \FSCAFI{}.}

We assume that prior to execution each anonymous function sub-expression $(\overline\xname) \toSymK{\e}$ of a program $\emain$ is automatically annotated as $(\overline\xname) \toSymKtag{\name}{\e}$ with a tag $\name$ which uniquely identifies the expression.\footnote{For example, the tag could be generated as $\name = (\emain, n)$ where $n$ is the index of the occurrence of the $\toSym{}$ keyword in $\emain$.} The tag serves as a \emph{name} for anonymous functions, \revised{allowing to consider} function values \revised{to be} equal if they share the same name. \revised{This notion of syntactic equality will later be exploited for the definition of the semantics of function calls.}

\revised{
\begin{rem}[Function Equality]
	Notice that \emph{behavioural equivalence} of function expressions is not decidable in a Turing-complete language, and thus cannot be used in a formal semantics of a (practically executable) programming language. Pure syntactic equality could instead be used, but it would fail to correctly relate multiple occurrences of an anonymous function capturing a variable, whenever that gets substituted for different values. The proposed naming system prevents those issues, while also avoiding unintended interferences between functions that happen to be syntactically identical, but were written in different units of a program and were not meant to be integrated with each other.
\end{rem}
}

Figure~\ref{fig:deviceSemantics} (second frame) defines: the auxiliary functions $\rho$ and $\pi$ for extracting the root value and a subtree of a value-tree, respectively (further explanations about function $\pi$ will be given later); the extension of functions $\rho$ and $\pi$ to value-tree environments; and the auxiliary functions $\nameOf$, $\argsNAME$ and $\bodyNAME$ for extracting the name, formal parameters and body of a (user-defined or anonymous) function, respectively.

\revised{
\begin{example}[Auxiliary functions on value-trees]\label{exa:value-trees-aux}
Consider the following value-trees.
\[
  \mbox{$\vtree_1=\mkvt{\truevalue}{\mkvt{<}{},\mkvt{-2}{},\mkvt{5}{},\mkvt{\truevalue}{}}$ $\qquad$
and $\qquad$ $\vtree_2=\mkvt{4}{\mkvt{f}{},\mkvt{3}{},\mkvt{4}{\mkvt{+}{},\mkvt{3}{},\mkvt{1}{},\mkvt{4}{}}}$}
\]
The following picture graphically shows
 the nodes
 denoted by
 auxiliary functions working on individual value trees.

\noindent\begin{minipage}{\textwidth}
%\begin{verbatim}
%     True                   4
%    / / \ \                /| \
%   < -2 5 True            f 3  4
%                             / /\ \
%                            + 3  1 4
%\end{verbatim}
\centering
\begin{tikzpicture}[-,node distance=1.5cm and 1.5cm, node/.style={rectangle,draw}, node2/.style={draw}]
%\tikzset{>={Latex[width=2mm,length=2mm]}}
  \node[node] (a1) [] {\texttt{True}};
  \node[node] (a2) [below left=0.5cm and 1.5cm of a1] {\texttt{<}};
  \node[node] (a3) [below left=0.5cm and 0cm of a1] {\texttt{-2}};
  \node[node] (a4) [below right=0.5cm and 0cm of a1] {\texttt{5}};
  \node[node] (a5) [below right=0.5cm and 1.5cm of a1] {\texttt{True}};

 \node[node] (b1) [right=6cm of a1] {\texttt{4}};
 \node[node] (b2) [below left=0.5cm and 1.5cm of b1] {\texttt{f}};
 \node[node] (b3) [below=0.5cm of b1] {\texttt{3}};
 \node[node] (b4) [below right=0.5cm and 1.5cm of b1] {\texttt{4}};
 \node[node] (b41) [below left=0.5cm and 1.5cm of b4] {\texttt{+}};
 \node[node] (b42) [below left=0.5cm and 0cm of b4] {\texttt{3}};
 \node[node] (b43) [below right=0.5cm and 0cm of b4] {\texttt{1}};
 \node[node] (b44) [below right=0.5cm and 1.5cm of b4] {\texttt{4}};

\node [above=0cm of a1] {$\revised{\vrootOf{\vtree_1}}$};
\node (a5b) [below=0.4cm of a2] {$\revised{\piIof{1}{\vtree_1}}$};
%\node [below=0cm of a3] {$\revised{\piIof{2}{\vtree_1}}$};
%\node [below=0cm of a4] {$\revised{\piIof{3}{\vtree_1}}$};
%\node [below=0cm of a5] {$\revised{\piIof{4}{\vtree_1}}$};
%
\node [below=0.2cm of a5b, xshift=3.5cm] {$\revised{\piBof{+}{\vtree_1} = \bullet}$};
\node [above=0cm of b1] {$\revised{\vrootOf{\vtree_2}}$};
%\node [below=0cm of b2] {$\revised{\piIof{1}{\vtree_2}}$};
%\node [left=0cm of b3] {$\revised{\piIof{2}{\vtree_2}}$};
%\node [above right=0cm of b4] {$\revised{\piIof{3}{\vtree_2} = \vrootOf{\vtree_3}}$};
\node [above right=0cm of b4,xshift=0.4cm] {$\revised{\piIof{3}{\vtree_2} = \vtree_3}$};
%\node [below=0cm of b41] {$\revised{\piIof{1}{\vtree_3}}$};
%\node [below=0cm of b42] {$\revised{\piIof{2}{\vtree_3}}$};
%\node [below=0cm of b43,xshift=0.2cm] {$\revised{\piIof{3}{\vtree_3}}$};
%\node (b441) [below=0cm of b44] {$\revised{\piIof{4}{\vtree_3}}$};
\node [above right=0.2cm of b44] {$\revised{\piBof{+}{\vtree_3}}$};

\begin{scope}[on background layer]
    \filldraw[gray	, line join=round, line width=1cm, fill opacity=0.1,text opacity=1.0, opacity=0.2] plot coordinates{(a2.north west) (a2.north east) (a2.south east) (a2.south west)}--cycle;
    \filldraw[gray	, line join=round, line width=1cm, fill opacity=0.1,text opacity=1.0, opacity=0.2] plot coordinates{(b4.north east) (b44.south east) (b41.south west) (b4.north west)}--cycle;
    \filldraw[purple	, line join=round, line width=1cm, fill opacity=0.1,text opacity=1.0, opacity=0.2] plot coordinates{(b44.north west) (b44.north east) (b44.south east) (b44.south west)}--cycle;
\end{scope}

% labels
%  \node[right] at ([xshift=-3cm,yshift=-0.8cm]7.east) [align=center] {
%  Field Calculus~\cite{audrito2019tocl}
%  };

  \path %[every node/.style={font=\sffamily\small}]
    (a1) edge [] node [] {} (a2)
    (a1) edge [] node [] {} (a3)
    (a1) edge [] node [] {} (a4)
    (a1) edge [] node [] {} (a5)
    %(4) (4.south) edge [<->] node [anchor=west] {many computers} (5.east)
    (b1) edge [] node [] {} (b2)
    (b1) edge [] node [] {} (b3)
    (b1) edge [] node [] {} (b4)
    (b4) edge [] node [] {} (b41)
    (b4) edge [] node [] {} (b42)
    (b4) edge [] node [] {} (b43)
    (b4) edge [] node [] {} (b44)
    ;
\end{tikzpicture}
\end{minipage}
\end{example}
}

The computation that takes place on a single device is formalised  by the big-step operational semantics rules given in Figure~\ref{fig:deviceSemantics} (fourth frame).
 The derived judgements are of the form
\[
\bsopsem{\deviceId,\deviceId'}{\Trees}{\senstate}{\e}{\vtree}
\]
to be read ``expression $\e$ evaluates to  value-tree $\vtree$ on device $\deviceId$ with respect to the neighbour $\deviceId'$, value-tree environment $\Trees$ and sensor state $\senstate$'', where:
        \emph{(i)} $\deviceId$ is the identifier of the current device and $\deviceId'$ is either equal to $\deviceId$ or is one of its neighbours;
        \emph{(ii)} $\Trees$ is the field of the value-trees produced by the most recent evaluation of (an expression corresponding to) $\e$ on $\deviceId$ and its neighbours, \revisedMinor{which were received by $\deviceId$ through messages};
        \emph{(iii)} $\e$ is an expression;
        \emph{(iv)} the value-tree $\vtree$  represents the values computed for all the expressions encountered during the evaluation of $\e$---in particular $\vrootOf{\vtree}$ is the result value of $\e$.

The operational semantics  rules  are based on rather standard rules for functional languages,
extended so as to be able to evaluate a subexpression $\e'$ of $\e$ with respect to\ the value-tree environment $\Trees'$ obtained from $\Trees$
by extracting the corresponding subtree (when present) in the value-trees in the range of $\Trees$. This process, called \emph{alignment},
 is modelled by the auxiliary function $\pi$, defined in Figure~\ref{fig:deviceSemantics} (second frame).
Function $\pi$ has two different behaviours (specified by its subscript or superscript): $\piIof{i}{\vtree}$ extracts the $i$-th subtree of $\vtree$, if it is present;
and $\piBof{\funvalue}{\vtree}$ extracts the last subtree of $\vtree$, if it is present and the root of first subtree of $\vtree$ is equal to $\funvalue$.

When a device $\deviceId$ fires, its main expression $\e$ is evaluated with respect to $\deviceId$ itself. That is, by means of a judgement where
$\deviceId'=\deviceId$:
\[
\bsopsem{\deviceId,\deviceId}{\Trees}{\senstate}{\e}{\vtree}.
\]
%which is is formalised by the big-step operational semantics rules given in \Cref{fig:deviceSemantics} (last frame).
A key aspect of the semantics is that, if $\e$ is a $\foldK$-expression $\foldK(\e_1,\e_2,\e_3)$ then its body $\e_3$ is evaluated with respect to each of the devices $\deviceId'$  (if any)  in $\domof{\Trees}\setminus\{\deviceId\}$.
Because of alignment (see above), it might happen that a sub-expression $\e'$ of $\e_3$ is evaluated by a judgement
\[
\bsopsem{\deviceId,\deviceId'}{\Trees}{\senstate}{\e'}{\vtree}  \quad \mbox{ where $\deviceId\neq\deviceId'\not\in\domof{\Trees}$ }
\]
and, if the evaluation of $\e'$ exploits the device $\deviceId'$, then the evaluation of $\e_3$ with respect to $\deviceId'$ \emph{fails} and
the evaluation of the $\foldK$-expression $\foldK(\e_1,\e_2,\e_3)$ does not consider the neighbour $\deviceId'$. The
evaluation rule for $\foldK$-expressions, \ruleNameSize{[E-FOLD]},  formalises failure of evaluation with respect to a neighbour $\deviceId'$
by means of the auxiliary predicate
\[
\bsopsemFAIL{\deviceId,\deviceId'}{\Trees}{\senstate}{\e}
\]
  to be read ``expression $\e$ fails to evaluate on device $\deviceId$  against neighbour $\deviceId'$ with respect to value-tree environment $\Trees$ and
sensor state $\senstate$'', which is formalised by the big-step operational semantics rules given in Figure~\ref{fig:deviceSemanticsFAIL}.

\begin{figure}[!t]{
 \framebox[1\textwidth]{
 $\begin{array}{l}
\textbf{Value-trees
and value-tree environments:}\\
\begin{array}{rcl@{\hspace{8.1cm}}r}
	\vtree & \BNFcce &  \mkvtree{E-Rule}{\anyvalue}{\overline{\vtree}}
	&   {\footnotesize \mbox{value-tree}} \\
	\Trees & \BNFcce & \envmap{\overline{\deviceId}}{\overline{\vtree}}
	 &   {\footnotesize \mbox{value-tree environment}} \\
\end{array}\\
\hline\\[-8pt]
\textbf{Auxiliary functions:}\\
\begin{array}{l}
\begin{array}{l@{\hspace{8pt}}l@{\hspace{8pt}}l}
	\nameOf((\overline{\xname}) \; \toSymKtag{\name}{\e}) = \name
	&
	\args{ (\overline{\xname}) \; \toSymKtag{\name}{\e}} = \overline{\xname}
	&
	\body{(\overline{\xname}) \; \toSymKtag{\name}{\e}} = \e
	\\
	\nameOf(\fname) = \fname
	&
	\args{\fname} = \overline{\xname}
	&
	\body{\fname} = \e  \; (\text{if } \, \defK \; \fname (\overline{\xname}) \; \eqSymK{\e})
	\\
	\nameOf(\bname) = \bname
	&
	\vrootOf{\mkvtree{}{\anyvalue}{\overline{\vtree}}}  =   \anyvalue
	\\
	\piIof{i}{\mkvtree{E-Rule}{\anyvalue}{\vtree_1,\ldots,\vtree_n}}  =  \vtree_i
	&
	\text{if~} 1\le i \le n
	&
	\text{~else~} \emptyseq
	\\
	\piBof{\funvalue}{\mkvtree{E-Rule}{\anyvalue}{\vtree_1,\ldots,\vtree_{n+2}}}  =   \vtree_{n+2}
	&
	\text{if~} \nameOf(\vrootOf{\vtree_{1}}) = \nameOf(\funvalue)
	&
	\text{~else~} \emptyseq
\end{array} \!\!\!\!\!\!\!\!\!\!\!\!\!\!\!
\\[30pt]
\mbox{For } \auxNAME\in\rho,\piI{i},\piB{\funvalue}:
\quad
\left\{{\begin{array}{lcll}
	 \aux{\emptyseq}  & =  & \emptyseq
	\\
	\aux{\envmap{\deviceId}{\vtree}, \Trees}  & =   & \aux{\Trees}  & \quad \mbox{if} \; \aux{\vtree}=\emptyseq
	\\
	 \aux{\envmap{\deviceId}{\vtree}, \Trees}  & =  & \envmap{\deviceId}{\aux{\vtree}}, \aux{\Trees} & \quad \mbox{if} \; \aux{\vtree} \not=\emptyseq
\end{array}}\right.   \!\!\!\!\!\!
\end{array}\\
\hline\\[-10pt]
\textbf{Syntactic shorthands:}\\
\!\!\begin{array}{l@{\hspace{4pt}}l@{\hspace{4pt}}l}
\bsopsem{\deviceId,\deviceId'}{\piIofOv{\Trees}}{\senstate}{\overline{\e}}{\overline{\vtree}}
&
  \textrm{where~~} |\overline{\e}|=n
&
  \textrm{for~~}
  \bsopsem{\deviceId,\deviceId'}{\piIof{1}{\Trees}}{\senstate}{\e_1}{\vtree_1}
    \ldots
    \bsopsem{\deviceId,\deviceId'}{\piIof{n}{\Trees}}{\senstate}{\e_n}{\vtree_n} \!\!\!\!\!\!\!\!\!\!\!\! \\
\vrootOf{\overline{\vtree}}
&
  \textrm{where~~} |\overline{\vtree}|=n
  & \textrm{for~~}
\vrootOf{\vtree_1} ~ \ldots ~ \vrootOf{\vtree_n}\\
\substitution{\overline{\xname}}{\vrootOf{\overline{\vtree}}}
&   \textrm{where~~} |\overline{\xname}|=n
  &
  \textrm{for~~}
\substitution{\xname_1}{\vrootOf{\vtree_1}}~\ldots~\substitution{\xname_n}{\vrootOf{\vtree_n}}
\end{array}\\
\hline\\[-10pt]
\textbf{Rules for expression evaluation:} \hspace{55mm}
 \boxed{\bsopsem{\deviceId,\deviceId'}{\Trees}{\senstate}{\e}{\vtree}}
\skiptransitionR
\!\!\!\!\!\begin{array}{c}
\nullsurfaceTyping{E-VAL}{
\bsopsem{\deviceId,\deviceId'}{\Trees}{\senstate}{\anyvalue}{\mkvtree{}{\anyvalue}{}}
}
\skiptransitionN\\
\surfaceTyping{E-B-APP}{
\begin{array}{ll}
  \bsopsem{\deviceId,\deviceId}{\piIof{1}{\Trees}}{\senstate}{\e}{\vtree}
                    &
          \bsopsem{\deviceId,\deviceId'}{\piIof{i+1}{\Trees}}{\senstate}{\e_i}{\vtree_i}  \quad \text{for all}\; i \in 1, \ldots, n
      \\
         \anyvalue=\builtinop{\bname}{\deviceId,\deviceId'}{\piBof{\bname}{\Trees},\senstate}(\vrootOf{\overline{\vtree}})
         &
         (\bname = \vrootOf{\vtree} \text{ is not relational }) \vee (\deviceId' \in \domof{\piBof{\bname}{\Trees}} \cup \{\deviceId\})
\end{array} \!\!\!\!
 }{
\bsopsem{\deviceId,\deviceId'}{\Trees}{\senstate}{\e(\overline{\e})}{\mkvtree{}{\anyvalue}{\vtree,\overline{\vtree},\anyvalue}}
}
\skiptransitionN\\
\surfaceTyping{E-D-APP}{
\begin{array}{ll}
  \bsopsem{\deviceId,\deviceId}{\piIof{1}{\Trees}}{\senstate}{\e}{\vtree}
&
\bsopsem{\deviceId,\deviceId'}{\piIof{i+1}{\Trees}}{\senstate}{\e_i}{\vtree_i} \quad \text{for all}\; i \in 1, \ldots, n

\\
  \funvalue = \vrootOf{\vtree} \mbox{ is not a built-in}
 &
  \bsopsem{\deviceId,\deviceId'}{\piBof{\funvalue}{\Trees}}{\senstate}{\applySubstitution{\body{\funvalue}}{\substitution{\args{\funvalue}}{\vrootOf{\overline{\vtree}}}}}{\vtree'}
\end{array} \!\!\!\!
 }{
\bsopsem{\deviceId,\deviceId'}{\Trees}{\senstate}{\e(\overline{\e})}{\mkvtree{}{\vrootOf{\vtree'}}{\vtree,\overline{\vtree},\vtree'}}
}
\skiptransitionN\\
\surfaceTyping{E-REP}{
        \begin{array}{ll}
     \bsopsem{\deviceId,\deviceId}{\piIof{1}{\Trees}}{\senstate}{\e_1}{\vtree_1} & \anyvalue_1=\vrootOf{\vtree_{1}}\\
     \bsopsem{\deviceId,\deviceId}{\piIof{2}{\Trees}}{\senstate}{\e_2(\anyvalue_0)}{\vtree_2}~~& \anyvalue_2=\vrootOf{\vtree_{2}}
        \end{array}
        \quad
        \anyvalue_0 = \left\{\begin{array}{ll}
                             \vrootOf{\piIof{2}{\Trees}}(\deviceId) & \mbox{if} \;  \deviceId \in \domof{\Trees} \\
                             \anyvalue_1 & \mbox{otherwise}
                           \end{array}\right.
 }{
\bsopsem{\deviceId,\deviceId'}{\Trees}{\senstate}{\repK(\e_1)\{\e_2\}}{\mkvtree{}{\anyvalue_2}{\vtree_1,\vtree_2}}
}
\skiptransitionN\\
\surfaceTyping{E-NBR}{
         \quad
        \deviceId \neq \deviceId' \in \domof{\Trees}
\qquad
        \vtree = \Trees(\deviceId')
 }{
\bsopsem{\deviceId,\deviceId'}{\Trees}{\senstate}{\nbrK\{\e\}}{\vtree}
}
\qquad
\surfaceTyping{E-NBR-LOC}{
         \quad
     \bsopsem{\deviceId,\deviceId}{\piIof{1}{\Trees}}{\senstate}{\e}{\vtree}
 }{
\bsopsem{\deviceId,\deviceId}{\Trees}{\senstate}{\nbrK\{\e\}}{\mkvtree{}{\vrootOf{\vtree}}{\vtree}}
}
\skiptransitionN\\
\surfaceTyping{E-FOLD}{
        \begin{array}{ll}
                \bsopsem{\deviceId,\deviceId}{\piIof{1}{\Trees}}{\senstate}{\e_1}{\vtree^1}
                &
                 \quad
               \bsopsem{\deviceId,\deviceId}{\piIof{2}{\Trees}}{\senstate}{\e_2}{\vtree_f}
                  \qquad
               \funvalue = \vrootOf{\vtree_f}
                 \\
                 \deviceId_1, \ldots, \deviceId_n = \domof{\Trees} \setminus \{\deviceId\}
               &
               \quad
               n\ge m\ge 0, \deviceId_1, \ldots, \deviceId_m \text{ increasing}, \deviceId_0 = \deviceId
                \\
                 \bsopsem{\deviceId,\deviceId_i}{\piIof{3}{\Trees}}{\senstate}{\e_3}{\vtree_i}
                &
                \quad
               \text{for all } i \in 0,...,m
                \\
                \bsopsemFAIL{\deviceId,\deviceId_j}{\piIof{3}{\Trees}}{\senstate}{\e_3}{}
                &
                 \quad
               \text{for all } j \in m+1,...,n
                \\
                  \bsopsem{\deviceId,\deviceId}{\emptyset}{\senstate}{\funvalue(\vrootOf{\vtree^i}, \vrootOf{\vtree_i})}{\vtree^{i+1}}
                &
                \quad
                \text{for all } i \in 1,...,m
        \end{array}
 }{
\bsopsem{\deviceId,\deviceId'}{\Trees}{\senstate}{\foldK(\e_1,\e_2,\e_3)}{\mkvtree{}{\vrootOf{\vtree^{m+1}}}{\vtree^1, \vtree_f, \vtree_0}}
}
\end{array}\!\!\!\!\!
\end{array}$}
}
\vspace{-0.1cm}
 \caption{Big-step operational semantics for expression evaluation (see also \Cref{fig:deviceSemanticsFAIL}).} \label{fig:deviceSemantics}
\end{figure}

\begin{figure}[!t]{
 \framebox[1\textwidth]{
 $\begin{array}{l}
~\textbf{Auxiliary rules for expression evaluation failure:} \hspace{23mm}
   %\qquad\qquad\qquad\quad\;\;
    \boxed{\bsopsemFAIL{\deviceId,\deviceId'}{\Trees}{\senstate}{\e}}
\skiptransition
\begin{array}{c}
\surfaceTyping{E-NBR-FAIL}{
    ~~ \quad
    \deviceId \neq \deviceId' \not\in \domof{\Trees}
 }{
	\bsopsemFAIL{\deviceId,\deviceId'}{\Trees}{\senstate}{\nbrK\{\e\}}
}
\skiptransitionN
\\
\surfaceTyping{E-R-APP-FAIL}{
	\begin{array}{ll}
	\bsopsem{\deviceId,\deviceId}{\piIof{1}{\Trees}}{\senstate}{\e}{\vtree}
      &
       \bsopsem{\deviceId,\deviceId'}{\piIof{i+1}{\Trees}}{\senstate}{\e_i}{\vtree_i} ~~ \text{for all}\; i \in 1, \ldots, n
      \\
      \snvalue = \vrootOf{\vtree} \mbox{ is relational-built-in}
      &
    \deviceId \neq \deviceId' \not\in \domof{\piBof{\snvalue}{\Trees}}
	\end{array}\!\!\!\!
 }{
	\bsopsemFAIL{\deviceId,\deviceId'}{\Trees}{\senstate}{\e(\overline\e)}
}
\skiptransitionN
\\
%\surfaceTyping{E-APP-FUN-FAIL}{
%	~~ \quad
%	\bsopsemFAIL{\deviceId,\deviceId}{\piIof{n+1}{\Trees}}{\senstate}{\e}
% }{
%	\bsopsemFAIL{\deviceId,\deviceId'}{\Trees}{\senstate}{\e(\overline\e)}
%}
%\skiptransitionN
%\\
\surfaceTyping{E-APP-ARG-FAIL}{
	\begin{array}{ll}
	 \bsopsem{\deviceId,\deviceId}{\piIof{1}{\Trees}}{\senstate}{\e}{\vtree}
       &
      \quad
       \overline{\e} = \e_1,...,\e_n
        \qquad
         n\ge 0
	   \\
	  \bsopsem{\deviceId,\deviceId'}{\piIof{i+1}{\Trees}}{\senstate}{\e_i}{\vtree_i}
	&
	  \quad
	               \text{for all } i \in 1,...,m < n
	\\
	  \bsopsemFAIL{\deviceId,\deviceId'}{\piIof{m+2}{\Trees}}{\senstate}{\e_{m+1}}
	\end{array}
 }{
	\bsopsemFAIL{\deviceId,\deviceId'}{\Trees}{\senstate}{\e(\overline{\e})}
}
\skiptransitionN
\\
\surfaceTyping{E-D-APP-FAIL}{
	\begin{array}{ll}
		\bsopsem{\deviceId,\deviceId}{\piIof{1}{\Trees}}{\senstate}{\e}{\vtree}
       &
       \quad
       \bsopsem{\deviceId,\deviceId'}{\piIof{i+1}{\Trees}}{\senstate}{\e_i}{\vtree_i} ~~ \text{for all}\; i \in 1, \ldots, n
        \\
        \funvalue = \vrootOf{\vtree} \mbox{ is not a built-in}
        &
        \quad
	  \bsopsemFAIL{\deviceId,\deviceId'}{\piBof{\funvalue}{\Trees}}{\senstate}{\applySubstitution{\body{\funvalue}}{\substitution{\args{\funvalue}}{\vrootOf{\overline{\vtree}}}}} %\!\!\!\!\!\!\!\!\!
	\end{array}\!\!\!\!
 }{
	\bsopsemFAIL{\deviceId,\deviceId'}{\Trees}{\senstate}{\e(\overline{\e})}
}
\end{array}
\end{array}$}
}
\vspace{-0.1cm}
\caption{Big-step operational semantics for expression evaluation (auxiliary rules for expression evaluation failure).} \label{fig:deviceSemanticsFAIL}
\end{figure}

\subsubsection{Device semantics: rules for expression evaluation}\label{sec-calculus-device-semantics-expressions}

We start by explaining the rules in \Cref{fig:deviceSemantics} (fourth frame), then we will
explain the rules in \Cref{fig:deviceSemanticsFAIL}.

Rule \ruleNameSize{[E-VAL]} implements the evaluation of an expression that is already a value.
For instance, evaluating the expression $\mathtt{1}$ produces (by Rule  \ruleNameSize{[E-VAL]}) the value-tree $\mkvtree{}{1}{}$,
while evaluating the expression $\mathtt{+}$ produces the value-tree $\mkvtree{}{+}{}$.
%In the following, for sake of readability we sometimes write $\anyvalue$ as short for the value-tree $\mkvtree{}{\anyvalue}{}$.

Rules \ruleNameSize{[E-B-APP]} and \ruleNameSize{[E-D-APP]} model function application $\e(\e_1 \cdots \e_n)$.
In case $\e$ evaluates to a built-in function $\bname$, rule \ruleNameSize{[E-B-APP]} is used, whose behaviour is driven by the special auxiliary
function $\builtinop{\bname}{\deviceId,\deviceId'}{\Trees, \senstate}$ (operational interpretation of $\bname$), whose actual definition is abstracted away.

\begin{example}[Built-in function application]\label{exa:rule-E-B-APP}
Evaluating the expression $\mathtt{<(-2, 5)}$ produces the value-tree $\vtree_1=\mkvt{\truevalue}{<,-2,5,\truevalue}$ introduced in Example~\ref{exa:value-trees}.
The operational interpretation $\builtinop{<}{\deviceId,\deviceId'}{\Trees, \senstate}$ of $<$ is the following (notice that this interpretation does not depend on $\Trees, \senstate, \deviceId, \deviceId'$, since $<$ is a pure mathematical operator):
\[
\builtinop{<}{\deviceId,\deviceId'}{\Trees, \senstate} = \denotf{x}{} \denotf{y}{} \begin{cases}
\truevalue & x < y \\
\falsevalue & \text{otherwise}
\end{cases}
\]
The value of the whole expression, $\truevalue$ (the root of the last subtree of the value-tree), has been computed by using rule \ruleNameSize{[E-B-APP]} to evaluate the application $\builtinop{<}{\deviceId,\deviceId'}{\Trees, \senstate}(-2,5)$ of the less-then operator $<$ (the root of the first subtree of the value-tree) to the values $-2$  (the root of the second subtree of the value-tree) and $5$ (the root of the third subtree of the value-tree).
\end{example}
In case $\e$ evaluates to a user-defined or anonymous function $\funvalue$, rule \ruleNameSize{[E-D-APP]} is used:
it performs
domain restriction $\piBof{\funvalue}{\Trees}$ (thus discarding devices that did not apply the same function $\funvalue$, for which no consistent information on the application of $\funvalue$ is present),
then continues the evaluation by substituting the arguments into the body of $\funvalue$.
We remark that we do not assume that $\Trees$ is empty whenever it does not contain $\deviceId$.
In fact, in any round where $\e$ evaluates to a function $\funvalue$ for the first time on a device, $\funvalue(\overline\e)$
will be evaluated with respect to an environment not containing $\deviceId$ but possibly containing other devices (whose $\e$ evaluated to
$\funvalue$ in their previous round of computation).
\begin{example}[Defined or anonymous function application]\label{exa:rule-D-APP}
Evaluating the expression $\funvalue(3)$, where $\funvalue$ is the name of the declared function $\defK \; \funvalue(\xname) \; \eqSymK{\xname + 1}$, produces the value-tree $\vtree_2=\mkvt{4}{\texttt{f},3,\mkvt{4}{+,3,1,4}}$ introduced in Example~\ref{exa:value-trees}.
The value of the whole expression, $4$ (the root of  $\vtree_2$), which has been computed by using rule \ruleNameSize{[E-D-APP]}, is the root of the last subtree of $\vtree_2$, which is produced by the evaluation of the expression $3+1$ (obtained from the body of $\funvalue$ by replacing $\xname$ with $3$).
Evaluating the similar expression $\funvalue(3)$ where $\funvalue$ is the anonymous function $((\xname)\; \toSymKtag{\name}{\xname + 1})$, produces the same value-tree $\vtree_2$ by the same rule \ruleNameSize{[E-D-APP]}.
\end{example}

Rule \ruleNameSize{[E-REP]} implements internal state evolution through computational rounds: on the first firing of a device, $\repK(\e_1)\{\e_2\}$ evaluates to $\e_2(\e_1)$, then it evaluates to $\e_2(\anyvalue)$ where $\anyvalue$ is the value calculated in the previous round.

\begin{example}[Time evolution]\label{exa:rule-E-REP}
To illustrate rule \ruleNameSize{[E-REP]}, as well as computational rounds, we consider the program $\repK(3)\{\funvalue\}$
where $\funvalue$ is the anonymous function $(\xname)\; \toSymKtag{\name}{\xname + 1}$ introduced in Example~\ref{exa:rule-D-APP}.
The first firing of a device $\deviceId$  is performed against the empty tree environment.\revised{\footnote{\revised{In the first firing, no messages have been yet received, and therefore the environment is empty. This fact will be later formalised and discussed in the next section on network semantics.}}} Therefore, according to rule  \ruleNameSize{[E-REP]},
evaluating  $\repK(3)\{\funvalue\}$ produces the value-tree $\vtree=\mkvt{4}{3, \vtree_2}$ where $\vtree_2=\mkvt{4}{\texttt{f},3,\mkvt{4}{+,3,1,4}}$
is the value-tree (introduced in Example~\ref{exa:value-trees}) produced by evaluating the expression $\funvalue(3)$
as described in Example~\ref{exa:rule-D-APP}. The overall result of the firing is the root $4$ of $\vtree$.
Any subsequent firing of the device $\deviceId$ is performed with respect to a value-tree environment $\Trees$ that associates to $\deviceId$
 the outcome $\vtree$ of the most recent firing of $\deviceId$.
Therefore, evaluating    $\repK(3)\{\funvalue\}$ at the second firing produces the value-tree $\vtree'=\mkvt{5}{4, \vtree'_2}$ where
$\vtree'_2=\mkvt{5}{\texttt{f},4,\mkvt{5}{+,4,1,5}}$
is the value-tree  produced by evaluating the expression $\funvalue(4)$, where $4$ is the root of $\vtree$.
 Hence, the results of the firings are $4$, $5$, $6$, and so on.
\end{example}

Rules \ruleNameSize{[E-NBR]} and \ruleNameSize{[E-NBR-LOC]} model device interaction (together with \ruleNameSize{[E-FOLD]} which we shall consider later).  When an $\nbrK$-expression is not evaluated against a neighbour (that is, $\deviceId' = \deviceId$), by Rule \ruleNameSize{[E-NBR-LOC]} the $\nbrK$ operator is discarded and the evaluation continues. Whenever instead an $\nbrK$-expression is evaluated against a neighbour (that is, $\deviceId' \neq \deviceId$), by Rule \ruleNameSize{[E-NBR]} the expression directly evaluates to $\Trees(\deviceId')$ (which is the value-tree calculated by device $\deviceId'$ in its last computational round for the same expression). Notice that it could be possible that $\deviceId'$ is not in the domain of $\Trees$ due to alignment operations performed in subexpressions of the enclosing instance of $\foldK$. In this case, no rule is applicable and the $\nbrK$-expression \emph{fails}, causing $\deviceId'$ to be ignored by the enclosing $\foldK$ operator (see Rule \ruleNameSize{[E-FOLD]}).

Rule \ruleNameSize{[E-FOLD]} implements collection and aggregation of results from neighbours, proceeding in the following steps:
\begin{itemize}
        \item Evaluate the initial value $\e_1$ with respect to the current device obtaining the value-tree $\vtree^1$.
        \item Evaluate the aggregator $\e_2$ with respect to  the current device obtaining $\vtree_f$ with root $\funvalue$.
        \item Evaluate the body $\e_3$ with respect to the current device $\deviceId_0 = \deviceId$, obtaining $\vtree_0$ which constitutes the third branch of the overall resulting value-tree, then with respect to every \revised{one of the $n$ neighbours}, $\deviceId' \in \domof{\piIof{3}{\Trees}}\setminus\{\deviceId\}$, and \emph{consider} only the \revised{subset of $m$ neighbours ($0 \leq m \leq n$)}  $\deviceId_1,...,\deviceId_m$ \emph{for which the evaluation does not fail} \revised{(cf. discussion above and \Cref{fig:deviceSemanticsFAIL})}, obtaining the value-trees $\vtree_1,...,\vtree_m$, respectively.\footnote{\label{footnote:aggregator}Usually, the aggregator $\funvalue$ is associative and commutative, so that the result of the aggregation does not depend on the order in which the neighbours $\deviceId' \in \domof{\piIof{3}{\Trees}}\setminus\{\deviceId\}$ are considered. To ensure determinism even in the unlikely case of the aggregator $\funvalue$ being not associative and commutative, we assume that the neighbours are considered according to any given total order on device identifiers. \revised{Such an order may be, e.g., lexicographical order of the bit representations of the unique device identifiers $\deviceId$ (which may be, e.g., MAC addresses).}} \revisedMinor{We remark that the evaluation with respect to neighbours $\deviceId'$ is performed locally, with respect to the locally-available information in $\Trees$.}
  %\footnote{Implications of this choice are discussed in Section~\ref{ssec:Alignment}.}
        \item Aggregate the values $\vrootOf{\vtree_i}$ ($1\le i\le m$) computed above together with the initial value $\vrootOf{\vtree^1}$ via function $\funvalue$, obtaining the final outcome $\vrootOf{\vtree^{m+1}}$. Notice that when $m=0$, the final outcome is the result $\vrootOf{\vtree^1}$ of $\e_1$. The aggregation is performed with respect to the current device and the empty environment, since the value-trees of the aggregation process cannot be meaningfully related with one another (and thus are not stored in the final outcome of the computation). In other words, the aggregator $\funvalue$ is forced to be a ``pure'' function independent of the current device and environment (even though the expression $\e_2$  as a whole might depend on the environment).
\end{itemize}

The values aggregated by $\foldK$ exclude the value of $\e_3$ in the current device $\deviceId$. However, an inclusive folding operation $\mathtt{foldhoodPlusSelf}$ can  be encoded as \\

\lstinline[language={hfc}]|def foldhoodPlusSelf(f, v) { foldhood(v, f, v) }|.\\ %\end{lstlisting}

\noindent We also remark that a sequence of nested $\foldK$-operators (not interleaved by $\nbrK$-operators) can lead to an evaluation time which is exponential in the evaluation tree depth.\footnote{In actual implementations, the outcome of $\foldK$ and $\repK$ subexpressions can be ``memoised'' in order to prevent subsequent re-evaluation (since such expressions are independent of the neighbour against which are evaluated). This addresses the performance issues of nested $\foldK$-operators.}

Failure of evaluation against a neighbour is formalised by means of the auxiliary judgement $\bsopsemFAIL{\deviceId,\deviceId'}{\Trees}{\senstate}{\e}$ defined by the rules in  Figure~\ref{fig:deviceSemanticsFAIL}. Rules \ruleNameSize{[E-NBR-FAIL]} and \ruleNameSize{[E-R-APP-FAIL]} model the failure sources, while the other rules model failure propagation.

\begin{example}[Neighbourhood interaction]\label{exa:NI-formal}
To illustrate rules \ruleNameSize{[E-FOLD]}, \ruleNameSize{[E-NBR]} and \ruleNameSize{[E-NBR-LOC]},  we consider program
\[
\foldK(2,\texttt{+}, \texttt{min}(\nbrK\{\texttt{temperature}()\}, \texttt{temperature}()) )
\]
evaluated in device $\deviceId_0$ (in which $\texttt{temperature}() = 10$) with neighbours $\deviceId_1$ (in which $\texttt{temperature}() = 15$)
and $\deviceId_2$ (in which $\texttt{temperature}() = 5$). By Rule \ruleNameSize{[E-FOLD]}, the three subexpressions of the $\foldK$-expression
are evaluated with respect to $\deviceId_0$ into the value-trees $\vtree^1, \vtree_f, \vtree_0$ which will constitute the branches of the final tree. The first two of them are
$\vtree^1 = \mkvt{2}{}$ and $\vtree_f = \mkvt{\texttt{+}}{}$, each obtained by Rule \ruleNameSize{[E-VAL]}.
Then, the third subexpression is evaluated against $\deviceId_0$, $\deviceId_1$  and $\deviceId_2$, obtaining:
\begin{align*}
\vtree_0 &= \mkvt{10}{\texttt{min}, \mkvt{10}{\mkvt{10}{\mathtt{temperature},10}}, \mkvt{10}{\mathtt{temperature},10}, 10}, \\
\vtree_1 &= \mkvt{10}{\texttt{min}, \mkvt{15}{\mkvt{15}{\mathtt{temperature},15}}, \mkvt{10}{\mathtt{temperature},10}, 10},  \\
\vtree_2 &= \mkvt{5}{\texttt{min}, \mkvt{5}{\mkvt{5}{\mathtt{temperature},5}}, \mkvt{10}{\mathtt{temperature},10}, 5}
\end{align*}
the first one ($\vtree_0$) obtained through three applications of Rule \ruleNameSize{[E-B-APP]} and one of Rule \ruleNameSize{[E-NBR-LOC]}, and the other two ($\vtree_1 $ and $\vtree_2$) obtained through three applications of Rule \ruleNameSize{[E-B-APP]} and one of Rule \ruleNameSize{[E-NBR]}.
The roots of value-trees $\vtree_1$ and $\vtree_2$ are then combined through operator $\texttt{+}$, together with the initial value $2$, for a total result of $2 + 10+ 5 = 17$  which is the root of the final value-tree $\mkvt{17}{2, \texttt{+}, \vtree_0}$.
\revised{
Consider the following graphical representation of the value-trees involved.
}
\noindent\begin{minipage}{\textwidth}
\centering
\begin{tikzpicture}[-,node distance=1.5cm and 1.5cm, every node/.style={rectangle,draw},
    level 1/.style={sibling distance=1cm},
    level 2/.style={sibling distance=1cm},
    level 3/.style={sibling distance=1cm}, ]
\scriptsize
\node (root) { \texttt{17} }
  child { node (init) { \texttt{2} } }
  child { node (fun) { \texttt{+} } }
  child {
    node (d0) { \texttt{10} }
    child { node { \texttt{min} } }
    child {
    	node { \texttt{10} }
    	child {
    	  node { \texttt{10} }
    	  child { node { \texttt{temperature} } }
    	  child { node [yshift=0.5cm] { \texttt{10} } }
    	}
    }
    child { node { \texttt{10} }
  	  child { node [xshift=0.8cm] { \texttt{temperature} } }
    	child { node [xshift=0.6cm, yshift=0.5cm] { \texttt{10} } }
    }
    child { node { \texttt{10} } }
  };

\node[draw=none] [above right=0cm of d0] {$\vtree_0$};
\node[draw=none] [above left=0cm of init] {$\vtree^1$};
\node[draw=none] [above right=-0.1cm of fun] {$\vtree_f$};

\node[draw=none] [right=0cm of root,align=left,minimum size=2cm] {$\vtree^3 = \vtree^2+\vtree_2 = 12+5=17$\\
where $\vtree^2 = \vtree^1+\vtree_1=2+10=12$};

%\node (n1) { \texttt{10} }
%  child { node { \texttt{min} } }
%  child {
%  	node { \texttt{10} }
%  	child {
%  	  node { \texttt{10} }
%  	  child { node { \texttt{temperature} } }
%  	  child { node [yshift=0.5cm] { \texttt{10} } }
%  	}
%  }
%  child { node { \texttt{10} }
%	  child { node [xshift=0.8cm] { \texttt{temperature} } }
%  	  child { node [xshift=0.6cm, yshift=0.5cm] { \texttt{10} } }
%  }
%  child { node { \texttt{10} } };

\node (n2) [right=5cm of root] { \texttt{10} }
  child { node { \texttt{min} } }
  child {
  	node { \texttt{15} }
  	child {
  	  node { \texttt{15} }
  	  child { node { \texttt{temperature} } }
  	  child { node [yshift=0.5cm] { \texttt{15} } }
  	}
  }
  child { node { \texttt{10} }
	  child { node [xshift=0.8cm] { \texttt{temperature} } }
  	  child { node [xshift=0.6cm, yshift=0.5cm] { \texttt{10} } }
  }
  child { node { \texttt{10} } };

\node (n3) [right=4cm of n2]  { \texttt{5} }
  child { node { \texttt{min} } }
  child {
  	node { \texttt{5} }
  	child {
  	  node { \texttt{5} }
  	  child { node { \texttt{temperature} } }
  	  child { node [yshift=0.5cm] { \texttt{5} } }
  	}
  }
  child { node { \texttt{10} }
	  child { node [xshift=0.8cm] { \texttt{temperature} } }
  	  child { node [xshift=0.6cm,yshift=0.5cm] { \texttt{10} } }
  }
  child { node { \texttt{5} } };

\node[draw=none] [above right=0cm of n2] {$\vtree_1$};
\node[draw=none] [above right=0cm of n3] {$\vtree_2$};

\end{tikzpicture}
\end{minipage}

\end{example}

Rules \ruleNameSize{[E-VAL]}, \ruleNameSize{[E-REP]}, \ruleNameSize{[E-FOLD]} are independent of the neighbour $\deviceId'$
against which the expression is computed (since $\deviceId'$ does not occur in the premises of those rules). Rules \ruleNameSize{[E-B-APP]}
and \ruleNameSize{[E-D-APP]} simply pass $\deviceId'$ through, allowing subexpressions to make use of it
(including evaluation of built-in relational sensors $\snvalue$). The neighbour device $\deviceId'$ is then non-trivially
exploited only in rules \ruleNameSize{[E-NBR]}, \ruleNameSize{[E-NBR-LOC]}, \ruleNameSize{[E-NBR-FAIL]} and \ruleNameSize{[E-R-APP-FAIL]}.

We say that a neighbour is \emph{considered} by the evaluation of a  $\foldK$-expression to mean that it contributes to the result of the expression. Because of the interplay between neighbourhood interaction and  branching (i.e., function call)  only a subset of the neighbourhood of a device might be considered by a $\foldK$-expression.

\begin{example}[Combining time evolution with neighbourhood interaction: the gradient]
\label{ex:time-nbr-gradient}
Consider the \emph{gradient} function from Example~\ref{exa:typing}.
\begin{lstlisting}[language={hfc}]
def gradient(source, metric) { // : (bool, ()->num) -> num
  rep(PositiveInfinity){ (distance) => {
    mux(source, 0.0,
      foldhood(PositiveInfinity, min, nbr{distance} + metric())
    )
  }}
}
\end{lstlisting}
The \texttt{gradient} function computes the field of minimum distances (according to \texttt{metric}) from devices where \texttt{source} is $\truevalue$. %
It uses $\repK$ to keep track of the local gradient value, which is computed by looking at the corresponding value in the neighbourhood.
If \texttt{source} is locally $\truevalue$, then the value is merely \texttt{0.0}, since it means that the device is a source;
otherwise, the gradient value is obtained by folding over neighbours (via \texttt{foldhood}) to collect the minimum value of \texttt{nbr\{distance\}+metric()}.
The repeated application of such a function, together with actual communication (formally covered in \Cref{sec-calculus-network-semantics}) consisting of the \texttt{nbr} evaluations of neighbours, makes the output field eventually converge to the correct value (minimum distances from sources).
The gradient is a fundamental building block
 for collective adaptive behaviour.
It is amenable to various implementations~\cite{ACDV-SASO2017}
 and plays a crucial role in higher-level patterns~\cite{casadei19scr}, as also shown in the case study of \Cref{sec-case}.
\end{example}

\begin{example}[Neighbourhood interaction and branching]\label{exa:NIandB-formal}
In order to illustrate the alignment process, guiding neighbour interaction through branching statements, consider the expression for a \emph{gradient avoiding obstacles}, introduced in Example \ref{exa:typing}.
\begin{lstlisting}[language={hfc}]
if (isObstacle) { PositiveInfinity } { gradient(isSource) }  // : num
\end{lstlisting}
Expanding the syntactic sugar, the $\ifK$ statement corresponds to the execution of a different anonymous function depending on the value of \texttt{isObstacle}:
\begin{lstlisting}[language={hfc}]
mux( isObstacle, () => {PositiveInfinity}, () => {gradient(isSource)} )()
\end{lstlisting}
Assume that device $\deviceId_0$ evaluates this program with respect to $\Trees = \{\envmap{\deviceId_0}{\vtree_0}, \envmap{\deviceId_1}{\vtree_1}, \envmap{\deviceId_2}{\vtree_2}\}$, where \texttt{isObstacle} is true in $\deviceId_2$ and false on the other devices.
Thus, the execution of the $\texttt{mux}$ statement produces $\funvalue_\bot = ()\, \toSymKtag{}{\texttt{gradient}(\texttt{isSource})}$ on $\deviceId_0$ and $\deviceId_1$, while it produces
$\funvalue_\top= ()\, \toSymKtag{}{\texttt{PositiveInfinity}}$ on $\deviceId_2$.

The evaluation of the main expression is performed through rule \ruleNameSize{[E-D-APP]}. First, the function to be applied is computed as the result of the \texttt{mux} expression. Then, the body $\texttt{gradient}(\texttt{isSource})$ is computed with respect to the environment $\piBof{\funvalue_\bot}{\Trees} = \{\envmap{\deviceId_0}{\piIof{2}{\vtree_0}}, \envmap{\deviceId_1}{\piIof{2}{\vtree_1}}\}$: the value-tree of device $\deviceId_2$ is removed since it corresponded to the evaluation of $\funvalue_\top$. The evaluation of $\texttt{gradient}(\texttt{isSource})$ will then require the evaluation of the \texttt{foldhood} expression, in which only devices $\deviceId_0$ and $\deviceId_1$ will be considered (since $\deviceId_2$ has already been discarded).
\end{example}

\subsection{Operational Semantics: Network Semantics}\label{sec-calculus-network-semantics}

\begin{figure}[!t]
\framebox[1\textwidth]{
$\begin{array}{l}
%%%  SYNTAX
\textbf{System configurations and action labels:}\\[5pt]
\begin{array}{lcl@{\hspace{44mm}}r}
\Field & \BNFcce &  \envmap{\overline\deviceId}{\overline\Trees}    &   {\footnotesize \mbox{value-tree field}} \\
\Activation & \BNFcce &  \envmap{\overline\deviceId}{\overline a} \text{ with } a \in \{\actOFF,\actON\}   &   \footnotesize \mbox{activation predicate} \\
\Stat & \BNFcce &  \EnviS{\Field}{\Activation}    &   \footnotesize \mbox{status} \\
\Topo & \BNFcce &  \ap{\overline\deviceId, \overline\deviceId'}    &   {\footnotesize \mbox{topology}} \\
\Sens & \BNFcce &  \envmap{\overline\deviceId}{\overline\senstate}    &   {\footnotesize \mbox{sensor state}} \\
\Envi & \BNFcce &  \EnviS{\Topo}{\Sens}    &   {\footnotesize \mbox{environment}} \\
\Cfg & \BNFcce &  \SystS{\Envi}{\Stat}    &   {\footnotesize \mbox{network configuration}} \\
\act & \BNFcce &  \deviceId+ \;\BNFmid\; \deviceId- \;\BNFmid\; \envact    &   {\footnotesize \mbox{action label}} \\
\end{array}\\
\hline\\[-8pt]
\textbf{Environment well-formedness:}\\[5pt]
\begin{array}{l}
	\wfn{\EnviS{\Topo}{\Sens}} \textrm{~~holds iff~~} \bp{\ap{\deviceId,\deviceId} \mid \deviceId \in \deviceIdSet} \subseteq \, \Topo \, \subseteq \deviceIdSet \times \deviceIdSet \text{~~where~~} \deviceIdSet = \domof{\Sens}
\end{array}\\
\hline\\[-8pt]
%%%%%%%%%%
%%%  REDUCTION RULES
\textbf{Transition rules for network evolution:} \hspace{5.5cm}
  \boxed{\nettran{\Cfg}{\act}{\Cfg}}
  \\[5pt]
%\vspace{0.5cm}
\begin{array}{c}
\netopsemRule{N-COMP}{
	\;\; \Activation(\deviceId)\!=\!\actOFF
	\quad \Trees' = \filter_\deviceId(\Field(\deviceId))
	\quad \bsopsem{\deviceId,\deviceId}{\Trees'}{\Sens(\deviceId)}{\emain}{\vtree}
	\quad \Trees\!=\!\mapupdate{\Trees'}{\envmap{\deviceId}{\vtree}}
}{
	\nettran{\SystS{\EnviS{\Topo}{\Sens}}{\EnviS{\Field}{\Activation}}}{\deviceId+}{\SystS{\EnviS{\Topo}{\Sens}}{\EnviS{\mapupdate{\Field}{\envmap{\deviceId}{\Trees}}}{\mapupdate{\Activation}{\envmap{\deviceId}{\actON}}}}}
}
\\[15pt]
\netopsemRule{N-SEND}{
	\quad \Activation(\deviceId)\!=\!\actON
	\quad \overline\deviceId  = \bp{\deviceId' ~\mid~ \deviceId \Topo \deviceId'}
	\quad \vtree = \Field(\deviceId)(\deviceId)
	\quad \Trees = \envmap{\deviceId}{\vtree}
}{
	\nettran{\SystS{\EnviS{\Topo}{\Sens}}{\EnviS{\Field}{\Activation}}}{\deviceId-}{\SystS{\EnviS{\Topo}{\Sens}}{\EnviS{\globalupdate{\Field}{\envmap{\overline\deviceId}{\Trees}}}{\mapupdate{\Activation}{\envmap{\deviceId}{\actOFF}}}}}
}
\\[15pt]
\netopsemRule{N-ENV}{
	\quad \wfn{\Envi'}
	\quad
	\Envi'=\EnviS{\Topo}{\envmap{\overline\deviceId}{\overline\senstate}}
	 \quad
	\Field_0=\envmap{\overline\deviceId}{\emptyset}
	\quad
	\Activation_0=\envmap{\overline\deviceId}{\actOFF}
}{
	\nettran{\SystS{\Envi}{\Field,\Activation}}{\envact}{\SystS{\Envi'}{\mapupdate{\Field_0}{\Field}, \mapupdate{\Activation_0}{\Activation}}}
}\\
\end{array}\\
\end{array}$}
\caption{Small-step operational semantics for network evolution.} \label{fig:networkSemantics}
\end{figure}

We now provide an operational semantics for the evolution of whole networks, namely, for modelling the distributed evolution of computational fields over time. The semantics is given as a nondeterministic, small-step transition system on network configurations $\Cfg$. This semantics has already been given for \HFC{} in \cite{abdpv:lmcs:share}, with the only difference of referring to the \HFC{} device semantics instead of the \FSCAFI{} device semantics. Figure \ref{fig:networkSemantics} (top) defines key syntactic elements to this end:
\begin{itemize}
	\item
	$\Field$ is a computational field (called \emph{value-tree field}) that models the overall state of the computation as a map from device identifiers to the value-tree environments that are locally stored in the corresponding devices.
	%From it, we can define the field $\fvalue$ summarising the current status of the network as the map from device identifiers to value-trees: $\fvalue(\deviceId) = \Field(\deviceId)(\deviceId)$.
	\item
	$\Activation$ is an \emph{activation predicate} specifying whether each device is currently activated (i.e., is performing a computation round).
	\item
	$\Stat$ (a pair of value-tree field and activation predicate) models the overall computation \emph{status}.
	\item
	$\Topo$ models network \emph{topology} as a directed neighbouring graph, i.e.~a reflexive neighbouring relation $\Topo \subseteq \deviceIdSet \times \deviceIdSet$ so that $\deviceId \Topo \deviceId$ for each $\deviceId \in \deviceIdSet$.
	\item
	$\Sens$ models (distributed) \emph{sensor state}, as a map from device identifiers to (local) sensors representations (i.e., sensor name/value maps denoted as $\senstate$).
	\item
	$\Envi$ (a pair of topology and sensor state) models the network \emph{environment}.
	\item
	$\Cfg$ (a pair of status and environment) models a whole \emph{network configuration}.
\end{itemize}

We use the following notation for maps. Let $\envmap{\overline x}{y}$ denote a map sending each element in the sequence $\overline x$ to the same element $y$. Let $\mapupdate{m_0}{m_1}$ denote the map with domain $\domof{m_0} \cup \domof{m_1}$ coinciding with $m_1$ in the domain of $m_1$ and with $m_0$ otherwise. Let $\globalupdate{m_0}{m_1}$ (where $m_i$ are maps to maps) denote the map with the \emph{same domain} as $m_0$ made of $\envmap{x}{\mapupdate{m_0(x)}{m_1(x)}}$ for all $x$ in the domain of $m_1$, $\envmap{x}{m_0(x)}$ otherwise.
The notation $\filter_\deviceId(\cdot)$ used in rule \ruleNameSize{[N-COMP]}, Figure \ref{fig:networkSemantics} (bottom), models a filtering operation that clears out old stored value-trees from $\Field(\deviceId)$, implicitly based on space/time tags.\footnote{For example, the filter may remove value-trees that were stored before $t - \Delta t$, where $t$ is the time of the current firing and $\Delta t$ is a decay parameter of the filter.} Notice that this mechanism allows messages to persist across rounds.

We define network operational semantics in terms of small-steps transitions  $\nettran{\Cfg}{\act}{\Cfg'}$ of three kinds: firing starts on a given device (for which $\act$ is $\deviceId+$ where $\deviceId$ is the corresponding device identifier), firing ends and messages are sent on a given device (for which $\act$ is $\deviceId-$), and environment changes, where $\act$ is the special label $\envact$. This is formalised in Figure \ref{fig:networkSemantics} (bottom).

Rule \ruleNameSize{[N-COMP]} (available for sleeping devices, i.e., with $\Activation(\deviceId) = \actOFF$, and setting them to executing, i.e., $\Activation(\deviceId) = \actON$) models a computation round at device $\deviceId$: it takes the local value-tree environment filtered out of old values $\Trees' = \filter_\deviceId(\Field(\deviceId))$; then by the single device semantics it obtains the device's value-tree $\vtree$, which is used to update the system configuration of  $\deviceId$ to $\Trees = \mapupdate{\Trees'}{\envmap{\deviceId}{\vtree}}$.
Notice that expression $\emain$ is always evaluated against the device $\deviceId$ itself (that is, against no neighbour), and that local sensors $\Sens(\deviceId)$ are used by the auxiliary function $\builtinop{\bname}{\deviceId,\deviceId'}{\Trees,\Sens(\deviceId)}$ that gives the semantics to the built-in functions.
Furthermore, although this rule updates a device's system configuration instantaneously, it models computations taking an arbitrarily long time, since the update is not visible until the following rule \ruleNameSize{[N-SEND]}. Notice also that all values used to compute $\vtree$ are locally available (at the beginning of the computation), thus allowing for a fully-distributed implementation without global knowledge.

Rule \ruleNameSize{[N-SEND]} (available for running devices with $\Activation(\deviceId) = \actON$, and setting them to non-running) models the message sending happening at the end of a computation round at a device $\deviceId$. It takes the local value-tree $\vtree = \Field(\deviceId)(\deviceId)$ computed by last rule \ruleNameSize{[N-COMP]}, and uses it to update neighbours' $\overline\deviceId$ values of $\Field(\overline\deviceId)$. Notice that the usage of $\Activation$ ensures that occurrences of rules \ruleNameSize{[N-COMP]} and \ruleNameSize{[N-SEND]} for a device are alternated.

Rule \ruleNameSize{[N-ENV]} takes into account the change of the environment to a new \emph{well-formed} environment $\Envi'$---environment well-formedness is specified by the predicate $\wfn{\Envi}$ in Figure~\ref{fig:networkSemantics} (middle)---thus modelling node mobility as well as changes in environmental parameters. Let $\overline\deviceId$ be the domain of $\Envi'$. We first construct a value-tree field $\Field_0$ and an activation predicate $\Activation_0$ associating to all the devices of $\Envi'$ the empty context $\emptyset$ and the $\actOFF$ activation. Then, we adapt the existing value-tree field $\Field$ and activation predicate $\Activation$ to the new set of devices: $\mapupdate{\Field_0}{\Field}$, $\mapupdate{\Activation_0}{\Activation}$ automatically handles removal of devices, mapping of new devices to the empty context and $\actOFF$ activation, and retention of existing contexts and activation in the other devices. We remark that this rule is also used to model communication failure as topology changes.

\revisedMinor{
\begin{rem}[Locality of interactions]
	We remark that although the network semantics is given from a \emph{global} perspective, it only allows for interactions that are local in nature. Rule \ruleNameSize{[N-COMP]} can be understood as happening on device $\deviceId$, with respect to locally-available knowledge and without interactions with other devices. Rule \ruleNameSize{[N-SEND]} represents the broadcast of a value locally available on $\deviceId$, which is then received by neighbour devices and stored in their local storage. Rule \ruleNameSize{[N-ENV]} does not represent an interaction at all, but only a change in the set of possible interactions.
\end{rem}
}

\begin{example}[Network evolution]\label{exa:NetEvo}
Consider the program in Example~\ref{exa:NI-formal}:
\[
\foldK(2,\texttt{+}, \texttt{min}(\nbrK\{\texttt{temperature}()\}, \texttt{temperature}()) )
\]
and let $\vtree^n = \mkvt{n}{\texttt{min}, \mkvt{n}{\mkvt{n}{\mathtt{temperature},n}}, \mkvt{n}{\mathtt{temperature},n}, n}$ be the result of evaluation of $\texttt{min}(\nbrK\{\texttt{temperature}()\}, \texttt{temperature}())$ in a device where $\texttt{temperature}() = n$ (with respect to the device itself as neighbour).

We start from a configuration $\Cfg_0 = \SystS{\EnviS{\Topo}{\Sens}}{\EnviS{\Field_0}{\Activation_0}}$ with three devices $\overline\deviceId$, so that $\Topo = \bp{(\deviceId_i, \deviceId_j) ~ \mid ~ i,j \le 3}$ (all devices are connected), $\Field_0 = \envmap{\overline\deviceId}{\emptyset}$ (devices do not hold any information), $\Activation_0 = \envmap{\overline\deviceId}{\actOFF}$ (devices are not computing) and
\[
\Sens = \envmap{\deviceId_1}{\{t=10\}}, \envmap{\deviceId_2}{\{t=15\}}, \envmap{\deviceId_3}{\{t=5\}}
\]
(temperatures are as in Example~\ref{exa:NI-formal}).

After transitions $\nettran{\Cfg_0}{\deviceId_2^+}{\Cfg} \nettran{}{\deviceId_2^-}{\Cfg_1}$, the computational field $\Field_0$ is updated by sending the result $\vtree_0 = \mkvt{2}{2, \texttt{+}, \vtree^{15}}$ of the computation of $\deviceId_2$ (with respect to its empty environment) to every device, obtaining $\Field_1 = \envmap{\overline\deviceId}{\{\envmap{\deviceId_2}{\vtree_0}\}}$.
Then, other transitions take place: $\nettran{\Cfg_1}{\deviceId_3^+}{\Cfg} \nettran{}{\deviceId_3^-}{\Cfg_2}$, where $\Field_1$ is further updated with the result $\vtree_1 = \mkvt{7}{2, \texttt{+}, \vtree^{5}}$ of the computation of $\deviceId_3$ (with respect to the information received from $\deviceId_2$), obtaining $\Field_2 = \envmap{\overline\deviceId}{\{\envmap{\deviceId_2}{\vtree_0}, \envmap{\deviceId_3}{\vtree_1}\}}$.
Finally, transitions $\nettran{\Cfg_2}{\deviceId_1^+}{\Cfg} \nettran{}{\deviceId_1^-}{\Cfg_3}$ happen as described in Example~\ref{exa:NI-formal}, producing $\Field_3 = \envmap{\overline\deviceId}{\{\envmap{\deviceId_1}{\vtree_2}, \envmap{\deviceId_2}{\vtree_0}, \envmap{\deviceId_3}{\vtree_1}\}}$ where $\vtree_2 = \mkvt{17}{2, \texttt{+}, \vtree^{10}}$.

Lastly, a transition $\nettran{\Cfg_3}{\envact}{\Cfg_4}$ may happen, lowering temperatures, deleting device $\deviceId_2$, inserting device $\deviceId_4$, and disconnecting device $\deviceId_1$ from device $\deviceId_3$. The result is configuration $\Cfg_4 = \SystS{\EnviS{\Topo'}{\Sens'}}{\EnviS{\Field_4}{\Activation_4}}$ where:
\begin{align*}
\Topo' = ~& \ap{\deviceId_1, \deviceId_4}, \ap{\deviceId_3, \deviceId_4}, \ap{\deviceId_4, \deviceId_1}, \ap{\deviceId_4, \deviceId_3} \\
\Sens' = ~&\envmap{\deviceId_1}{\{t=9\}}, \envmap{\deviceId_1}{\{t=4\}}, \envmap{\deviceId_4}{\{t=1\}} \\
\Field_4 = ~&\envmap{\deviceId_1}{\{\envmap{\deviceId_1}{\vtree_2}, \envmap{\deviceId_2}{\vtree_0}, \envmap{\deviceId_3}{\vtree_1}\}}, \envmap{\deviceId_3}{\{\envmap{\deviceId_1}{\vtree_2}, \envmap{\deviceId_2}{\vtree_0}, \envmap{\deviceId_3}{\vtree_1}\}}, \envmap{\deviceId_4}{\emptyset}, \\
\Activation_4 = ~&\envmap{\deviceId_1}{\actOFF}, \envmap{\deviceId_3}{\actOFF}, \envmap{\deviceId_4}{\actOFF}.
\end{align*}
Notice that devices $\deviceId_1$, $\deviceId_3$ are not aware yet of the disappearance of $\deviceId_2$, nor of their disconnection. When one of them will fire, the filter $\filter(\cdot)$ may be able to remove the obsolete values from the corresponding value-tree environments.
\end{example}

\subsection{Type Preservation in \FSCAFI{}} \label{ssec:soundness}

In this section we show that the evaluation rules for \FSCAFI{} are deterministic and preserve types, provided that the value-tree environment used for the evaluation is coherent with the expression being evaluated according to the following definition.

\begin{defi}[Well Formed Value Tree] \label{def:Tcoherence}
	Given a closed expression $\e$, a local-type-scheme environment $\LTStypEnv$, a type environment $\TtypEnv=\overline{\xname}:\overline{\type}$, and a type $\type$ such that $\expTypJud{\LTStypEnv}{\TtypEnv}{\e}{\type}$ holds, the set $\coherent{\e}{\type}{\LTStypEnv;\TtypEnv}$ of the \emph{well-formed value-trees} for $\e$ is inductively defined as follows. $\vtree \in \coherent{\e}{\type}{\LTStypEnv;\TtypEnv}$ if and only if $\anyvalue = \vrootOf{\vtree}$ has type $T$ (i.e.~$\expTypJud{\LTStypEnv}{\emptyset}{\anyvalue}{\type}$) and
	\begin{itemize}
		\item
		if $\e$ is a value, $\vtree$ is of the form $\mkvtree{}{\anyvalue}{}$;
		\item
		if $\e = \nbrK\{\e_1\}$, $\vtree$ is of the form $\mkvtree{}{\anyvalue}{\vtree_1}$ where $\vtree_1 \in \coherent{\e_1}{\type}{\LTStypEnv;\TtypEnv}$;
		\item
		if $\e = \repK(\e_1)\{\e_2\}$, $\vtree$ is of the form $\mkvtree{}{\anyvalue}{\vtree_1,\vtree_2}$ where $\vtree_1 \in \coherent{\e_1}{\type}{\LTStypEnv;\TtypEnv}$ and $\vtree_2 \in \coherent{\e_2(\xname)}{\type}{\LTStypEnv;\TtypEnv, \xname:\type}$;
		\item
		If $\e = \foldK(\e_1, \e_2, \e_3)$, $\vtree$ is of the form $\mkvtree{}{\anyvalue}{\vtree_1,\vtree_2,\vtree_3}$ where $\overline\vtree \in \coherent{\overline\e}{\type, (\type,\type) \to \type, \type}{\LTStypEnv;\TtypEnv}$;
		\item
		if $\e = \e'(\overline\e)$ and $\expTypJud{\LTStypEnv}{\TtypEnv}{\e'}{\type'}$, $\expTypJud{\LTStypEnv}{\TtypEnv}{\overline\e}{\overline\type}$, then $\vtree$ is of the form $\mkvtree{}{\anyvalue}{\vtree',\overline\vtree,\vtree''}$ where $\vtree' \in \coherent{\e'}{\type'}{\LTStypEnv;\TtypEnv}$, $\overline\vtree \in \coherent{\overline\e}{\overline\type}{\LTStypEnv;\TtypEnv}$, and either:
		\begin{itemize}
			\item $\funvalue = \vrootOf{\vtree'}$ is a built-in function and $\vtree'' = \mkvtree{}{\anyvalue}{}$,
			\item $\funvalue$ is not a built-in function and $\vtree'' \in \coherent{\body{\funvalue}}{\type}{\LTStypEnv;\TtypEnv, \args{\funvalue} : \overline\type}$.
		\end{itemize}
	\end{itemize}
	Similarly, the set of \emph{well-formed value-tree environments} $\coherentEnv{\e}{\type}{\LTStypEnv;\TtypEnv}$ is the set of $\Trees = \envmap{\overline\deviceId}{\overline\vtree}$ such that $\overline\vtree \in \coherent{\e}{\type}{\LTStypEnv;\TtypEnv}$.
\end{defi}

In other words, the above definition demands value-trees to be plausible outcomes of the evaluation of $\e$.

\begin{lem}[Computation Determinism] \label{lem:completeness}
	Let $\e$ be a well-typed closed expression and $\Trees \in \coherentEnv{\e}{\type}{\LTStypEnv;\TtypEnv}$. Then for all device identifiers $\deviceId$, $\deviceId'$ and sensor state $\senstate$:
	\begin{enumerate}
		\item
		$\bsopsemFAIL{\deviceId,\deviceId}{\Trees}{\senstate}{\e}$ cannot hold.
		\item
		There is at most one derivation of the kind $\bsopsem{\deviceId,\deviceId'}{\Trees}{\senstate}{\e}{\vtree}$ or $\bsopsemFAIL{\deviceId,\deviceId'}{\Trees}{\senstate}{\e}$.
%		If the evaluation of $\e$ does not diverge, then $\bsopsemFAIL{\deviceId,\deviceId'}{\Trees}{\senstate}{\e}$
%or $\bsopsem{\deviceId,\deviceId'}{\Trees}{\senstate}{\e}{\vtree}$ for a unique $\vtree$.
%		\item
%		The rules in Figures \ref{fig:deviceSemantics} and \ref{fig:deviceSemanticsFAIL} are all mutually exclusive.
	\end{enumerate}
\end{lem}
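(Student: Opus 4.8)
The plan is to prove both items at once, by induction on a derivation $D$ --- of either $\bsopsem{\deviceId,\deviceId'}{\Trees}{\senstate}{\e}{\vtree}$ or $\bsopsemFAIL{\deviceId,\deviceId'}{\Trees}{\senstate}{\e}$, with $\e$ well-typed and closed and $\Trees \in \coherentEnv{\e}{\type}{\LTStypEnv;\TtypEnv}$ --- taking as induction hypothesis that both items, quantified over all device identifiers and sensor states, hold for every proper sub-derivation of $D$. Plain structural induction on $\e$ will not work, because \ruleNameSize{[E-D-APP]} recurses into a function body and \ruleNameSize{[E-REP]} into $\e_2(\anyvalue_0)$, neither of which is a syntactic subexpression of $\e$, although each is a proper sub-derivation of $D$. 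I would also fold into the induction a subject-reduction clause --- any value-tree that $D$ assigns to $\e$ lies in $\coherent{\e}{\type}{\LTStypEnv;\TtypEnv}$ --- because, as explained below, this is exactly what keeps the environments threaded into the recursive premises coherent; the clause is harmless to carry and incidentally yields type preservation for the device semantics.

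First I would isolate the one auxiliary fact used throughout: \emph{alignment preserves coherence}. Precisely, if $\Trees \in \coherentEnv{\e}{\type}{\LTStypEnv;\TtypEnv}$ then, for the $i$-th immediate subexpression $\e_i$ of $\e$ selected by $\e$'s typing rule (with its inferred type $\type_i$), one has $\piIof{i}{\Trees} \in \coherentEnv{\e_i}{\type_i}{\LTStypEnv;\TtypEnv}$; and if $\e = \e_0(\overline{\e})$ with $\expTypJud{\LTStypEnv}{\TtypEnv}{\overline{\e}}{\overline{\type}}$ and $\funvalue$ is a non-built-in function value, then $\piBof{\funvalue}{\Trees} \in \coherentEnv{\applySubstitution{\body{\funvalue}}{\substitution{\args{\funvalue}}{\overline{\anyvalue}}}}{\type}{\LTStypEnv;\TtypEnv}$ for all closed values $\overline{\anyvalue}$ of types $\overline{\type}$. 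This is immediate from \Cref{def:Tcoherence} --- which was designed to mirror the recursive structure of the evaluation rules --- together with a routine substitution lemma (for typing and for coherence) and a canonical-forms remark (a value of arrow type is a function value, so $\body{\cdot}$ makes sense on it). Two trivialities go here as well: $\emptyset \in \coherentEnv{\e}{\type}{\LTStypEnv;\TtypEnv}$ vacuously, and on a coherent $\Trees$ the maps $\piI{i}$ and $\piB{\funvalue}$ discard no device (each value-tree already has the right shape), so $\domof{\piIof{i}{\Trees}} = \domof{\Trees}$ and similarly for $\piB{\funvalue}$ after its name filter.

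Then the induction is a rule-by-rule case analysis on the form of $\e$. For item (1): a failure $D$ with $\deviceId' = \deviceId$ cannot end in \ruleNameSize{[E-NBR-FAIL]} or \ruleNameSize{[E-R-APP-FAIL]} (their side conditions require $\deviceId \neq \deviceId'$), nor in \ruleNameSize{[E-APP-ARG-FAIL]} or \ruleNameSize{[E-D-APP-FAIL]}, whose failing premise would then be a failure with equal devices over a coherent environment, contradicting item (1) of the hypothesis. For item (2): a value forces \ruleNameSize{[E-VAL]} and $\repK(\e_1)\{\e_2\}$ forces \ruleNameSize{[E-REP]} (its self-directed premises cannot fail by item (1), and $\e_2(\anyvalue_0)$ is well-typed, closed --- using the subject-reduction clause to type $\anyvalue_0$ --- with coherent environment, so item (2) of the hypothesis pins the sub-derivation); $\nbrK\{\e\}$ has the three rules \ruleNameSize{[E-NBR]}, \ruleNameSize{[E-NBR-LOC]}, \ruleNameSize{[E-NBR-FAIL]} with mutually exclusive, exhaustive guards on $(\deviceId,\deviceId',\domof{\Trees})$; an application $\e_0(\overline{\e})$ evaluates its operator against $\deviceId$ itself, which by item (1) cannot fail and by item (2) of the hypothesis yields a unique $\funvalue$, and then whether $\funvalue$ is a built-in, a relational built-in with $\deviceId'$ in or out of $\domof{\piBof{\funvalue}{\Trees}}\cup\{\deviceId\}$, which argument (if any) fails first, and whether the substituted body fails --- all decided by the hypothesis over environments kept coherent by the auxiliary fact --- selects exactly one of \ruleNameSize{[E-B-APP]}, \ruleNameSize{[E-D-APP]}, \ruleNameSize{[E-R-APP-FAIL]}, \ruleNameSize{[E-APP-ARG-FAIL]}, \ruleNameSize{[E-D-APP-FAIL]} (the output value of \ruleNameSize{[E-B-APP]} being determined since $\builtinop{\bname}{\deviceId,\deviceId'}{\cdot,\cdot}$ denotes a function); and $\foldK(\e_1,\e_2,\e_3)$ admits only \ruleNameSize{[E-FOLD]}, in which the evaluations of $\e_1,\e_2,\e_3$ against $\deviceId$ and the aggregator applications against $\deviceId$ with empty environment cannot fail and are unique by items (1)--(2), the set of \emph{considered} neighbours in $\domof{\piIof{3}{\Trees}}\setminus\{\deviceId\}$ is fixed by item (2) applied to each neighbour, and their order is fixed by the assumed total order on identifiers. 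In each case the root rule of $D$ and all of its premises are uniquely determined, so $D$ agrees with any competing derivation; and along the way the produced value-tree is seen to have the shape and root type demanded by \Cref{def:Tcoherence}, discharging the subject-reduction clause.

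The hard part will be the application and $\foldK$ cases: one must check that the environments passed to the recursive premises --- $\piIof{i}{\Trees}$ and, above all, $\piBof{\funvalue}{\Trees}$ after substituting the \emph{computed} argument values into the body --- are still coherent, which is precisely why the subject-reduction clause has to ride along with the induction and why one has to reason about branching, since neighbours may have recorded different function values there, so $\piB{\funvalue}$ both filters on the function name and descends below it. Everything else is routine rule inspection; the only other thing to get right is carrying out the induction on derivations rather than on expression structure.
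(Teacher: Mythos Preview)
Your proof is correct and follows the paper's approach: induction on derivations with a rule-by-rule case analysis, checking that the side conditions of the applicable rules are mutually exclusive and that the premises are determined by the conclusion. Two presentational differences are worth noting. First, the paper dispatches item~(1) by a direct global argument rather than folding it into the induction: it observes that the only failure-originating rules \ruleNameSize{[E-NBR-FAIL]} and \ruleNameSize{[E-R-APP-FAIL]} carry the side condition $\deviceId \neq \deviceId'$, that the failure-propagation rules \ruleNameSize{[E-APP-ARG-FAIL]} and \ruleNameSize{[E-D-APP-FAIL]} leave $\deviceId'$ unchanged, and that the only construct introducing a fresh $\deviceId'$ is $\foldK$, which absorbs failures instead of propagating them. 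Second, the paper does \emph{not} bundle subject reduction into this lemma but proves it separately as a later theorem; consequently its proof of item~(2) simply writes ``by inductive hypothesis'' without explicitly checking that the recursive premises---in particular the aggregator calls $\funvalue(\vrootOf{\vtree^i},\vrootOf{\vtree_i})$ in \ruleNameSize{[E-FOLD]} and the substituted body in \ruleNameSize{[E-D-APP]}---are well-typed with coherent environments. Your choice to carry the coherence clause along is a sound refinement that makes those applications of the inductive hypothesis honest, and it incidentally yields type preservation for free; the paper's separation is lighter but leaves that check implicit.
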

\begin{proof}
	\revised{The proof follows from standard induction on rules (see Appendix \ref{apx:preservation}).}
\end{proof}

By this lemma, evaluation does not result on \FAIL{} when it is performed relative to the current device (as it is the case for main expressions), and rules are deterministic. Furthermore, the evaluation rules respect the types given in \Cref{fig:SurfaceTyping}, provided that the \emph{built-in interpretations respect the given types}. Formally, given $\bname$ such that $\expTypJud{\OStypEnv}{\emptyset}{\bname}{\overline\type \to \type}$ and any $\expTypJud{\OStypEnv}{\emptyset}{\overline\anyvalue}{\overline\type}$, $\Trees = \envmap{\overline\deviceId}{\mkvtree{}{\overline\anyvalue'}{}}$ with $\expTypJud{\OStypEnv}{\emptyset}{\overline\anyvalue'}{\type}$, $\deviceId' \in \{\deviceId,\overline\deviceId\}$, then we require $\builtinop{\bname}{\Trees,\senstate}{\deviceId,\deviceId'}$ to be a value of type $\type$.

\begin{thm}[Type Preservation] \label{thm:soundness}
	Assume that the interpretation of built-in operators respects the given types.
	Let $\TtypEnv=\overline{\xname}:\overline{\type}$ and $\expTypJud{\LTStypEnv}{\emptyset}{\overline{\anyvalue}}{\overline{\type}}$, so that $\lengthOf{\overline{\anyvalue}}=\lengthOf{\overline{\xname}}$.
	If $\expTypJud{\LTStypEnv}{\TtypEnv}{\e}{\type}$, $\Trees\in \coherentEnv{\e}{\type}{\LTStypEnv;\TtypEnv}$ and $\bsopsem{\deviceId,\deviceId'}{\Trees}{\senstate}{\applySubstitution{\e}{\substitution{\overline{\xname}}{\overline{\anyvalue}}}}{\vtree}$, then $\vtree \in \coherent{\e}{\type}{\LTStypEnv;\TtypEnv}$.
	%Let $\e$ be a well-typed closed expression of type $\type$ and $\Trees$ be coherent with $\e$. Then, the evaluation of $\e$ produces exactly one of the following outcomes: \textit{(i)} it produces a value-tree $\vtree$ coherent with $\e$, \textit{(ii)} it fails, or \textit{(iii)} it does not terminate.
\end{thm}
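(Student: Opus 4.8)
The plan is to prove the statement by induction on the derivation of $\bsopsem{\deviceId,\deviceId'}{\Trees}{\senstate}{\applySubstitution{\e}{\substitution{\overline{\xname}}{\overline{\anyvalue}}}}{\vtree}$ (which is finite, since $\vtree$ is produced), with a case split on the last rule of Figure~\ref{fig:deviceSemantics} matched against the inversion of $\expTypJud{\LTStypEnv}{\TtypEnv}{\e}{\type}$. Before the case analysis I would record three auxiliary facts. \textbf{(a)} Typing of a closed term does not depend on the variable environment, hence values --- which are closed, anonymous functions included --- retain their type when $\TtypEnv$ is shrunk or enlarged, and more generally $\TtypEnv$ may be weakened. \textbf{(b)} A type-substitution lemma: $\expTypJud{\LTStypEnv}{\TtypEnv}{\e}{\type}$ is preserved under uniform instantiation of type variables; combined with \ruleNameSize{[T-FUNCTION]}/\ruleNameSize{[T-A-FUN]}, this yields that whenever a (possibly recursive) declared or anonymous function $\funvalue$ carries an arrow type $(\overline\type) \to \type$, its body is typeable as $\expTypJud{\LTStypEnv}{\args{\funvalue}:\overline\type}{\body{\funvalue}}{\type}$ (the monomorphic recursive self-reference of \ruleNameSize{[T-FUNCTION]} reconciling with the scheme recorded in $\LTStypEnv$). \textbf{(c)} A projection lemma: if $\Trees \in \coherentEnv{\e}{\type}{\LTStypEnv;\TtypEnv}$, then $\piIof{i}{\Trees}$ lies in the well-formedness set of the $i$-th subexpression of $\e$ at the type prescribed by the matching typing rule (for instance $\coherent{\e_1}{\type}{\LTStypEnv;\TtypEnv}$ and $\coherent{\e_2(\xname)}{\type}{\LTStypEnv;\TtypEnv,\xname:\type}$ for $\repK(\e_1)\{\e_2\}$; and $\e_1,\e_2,\e_3$ at $\type$, $(\type,\type)\to\type$, $\type$ for $\foldK$), and $\piBof{\funvalue}{\Trees} \in \coherentEnv{\body{\funvalue}}{\type}{\LTStypEnv;\TtypEnv,\args{\funvalue}:\overline\type}$ when $\e = \e'(\overline\e)$ with $\expTypJud{\LTStypEnv}{\TtypEnv}{\e'}{(\overline\type)\to\type}$ and the first branch of the relevant value-trees has root $\funvalue$ (function values are compared by name, so the body is determined). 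All three are immediate unfoldings of Definition~\ref{def:Tcoherence} and of the functions $\rho$, $\piI{}$, $\piB{}$; I would also note that the empty environment lies in $\coherentEnv{\e}{\type}{\LTStypEnv;\TtypEnv}$ for every well-typed $\e$.

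Then the case analysis. For \ruleNameSize{[E-VAL]}, $\applySubstitution{\e}{\substitution{\overline\xname}{\overline\anyvalue}}$ is a value $\anyvalue$ with $\vtree = \mkvtree{}{\anyvalue}{}$ and $\anyvalue$ of type $\type$ by inversion plus (a); here I would patch Definition~\ref{def:Tcoherence} with the missing clause for $\e$ a variable, naturally read as the set of one-node trees whose root has the stated type --- consistent with substitution turning the variable into exactly such a value. For \ruleNameSize{[E-NBR]}, $\vtree = \Trees(\deviceId')$, which is already in $\coherent{\nbrK\{\e_1\}}{\type}{\LTStypEnv;\TtypEnv}$ because $\Trees \in \coherentEnv{\nbrK\{\e_1\}}{\type}{\LTStypEnv;\TtypEnv}$ by hypothesis. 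For \ruleNameSize{[E-NBR-LOC]}, the projection lemma gives coherence of $\piIof{1}{\Trees}$ for $\e_1$, the induction hypothesis gives $\vtree_1 \in \coherent{\e_1}{\type}{\LTStypEnv;\TtypEnv}$, and $\mkvtree{}{\vrootOf{\vtree_1}}{\vtree_1}$ has the required shape. For \ruleNameSize{[E-B-APP]}/\ruleNameSize{[E-D-APP]}, invert \ruleNameSize{[T-APP]}, use the projection lemma for $\piIof{1}{\Trees}$ and the $\piIof{i+1}{\Trees}$, and apply the induction hypothesis to the operator and argument subderivations, obtaining that $\funvalue = \vrootOf{\vtree'}$ has type $(\overline\type)\to\type$ and each argument value has its type. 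If $\funvalue$ is a built-in $\bname$, the assumption that interpretations respect types --- together with the observation that $\piBof{\bname}{\Trees}$ has precisely the one-node shape that assumption requires (projection lemma) --- gives that $\anyvalue = \builtinop{\bname}{\deviceId,\deviceId'}{\piBof{\bname}{\Trees},\senstate}(\vrootOf{\overline\vtree})$ is a value of type $\type$, and the result tree has the built-in shape of Definition~\ref{def:Tcoherence}. If $\funvalue$ is declared or anonymous, (b) gives $\expTypJud{\LTStypEnv}{\args{\funvalue}:\overline\type}{\body{\funvalue}}{\type}$, the projection lemma gives $\piBof{\funvalue}{\Trees} \in \coherentEnv{\body{\funvalue}}{\type}{\LTStypEnv;\args{\funvalue}:\overline\type}$, and the induction hypothesis on the body subderivation --- whose evaluated expression is exactly $\applySubstitution{\body{\funvalue}}{\substitution{\args{\funvalue}}{\vrootOf{\overline\vtree}}}$, with argument values of type $\overline\type$ --- yields a coherent subtree of return type $\type$; assembling, $\vtree \in \coherent{\e'(\overline\e)}{\type}{\LTStypEnv;\TtypEnv}$. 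For \ruleNameSize{[E-REP]}, invert \ruleNameSize{[T-REP]}; $\vtree_1$ is covered by the induction hypothesis on $\e_1$, and for $\vtree_2$ I would rewrite the evaluated body $(\applySubstitution{\e_2}{\substitution{\overline\xname}{\overline\anyvalue}})(\anyvalue_0)$ as $\applySubstitution{(\e_2(\xname))}{\substitution{\overline\xname,\xname}{\overline\anyvalue,\anyvalue_0}}$ with $\xname$ fresh, note that $\anyvalue_0$ has type $\type$ (it is $\vrootOf{\vtree_1}$ or the root of the stored second subtree, of type $\type$ by coherence of $\Trees$), and apply the induction hypothesis with expression $\e_2(\xname)$ --- typeable at $\type$ under $\TtypEnv,\xname:\type$ --- and environment $\piIof{2}{\Trees}$, coherent by the projection lemma; the assembled tree matches the $\repK$ clause of Definition~\ref{def:Tcoherence}.

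For \ruleNameSize{[E-FOLD]}, invert \ruleNameSize{[T-FOLD]}; the induction hypothesis gives $\vtree^1 \in \coherent{\e_1}{\type}{\LTStypEnv;\TtypEnv}$, $\vtree_f \in \coherent{\e_2}{(\type,\type)\to\type}{\LTStypEnv;\TtypEnv}$ (so $\funvalue = \vrootOf{\vtree_f}$ has that arrow type), $\vtree_0 \in \coherent{\e_3}{\type}{\LTStypEnv;\TtypEnv}$, and --- since the hypothesis is over arbitrary $\deviceId'$ --- each neighbour-evaluation $\vtree_i$ ($1\le i\le m$) is likewise coherent for $\e_3$, hence $\vrootOf{\vtree_i}$ has type $\type$ (the failing neighbours contribute nothing and produce no value-tree). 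A short inner induction on $i$ over the aggregation steps $\bsopsem{\deviceId,\deviceId}{\emptyset}{\senstate}{\funvalue(\vrootOf{\vtree^i},\vrootOf{\vtree_i})}{\vtree^{i+1}}$ --- using that the application is well-typed (an arrow type against two $\type$-arguments), that the empty environment is coherent, and the (outer) induction hypothesis again, which also covers the case where $\funvalue$ is a built-in such as $\texttt{min}$ --- shows each $\vrootOf{\vtree^{i}}$ has type $\type$, hence so does $\vrootOf{\vtree^{m+1}}$; the result $\mkvtree{}{\vrootOf{\vtree^{m+1}}}{\vtree^1,\vtree_f,\vtree_0}$ is then in $\coherent{\foldK(\e_1,\e_2,\e_3)}{\type}{\LTStypEnv;\TtypEnv}$. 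I expect the two delicate points to be: the substitution bookkeeping --- the operational rules carry a substitution and either introduce a fresh one ($\repK$, the aggregator steps of $\foldK$) or re-substitute ($\repK$, \ruleNameSize{[E-D-APP]}) while Definition~\ref{def:Tcoherence} speaks of the bare subexpression plus a type environment for its free variables, so every inductive appeal must re-package the evaluated expression as $\applySubstitution{\e^\ast}{\substitution{\overline{\xname}^\ast}{\overline{\anyvalue}^\ast}}$ with matching typing and coherence data; and the Hindley--Milner polymorphism in \ruleNameSize{[E-D-APP]}, i.e.\ correlating the arrow type recovered from the function value's well-formedness with the type scheme of the (possibly recursive) declared function, which is exactly what auxiliary fact (b) is designed to supply. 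The absence of a variable clause in Definition~\ref{def:Tcoherence} is a further minor gap to be patched as above.
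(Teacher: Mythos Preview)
Your proposal is correct and follows essentially the same route as the paper: induction on the evaluation derivation, matching each operational rule against its typing counterpart and the corresponding clause of Definition~\ref{def:Tcoherence}, with the paper's Substitution and Weakening lemmas (Lemmas~\ref{lem-substitution} and~\ref{lem-weakening}) playing the role of your auxiliary facts (a) and part of (b), and your projection lemma (c) left implicit in the paper's appeal to ``$\Trees$ is coherent with $\e$''. Your observation that Definition~\ref{def:Tcoherence} is stated for ``closed'' $\e$ yet carries a non-trivial $\TtypEnv$ and recurses into open subexpressions (e.g.\ $\e_2(\xname)$ in the $\repK$ clause), and hence lacks a clause for bare variables, is a genuine minor gap in the paper that your proposed patch resolves in the natural way.
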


Notice that, since the evaluation of $\e$ produces a value-tree which is coherent with $\e$, the value-tree environment $\Trees$ can be proved to be coherent with the main expression by induction on the network evolution.
Furthermore, observe that the typing rules (in Figure~\ref{fig:SurfaceTyping}) and the evaluation rules (in Figure~\ref{fig:deviceSemantics} and \ref{fig:deviceSemanticsFAIL}) are syntax directed. Then the proof can be carried out by induction on the derivation length for $\bsopsem{\deviceId,\deviceId'}{\Trees}{\senstate}{\applySubstitution{\e}{\substitution{\overline{\xname}}{\overline{\anyvalue}}}}{\vtree}$, while using the following standard lemmas.

\begin{lem}[Substitution]\label{lem-substitution}
	Let $\TtypEnv=\overline{\xname}:\overline{\type}$, $\expTypJud{\OStypEnv}{\emptyset}{\overline{\anyvalue}}{\overline{\type}}$. If $\expTypJud{\LTStypEnv}{\TtypEnv}{\e}{\type}$, then $\expTypJud{\LTStypEnv}{\emptyset}{\applySubstitution{\e}{\substitution{\overline{\xname}}{\overline{\anyvalue}}}}{\type}$.
\end{lem}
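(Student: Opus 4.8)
The plan is to prove the statement by structural induction on the expression $\e$ (equivalently, on the syntax-directed typing derivation of $\expTypJud{\LTStypEnv}{\TtypEnv}{\e}{\type}$). Before starting I would record two standing facts that make the argument routine. First, every value is closed: by the syntax of \Cref{fig:source:syntax} a data value $\dc(\overline{\anyvalue})$ is built from values and a function value $\funvalue$ is either a name or a \emph{closed} anonymous function, so $\FV{\anyvalue_i}=\emptyseq$ for each $i$. Consequently the substitution, for which I write $\sigma = \substitution{\overline{\xname}}{\overline{\anyvalue}}$, introduces no free variables and can capture nothing under a binder, so it may be carried out as plain textual replacement of free occurrences. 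Second, typing is monotone in the type-scheme environment (a straightforward induction), and every type-scheme environment arising in a program typing extends $\OStypEnv$ (rule \ruleNameSize{[T-PROGRAM]} sets $\TStypEnv_0=\OStypEnv$); hence a value typable under $\OStypEnv$ is also typable, at the same type, under $\LTStypEnv$.

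To make the induction go through under the binding construct, I would prove the slightly more general claim obtained by allowing an arbitrary leftover type environment disjoint from $\overline{\xname}$: if $\expTypJud{\LTStypEnv}{\overline{\xname}:\overline{\type},\,\TtypEnv'}{\e}{\type}$ then $\expTypJud{\LTStypEnv}{\TtypEnv'}{\applySubstitution{\e}{\sigma}}{\type}$, the lemma being the case $\TtypEnv'=\emptyset$. This generalisation is the crux. In the anonymous-function case \ruleNameSize{[T-A-FUN]}, $\e=(\overline{\yname})\toSymK{\e'}$ is typed by a premise $\expTypJud{\LTStypEnv}{\overline{\xname}:\overline{\type},\,\TtypEnv',\,\overline{\yname}:\overline{\type_0}}{\e'}{\type_1}$ with overall type $(\overline{\type_0})\rightarrow\type_1$; choosing $\overline{\yname}$ fresh (harmless since $\overline{\anyvalue}$ is closed), the induction hypothesis with leftover environment $\TtypEnv',\overline{\yname}:\overline{\type_0}$ yields $\expTypJud{\LTStypEnv}{\TtypEnv',\,\overline{\yname}:\overline{\type_0}}{\applySubstitution{\e'}{\sigma}}{\type_1}$, and \ruleNameSize{[T-A-FUN]} reassembles $(\overline{\type_0})\rightarrow\type_1$ for $\applySubstitution{\e}{\sigma}=(\overline{\yname})\toSymK{\applySubstitution{\e'}{\sigma}}$.

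The remaining cases are then mechanical. In the variable case \ruleNameSize{[T-VAR]}, either $\xname=\xname_i$, so $\applySubstitution{\xname}{\sigma}=\anyvalue_i$ and $\type=\type_i$, and the result follows from the hypothesis $\expTypJud{\OStypEnv}{\emptyset}{\anyvalue_i}{\type_i}$ via the monotonicity noted above; or $\xname\in\domof{\TtypEnv'}$, where the substitution is the identity and the given derivation applies unchanged under $\TtypEnv'$. The value cases \ruleNameSize{[T-DAT]} and \ruleNameSize{[T-N-FUN]} are immediate: here $\e$ is closed, so $\applySubstitution{\e}{\sigma}=\e$, and neither rule consults the type environment, so the derivation stands verbatim under $\TtypEnv'$. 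For the compound constructs \ruleNameSize{[T-APP]}, \ruleNameSize{[T-REP]}, \ruleNameSize{[T-NBR]} and \ruleNameSize{[T-FOLD]}, the substitution distributes over the immediate subexpressions, each of which is typed under the same environment; applying the induction hypothesis to each subexpression and reassembling with the corresponding rule gives the claim.

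I expect the only genuine obstacle to be the bookkeeping in the binder case, which is precisely why I strengthen the statement with the leftover environment $\TtypEnv'$; the one further point needing care is the weakening from $\OStypEnv$ to $\LTStypEnv$ in the variable base case, handled by the monotonicity fact recorded at the outset.
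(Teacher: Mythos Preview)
Your proposal is correct and follows exactly the approach the paper takes: the paper's own proof is the single line ``Straightforward by induction on application of the typing rules for expressions in Figure~\ref{fig:SurfaceTyping}'', and you have simply unfolded that induction in detail, including the standard strengthening with a leftover environment for the binder case and the appeal to weakening (the paper's Lemma~\ref{lem-weakening}) for the variable base case.
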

\begin{proof}
	Straightforward by induction on application of the typing rules for expressions in Figure~\ref{fig:SurfaceTyping}.
\end{proof}

\begin{lem}[Weakening]\label{lem-weakening}
	Let $\LTStypEnv'\supseteq\LTStypEnv$, $\TtypEnv'\supseteq\TtypEnv$ be such that $\domof{\LTStypEnv'}\cap\domof{\TtypEnv'}=\emptyset$. If $\expTypJud{\LTStypEnv}{\TtypEnv}{\e}{\type}$, then $\expTypJud{\LTStypEnv'}{\TtypEnv'}{\e}{\type}$.
\end{lem}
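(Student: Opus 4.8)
The plan is to prove the statement by induction on the derivation of the typing judgement $\expTypJud{\LTStypEnv}{\TtypEnv}{\e}{\type}$ --- equivalently, by structural induction on $\e$, since the expression typing rules of Figure~\ref{fig:SurfaceTyping} are syntax-directed and each shape of $\e$ is handled by a unique rule. In every case I would inspect the last rule applied, rebuild the same derivation with $\LTStypEnv'$ and $\TtypEnv'$ in place of $\LTStypEnv$ and $\TtypEnv$, invoking the induction hypothesis on the immediate subderivations and relying on the monotonicity of each rule's side conditions under environment enlargement.

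Concretely: for the leaf rules, \ruleNameSize{[T-VAR]} needs $\xname:\type$ to be recoverable from the type environment, which still holds since $\TtypEnv'\supseteq\TtypEnv$; \ruleNameSize{[T-N-FUN]} and \ruleNameSize{[T-DAT]} need a type-scheme assumption for a function or constructor name, which still holds since $\LTStypEnv'\supseteq\LTStypEnv$, and the instantiation side conditions are unaffected by the enlargement. For the rules whose premises are only subexpression typings --- \ruleNameSize{[T-DAT]} with arguments, \ruleNameSize{[T-APP]}, \ruleNameSize{[T-REP]}, \ruleNameSize{[T-NBR]}, \ruleNameSize{[T-FOLD]} --- no new bindings are introduced, so the induction hypothesis applies verbatim to each premise with the pair $(\LTStypEnv',\TtypEnv')$, whose domains are disjoint by assumption, and reapplying the rule yields the conclusion. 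The only rule that extends an environment is \ruleNameSize{[T-A-FUN]}, whose premise types the body under $\TtypEnv,\overline{\xname}:\overline{\type}$; here I apply the induction hypothesis with $\LTStypEnv'$ and $\TtypEnv',\overline{\xname}:\overline{\type}$, which requires $\domof{\LTStypEnv'}\cap\domof{\TtypEnv',\overline{\xname}:\overline{\type}}=\emptyset$. Since $\domof{\LTStypEnv'}$ consists only of constructor, built-in, and declared-function names whereas $\overline{\xname}$ are program variables, this reduces to the hypothesised disjointness $\domof{\LTStypEnv'}\cap\domof{\TtypEnv'}=\emptyset$ (alternatively, one may $\alpha$-rename $\overline{\xname}$ fresh for $\domof{\LTStypEnv'}$, which changes neither the expression up to $\alpha$-equivalence nor its type); then \ruleNameSize{[T-A-FUN]} is reapplied.

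I expect no genuine obstacle: the whole content is the observation that every premise and every side condition of every typing rule survives an enlargement of the environments, and the single bookkeeping point --- the disjointness needed to push the induction hypothesis across the binder in \ruleNameSize{[T-A-FUN]} --- is discharged by the syntactic separation of program variables from the names occurring in the type-scheme environment. The argument is structurally identical to that of Lemma~\ref{lem-substitution} and can be stated as ``straightforward by induction on the typing derivation.''
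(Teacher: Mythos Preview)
Your proposal is correct and takes essentially the same approach as the paper: the paper's own proof is simply ``Straightforward by induction on application of the typing rules for expressions in Figure~\ref{fig:SurfaceTyping},'' and your case analysis just spells out that induction in detail.
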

\begin{proof}
	Straightforward by induction on application of the typing rules for expressions in Figure~\ref{fig:SurfaceTyping}.
\end{proof}

\begin{proof}[Proof of Theorem \ref{thm:soundness}]
	We proceed by induction on the derivation length.

	The fact that $\vrootOf{\vtree}$ has type $\type$ can be verified by matching step-by-step every rule in \Cref{fig:deviceSemantics} with the corresponding rule in \Cref{fig:SurfaceTyping}, while using the inductive hypothesis and two further assumptions: for rule \ruleNameSize{[E-B-APP]}, that built-in functions $\bname$ respect the given types; for rules \ruleNameSize{[E-NBR]} and \ruleNameSize{[E-REP]}, that $\Trees$ is coherent with $\e$.

	Finally, the fact that $\vtree$ has the required sub-trees follows by inductive hypothesis since every rule in \Cref{fig:deviceSemantics} respects the corresponding row of Definition \ref{def:Tcoherence}, together with the fact that $\Trees$ is coherent with $\e$ (for rules \ruleNameSize{[E-NBR]} and \ruleNameSize{[E-REP]} only).
\end{proof}

%%%%%%%%%%%%%%%%%%%%%%%%%%%%%%%%%%%%%%%%%%%

\section{\FSCAFI{} vs \HFC{}}\label{sec-properties}

In this section, we provide a formal account of  the relationship between \FSCAFI{} and the \HFC{} minimal core calculus for Aggregate Computing~\cite{audrito2019tocl}.

\revised{In \Cref{ssec:HFC-NC-overview}, we give an informal overview of the relationship between the two calculi.}
In \Cref{ssec:HFC} we recollect \HFC{}.
In \Cref{ssec:HFCprime}, we define a fragment  of \HFC{}, which we call \HFCprime{}, aimed at ensuring that
each \HFCprime{} program is an \FSCAFI{} program that behaves in the same way.
In \Cref{ssec:alignedhfc}, we define the fragment  of  \FSCAFI{}  which
corresponds to \HFCprime{}, which we call call \FSCAFIi{}.
Then, in \Cref{ssec:NCi-equiv-HFCprime} we prove the equivalence between \FSCAFIi{} and \HFCprime{}.
Finally, in \Cref{ssec:expressiveness}, we  point out that \FSCAFI\ provides a different ``flavour'' of field computation with respect to \HFC{}, though without losing practical expressiveness.

\revised{
\subsection{Overview} \label{ssec:HFC-NC-overview}

\begin{figure}[t]
\begin{center}
	\includegraphics[scale=0.8]{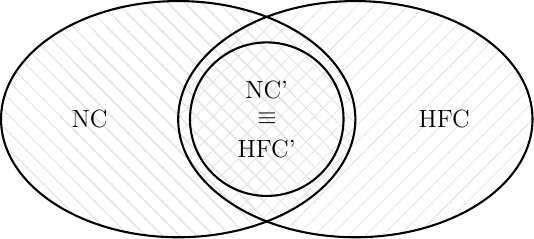}
\end{center}
\caption{Pictorial representation of relationship between \FSCAFI{}, \HFC{}  and their fragments \FSCAFIi{} and \HFCprime{}.} \label{fig:framents}
\end{figure}

\FSCAFI{} and \HFC{} are languages with a similar purpose and syntax, but differently structured semantics. Both languages are Turing-complete (in fact, \emph{space-time universal} \cite{abdv:universality}), thus they can both express all distributed computations, although possibly with structurally different programs.

As the syntax of \FSCAFI{} can be mapped to/from that of \HFC{}, a question naturally arises: what are the distributed computations that are expressed in both languages \emph{by the same program}?
The answer to this question, detailed in the remainder of this section, is summarised in Figure~\ref{fig:framents}. Not all programs are interpreted to the same behaviour in \FSCAFI{} and \HFC{}, however, the (subset of) programs with identical behaviour in \FSCAFI{} and \HFC{} can be identified. Through Lemma \ref{lem:hfcp} in Section \ref{ssec:HFCprime}, we define a restricted language \HFCprime{} and show that it is indeed a subset of \HFC{}. Similarly, in Theorem \ref{thm:correspondence} we define \FSCAFIi{} while ensuring that it is a subset of \FSCAFI{}. Finally, Theorem \ref{thm:equivalence} shows that \HFCprime{} coincides with \FSCAFIi{}, concluding the characterisation of this common fragment of programs with identical behaviour.

We remark that \HFCprime{}/\FSCAFIi{} does \emph{not} coincide with the ``intersection'' between \HFC{} and \FSCAFI{}: there are programs outside \HFCprime{}/\FSCAFIi{} which nonetheless result in the same behaviour in both \HFC{} and \FSCAFI{}. Still, we will show in \Cref{ssec:expressiveness} that most interesting programs do fit inside \HFCprime{}/\FSCAFIi{}, or can be refactored to fit into it through a small set of simple \emph{rewriting strategies}. In fact, we are not aware of any practically relevant program that cannot be refactored into \HFCprime{}/\FSCAFIi{} through the proposed strategies. We also remark that as \HFCprime{}/\FSCAFIi{} is Turing-complete, some equivalent form in \HFCprime{}/\FSCAFIi{} must exist for any given distributed program. Overall, these results show that \FSCAFI\ provides a different ``flavour'' of field computation with respect to \HFC{}, without losses in expressiveness.
}

\subsection{A Quick Recollection of \HFC{}} \label{ssec:HFC}

The syntax of \HFC{} programs (according to its original presentation \cite{audrito2019tocl}) is the same of \FSCAFI{} programs (in Figure \ref{fig:source:syntax}), with a richer syntax of values and two other minor differences: the update function $\e_2$ in a $\repK$ construct $\repK (\e_1) \{\e_2\}$ is required to be an anonymous function $\e_2 = (\xname) \toSymK{\e'}$,\footnote{We remark that this difference is historical in nature: \HFC{} could be straightforwardly extended to allow for general update function expressions.} and the language construct for $\foldK$ is replaced by a built-in with the same meaning, so it does not appear in the syntax.
In the richer syntax of \HFC{} values (given in Figure \ref{fig:source:hfc}), values are divided into \emph{local values}, which are  the values of \FSCAFI{}, and \emph{neighbouring (field) values}, which not allowed to appear in source code and arise at runtime.
Neighbouring values are maps $\envmap{\overline\deviceId}{\overline \lvalue}$ from device identifiers to local values. In \HFC{}, they are produced  by
evaluating  the $\nbrK$ construct and returned by some built-in functions.

%In both of the languages, branching statements are considered as syntactic sugar; however, the syntactic sugar keyword used in \FSCAFI{} is $\ifK$ while in \HFC{} is \texttt{if}. We shall use the same built-in functions for both languages.

\begin{figure}[!t]
\centering
\centerline{\framebox[\textwidth]{$
\begin{array}{lcl@{\hspace{55mm}}r}
%        \PROGRAM & \BNFcce & \overline{\FUNCTION}  \; \e
%                                                                                                                                                                                                        &   {\footnotesize \mbox{program}} \\[2pt]
%        \FUNCTION & \BNFcce &  \defK \; \fname (\overline{\xname}) \; \{\e\}
%                                                                                                                                                                                                        &   {\footnotesize \mbox{function declaration}} \\[-2pt]
%        \e & \BNFcce &  \xname \; \BNFmid \; \anyvalue \; \BNFmid \; (\overline{\xname}) \; \toSym{\name} \; \e \; \BNFmid \; \e(\overline{\e}) \; \BNFmid \; \repK(\e)\{ (\xname) \;\texttt{=>}\; \e \} \; \BNFmid \; \nbrK\{\e\}
%                                                                                                                                                                                                        &   {\footnotesize \mbox{expression}} \\[2pt]
%        \\[-6pt] \hline \\[-6pt]
        \anyvalue & \BNFcce & \lvalue   \; \BNFmid \;  \eqhl{\fvalue}   &   {\footnotesize\mbox{value}} \\[2pt]
        \fvalue & \BNFcce & \eqhl{\envmap{\overline{\deviceId}}{\overline{\lvalue}}}  &   {\footnotesize\mbox{neighbouring (field) value}} \\[2pt]
        \lvalue & \BNFcce & \dcOf{\dc}{\overline{\lvalue}} \; \BNFmid \; \funvalue &   {\footnotesize\mbox{local value}} \\[2pt]
        \funvalue & \BNFcce & \bname  \; \BNFmid \; \fname \; \BNFmid \; (\overline{\xname}) \; \toSymK{\e}  &   {\footnotesize \mbox{function value}} \\[2pt]
\end{array}
$}
}
\caption{Syntax of \HFC{} values -- extensions with respect to \FSCAFI{} values are highlighted in grey.} \label{fig:source:hfc}
\end{figure}

The Hindley-Milner type system for \HFC{}~\cite{audrito2019tocl} is given in \Cref{fig:restrictedTypes} (excluding the grey rule \ruleNameSize{[T'-FOLD]}), and distinguishes between types for \emph{local} values from those that are not (namely, neighbouring types $\ftype$ for neighbouring values), as well as between types that are allowed to be \emph{returned} by functions from those that are not.
This induces four different type categories: types $\type$, local types $\ltype$, return types $\rtype$, and local return types
$\stype$---they are illustrated at the bottom of \Cref{fig:source:hfc}.
The main restrictions enforced by this type system in order to ensure the \emph{domain alignment} property\footnote{Domain alignment holds iff the domain of neighbouring values $\fvalue$ obtained from expressions $\e$ is equal to the set of all neighbours which computed the same $\e$ in their previous evaluation round.} are:
\begin{itemize}
	\item
	anonymous functions cannot capture variables of neighbouring type;
	\item
	$\repK$ statements are demanded to have a local return type;
	\item
	neighbouring types can only be built from local return types $\stype$ (i.e., $\ftype = \ftypeOf{\stype}$), since neighbouring values need to be aggregated and this is possible only for return types, and avoiding ``neighbouring values of neighbouring values'' which may lead to unintentionally heavy computations;
	\item
	types of the form $(\overline\type) \to \ftype$ (functions returning neighbouring values) are not return types.
	Thus, functions of type $(\overline\type) \to \ftype$ are used almost as in a first-order language. In particular, there is no way to write a non-constant expression $\e$ evaluating to such a function.
\end{itemize}
The \HFC{} operational semantics \cite{audrito2019tocl} is given as a transition system analogous to that in Section \ref{sec-calculus-network-semantics}, but based on a different judgement for the device operational semantics $\bsopsem{\deviceId}{\Trees}{\senstate}{\emain}{\vtree}$.
For sake of completeness, we report the details of the \HFC{} operational device semantics in Appendix~\ref{apx:hfc:opsem}.
In the remainder of this paper, we assume that the built-in operators of \HFC{} always include:
\begin{itemize}
	\item $\mathtt{consthood}(\anyvalue)$, which returns a neighbouring field value constantly equal to its (local) input $\anyvalue$;
	\item $\mathtt{map}(\funvalue, \overline\fvalue)$, which applies a function $\funvalue$ with local inputs and outputs (of any ariety) pointwise to neighbouring field values $\overline\fvalue$;
	\item $\foldK(\lvalue, \funvalue, \fvalue)$, which collapses a field $\fvalue$ and starting value $\lvalue$ via an aggregator $\funvalue$ (exactly as in \FSCAFI{}).
\end{itemize}
Furthermore, we use $\ifK  (\e_1) \{ \e_2 \} \{ \e_3 \}$ as short for $\muxK(\e_1, () ~\toSymKabs{\e_2}, () ~\toSymKabs{\e_3})()$, as in \FSCAFI{}.

\begin{figure}[t!]
\framebox[1\textwidth]{$\begin{array}{l}
\begin{array}{lclrr}
\textbf{Types:} \\[1pt]
		\type	& \BNFcce & \tvar \; \BNFmid \; {\rtype} \; \BNFmid \; \ltype											   &   {\footnotesize \mbox{type}}  &
		\multirow{5}{*}{\raisebox{-1.3cm}{
			\begin{tikzpicture}[xscale=1.7,yscale=1.1]
				\draw[ultra thick] (-0.5,0) circle (1cm);
				\draw[ultra thick] (0.5,0) circle (1cm);
				\node[draw,rectangle] (TT) at (0, 1.25) {$\type$};
				\node[draw,rectangle] (LT) at (-1, 0.5) {$\ltype$};
				\node[draw,rectangle] (RT) at (1, 0.5) {$\rtype$};
				\node[draw,rectangle] (ST) at (0, -0.5) {$\stype$};
				\node (B) at (0.1, 0.5) {$\builtintype$};
				\node (TxS) at (0, 0.05) {$(\overline\type) \rightarrow \stype$};
				\node (F) at (0.8, -0.4) {$\ftype$};
				\node (TxF) at (-0.95, -0.3) {$(\overline\type) \rightarrow \rtype$};
			\end{tikzpicture}
			}}
		\\[2pt]
		\ltype	& \BNFcce & \ltvar \; \BNFmid \; {\stype} \; \BNFmid \; (\overline{\type})\to\rtype	 & {\footnotesize \mbox{local type}} \\[2pt]
		\rtype	& \BNFcce & \rtvar \; \BNFmid \; \stype \; \BNFmid \; \ftype										  &   {\footnotesize \mbox{return type}} \\[2pt]
		\stype	& \BNFcce & \stvar \; \BNFmid \; \builtintype \; \BNFmid \; (\overline{\type})\to\stype	&   {\footnotesize \mbox{local return type}} \\[2pt]
		\ftype	& \BNFcce & \ftypeOf{\stype}																							&   {\footnotesize \mbox{neighbouring type}} \\[2pt]
				\multicolumn{4}{l}{\textbf{Local type schemes:}} \\[1pt]
		\ltypescheme & \BNFcce & \forall\overline{\tvar}\overline{\ltvar}\overline{\rtvar}\overline{\stvar}.\ltype
																																							  &   {\footnotesize \mbox{local type scheme}} \\
\end{array} \\
%%%  EXPRESSIONS
\textbf{~~Expression typing:} \hfill \boxed{\expTypJud{\LTStypEnv}{\TtypEnv}{\e}{\type}} \\[5pt]
\begin{array}{c}
	\nullsurfaceTyping{T'-VAR}{
		\expTypJud{\LTStypEnv}{\TtypEnv, \xname:\type}{\xname}{\type}
	}
	\qquad
	\surfaceTyping{T'-DAT}{ \quad
		\applySubstitution{\stype'}{\substitution{\overline\stvar}{\overline\stype''}} = (\overline{\stype})\rightarrow\stype
		\qquad
		\expTypJud{\TStypEnv}{\TtypEnv}{\overline{\lvalue}}{\overline{\stype}}
	}{
		\expTypJud{\TStypEnv,\dc: \forall\overline\stvar. \stype'}{\TtypEnv}{\dc(\overline{\lvalue})}{\stype}
	}
	\skiptransition
	\surfaceTyping{T'-A-FUN}{  \quad
		\overline{\yname}=\FV{(\overline{\xname}) \; \toSymKtag{\name}{\e}} \quad
		{\expTypJud{\LTStypEnv}{\TtypEnv}{\overline\yname}{\overline{\ltype}}} \quad
		\expTypJud{\LTStypEnv}{\TtypEnv, \overline{\xname} : \overline\type}{\e}{\rtype}
	}{
		\expTypJud{\LTStypEnv}{\TtypEnv}{(\overline{\xname}) \; \toSymKtag{\name}{\e}}{(\overline\type) \rightarrow \rtype}
	}
	\skiptransition
	\surfaceTyping{T'-N-FUN}{ \quad
		\mbox{$\funvalue$ is a (built-in or declared) function}
	}{
		\expTypJud{\TStypEnv,\funvalue : \forall\overline\tvar\overline\ltvar\overline\rtvar\overline\stvar.\ltype}{\TtypEnv}{\funvalue}{\applySubstitution{\ltype}{\substitution{\overline\tvar}{\overline\type},~ \substitution{\overline{\ltvar}}{\overline{\ltype}},~ \substitution{\overline{\rtvar}}{\overline{\rtype}},~ \substitution{\overline{\stvar}}{\overline{\stype}}}}
	}
	\skiptransition
	\surfaceTyping{T'-APP}{ \qquad
		\expTypJud{\LTStypEnv}{\TtypEnv}{\e}{(\overline\type)\rightarrow\rtype} \qquad
		\expTypJud{\LTStypEnv}{\TtypEnv}{\overline{\e}}{\overline\type}
	}{
		\expTypJud{\LTStypEnv}{\TtypEnv}{\e(\overline{\e})}{\rtype}
	}
	\skiptransition
	\surfaceTyping{T'-REP}{ \quad
		\expTypJud{\LTStypEnv}{\TtypEnv}{\e_1}{\stype} \qquad
		\expTypJud{\LTStypEnv}{\TtypEnv}{(\xname) \toSymK{\e_2}}{\stype \rightarrow \stype}
	}{
		\expTypJud{\LTStypEnv}{\TtypEnv}{\repK(\e_1)\{(\xname) \toSymK{\e_2}\}}{\stype}
	}
	\quad
	\surfaceTyping{T'-NBR}{ \quad
		\expTypJud{\LTStypEnv}{\TtypEnv}{\e}{\stype}
	}{
		\expTypJud{\LTStypEnv}{\TtypEnv}{\nbrK\{\e\}}{\ftypeOf{\stype}}
	}
	\skiptransition
	\eqhl{\surfaceTyping{T'-FOLD}{ \qquad
		\begin{array}{ll}
			\expTypJud{\TStypEnv}{\TtypEnv}{\e_1}{\stype}
			& \quad
			\expTypJud{\TStypEnv}{\TtypEnv}{\e_3}{\ftypeOf{\stype} \text{ or } \stype}
			\\
			\expTypJud{\TStypEnv}{\TtypEnv}{\e_2}{(\stype, \stype) \rightarrow \stype}
			& \quad
	        \expTypJud{\TStypEnv}{\TtypEnv}{\overline\xname}{\overline\ltype}
	        \quad\text{where}\; \overline\xname = \FV{\e_3}
		\end{array}
	}{
		\expTypJud{\TStypEnv}{\TtypEnv}{\foldK(\e_1,\e_2,\e_3)}{\stype}
	}}
\end{array} \\~\\
%%%  FUNCTIONS
\textbf{~~Function typing:} \hfill \boxed{\funTypJud{\LTStypEnv}{\FUNCTION}{\ltypescheme}} \\[5pt]
\qquad \qquad
\begin{array}{c}
	\surfaceTyping{T'-FUNCTION}{ \qquad
		\begin{array}{ll}
			\expTypJud{\LTStypEnv, \fname: (\overline\type)\to\rtype}{\overline{\xname}:\overline{\type}}{\e}{\rtype} \qquad
			\overline{\tvar}\overline{\ltvar}\overline{\rtvar}\overline{\stvar}=\FTV{(\overline{\type})\rightarrow\rtype} %\\
			%\text{arguments not occurring in } \e \text{ are local}
		\end{array}
	}{
		\funTypJud{\LTStypEnv}{\defK \; \fname (\overline{\xname}) \; \eqSymK{\e}}{
			\forall\overline{\tvar}\overline{\ltvar}\overline{\rtvar}\overline{\stvar}.(\overline{\type})\rightarrow\rtype
	}}
\end{array} \\~\\
%%%  PROGRAMS
\textbf{~~Program typing:} \hfill {\boxed{\funTypJud{\LTStypEnv_0}{\PROGRAM}{\type}}} \\[5pt]
\qquad \quad
\begin{array}{c}
	\surfaceTyping{T'-PROGRAM}{ \\
		\FUNCTION_i=(\defK \; \fname_i (\_) \; \_) \qquad
		\funTypJud{\LTStypEnv_{i-1}}{\FUNCTION_i}{\ltypescheme_i} \qquad
		\LTStypEnv_i=\LTStypEnv_{i-1},\, \fname_i:\ltypescheme_i \qquad
		(i \in 1..n) \\
		\expTypJud{\LTStypEnv_n}{\emptyset}{\e}{\type}
	}{
		\funTypJud{{\LTStypEnv_0}}{\FUNCTION_1\cdots\FUNCTION_n  \; \e}{{\type}}
	}
\end{array}
\end{array}$}
\caption{Hindley-Milner typing for \HFCprime{} and \FSCAFIi{} expressions, function declarations, and programs -- differences with \HFC{} typing are highlighted in grey.} \label{fig:restrictedTypes}
\end{figure}

\subsection{The \HFCprime{} Fragment of \HFC{}} \label{ssec:HFCprime}

\HFCprime{} is obtained by adding the following two custom restrictions, on how neighbouring field values can be processed, to the Hindley-Milner type system for \HFC{}~\cite{audrito2019tocl}:
\begin{description}
	\item[R1]
	Built-in functions need to have local arguments, except for the built-ins $\mathtt{map}$ and $\mathtt{foldhood}$.
	\item[R2]
	Expressions of neighbouring type can only be aggregated to local values with a $\foldK$ operator if they do not capture variables of neighbouring types; so that, e.g., aggregating arguments of neighbouring type is never allowed.
%	\item[R3]
%	Functions $\funvalue$ with arguments of neighbouring type have to return a neighbouring type.
\end{description}

\begin{example}[About restriction R1]
	In order to show the rationale behind Restriction R1, consider a built-in function \texttt{sorthood} rearranging values $\fvalue(\deviceId)$ relative to neighbours in increasing order of neighbour identifier $\deviceId$, thus effectively mixing up values relative to different neighbours. Formally, applying this function to a neighbouring value $\fvalue = \envmap{\overline\deviceId}{\overline\lvalue}$ (assuming $\deviceId_1 \leq \ldots \leq \deviceId_n$), we obtain the neighbouring value $\fvalue' = \envmap{\deviceId_1}{\lvalue_{\pi_1}}, \ldots, \envmap{\deviceId_n}{\lvalue_{\pi_n}}$ where the permutation $\pi$ is such that $\lvalue_{\pi_1} \leq \ldots \lvalue_{\pi_n}$. This function is conceivable (although artificial) in \HFC{}, but it is not implementable in \FSCAFI{}, hence it is disallowed in \HFCprime{}. We remark however that all practically used built-in functions in \HFC{} are definable with respect to those allowed by restriction R1.
\end{example}

\begin{example}[About restriction R2]
	In order to show the rationale behind Restriction R2, consider the following  \HFC{} program
	\begin{lstlisting}[language={hfc}]
def hfc_avghood(x) { // : field(num) -> num
  foldhood(0, +, x) / foldhood(0, +, 1)
}
hfc_avghood(nbr{sns-temp()}) // : num
	\end{lstlisting}
	which on each device calculates the average temperature of neighbours.
	We may suppose that the same code, interpreted as an \FSCAFI{} program, would  calculate the same quantity.
 Instead, it is equivalent to the simpler \FSCAFI{} program
  %$\nbrK\{\texttt{sns-temp}()\}$,
$\texttt{sns-temp}()$,
 which yields the temperature of the device where it is evaluated.
If we evaluate the expression \texttt{nbr\{sns-temp()\}} against a neighbour $\deviceId'$,
we obtain  the temperature $t'$ of that neighbour. Unfortunately, in the program the expression occurs outside of the scope of any $\foldK$ construct and so
   it is evaluated
against the device $\deviceId$ where it is evaluated.
When function \texttt{hfc\_avghood} is applied to $t$ (the temperature of $\deviceId$), the neighbour device $\deviceId'$ is ignored by both $\foldK$ constructs,
 which fail to interpret the captured neighbouring value as such. The value of the program on device $\deviceId$
  is then $n\cdot t / n = t$, where $n$ is the number of neighbours  of $\deviceId$ (including $\deviceId$ itself)
 and $t$ is the value of the temperature on device $\deviceId$.
\end{example}

We remark that an \HFC{} program computing the average temperature of neighbours also when interpreted as an \FSCAFI{} program can be conveniently written, by resorting to suitable programming patterns (illustrated in \Cref{sssec:patterns}). In particular, for the example above, it is sufficient to make \texttt{x} a ``by-name'' parameter, thus obtaining the following \HFC{} program:
	\begin{lstlisting}[language={hfc}]
def hfc_nc_avghood(y) { // : (() -> field(num)) -> num
  foldhood(0, +, y()) / foldhood(0, +, 1)
}
hfc_nc_avghood(() => {nbr{sns-temp()}}) // : num
	\end{lstlisting}

%\begin{example}[About restriction R3]
%\sloppy % to avoid math formulas to overflow the right margin
%	In order to show the rationale behind Restriction R3, consider the following \HFC{} program:
%\begin{lstlisting}[language={hfc}]
%def hfc_ignore(x) { 1 } // : forall t.t -> num
%foldhood(0, +, if (sns-temp() > 0) { hfc_ignore(nbr{sns-temp()}) } { 1 } ) //: num
%	\end{lstlisting}
%which on each device calculates the total number of neighbours ($\foldK(0)(\texttt{+})\{1\}$). When interpreted as an \FSCAFI{} program,
%if the sensed temperature is positive, this program only counts neighbours with positive temperature,
%since the argument $\nbrK\{\texttt{sns-temp}()\}$ fails its evaluation against neighbours with negative temperature.
%\end{example}
%
%Notice that functions with neighbouring arguments and local return type could not compute their return value from their arguments by Restriction R2,
%and thus would be forced to ignore them. It follows that restriction R3 does not eliminate any further meaningful programs.

We remark that all \HFC{} programs considered in previous works \cite{audrito2019tocl,abdv:universality,viroli:selfstabilisation} actually belong to \HFCprime{} (or can straightforwardly  be reformulated in order to do so). The following lemma provides a characterisation of \HFCprime{} in terms of the type system in \Cref{fig:restrictedTypes}.

\begin{lem}[Characterisation of \HFCprime{}] \label{lem:hfcp}
The type system in \Cref{fig:restrictedTypes} is a restriction of the Hindley-Milner type system for \HFC{}~\cite{audrito2019tocl} enforcing restrictions R1 and R2.
 \end{lem}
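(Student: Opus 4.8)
The plan is to prove this essentially by a rule-by-rule comparison between the two type systems, in two directions: (a) that the system of \Cref{fig:restrictedTypes} is a genuine \emph{restriction}, i.e.\ every derivation in it can be replayed in the original HFC type system of \cite{audrito2019tocl}; and (b) that the extra constraints it imposes are precisely R1 and R2, i.e.\ a program is accepted iff it is HFC-typeable \emph{and} satisfies R1 and R2.

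First I would pin down exactly where the two systems differ. Apart from the purely cosmetic identification of the $\foldK$ construct with the $\foldK$ built-in (noted in \Cref{ssec:HFC}), the system of \Cref{fig:restrictedTypes} departs from the HFC one in only two places: (i) the built-in type-scheme environment $\OStypEnv$ is restricted to contain only built-ins whose argument types are local types $\ltype$, with the sole exception of $\mathtt{map}$, and the $\foldK$ built-in is removed from $\OStypEnv$ and handled instead by the dedicated rule \ruleNameSize{[T'-FOLD]}; and (ii) \ruleNameSize{[T'-FOLD]} carries the side condition $\expTypJud{\TStypEnv}{\TtypEnv}{\overline\xname}{\overline\ltype}$ with $\overline\xname = \FV{\e_3}$, forcing the free variables of the folded body to be typeable with local types. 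I would then identify (i) with R1 --- since R1 says exactly that no built-in other than $\mathtt{map}$ and $\foldK$ may have a neighbouring argument --- and (ii) with R2 --- since a body $\e_3$ ``captures a variable of neighbouring type'' in the sense of R2 precisely when some $\xname \in \FV{\e_3}$ is forced to a neighbouring type by $\TtypEnv$, which the side condition forbids.

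For direction (a) I would argue, by induction on derivations, that each rule of \Cref{fig:restrictedTypes} is admissible in the HFC type system. All non-grey rules are literally HFC rules, so the only case is \ruleNameSize{[T'-FOLD]}: its premises, once the free-variable side condition is dropped, are exactly what \ruleNameSize{[T'-APP]} (together with \ruleNameSize{[T'-N-FUN]}) applied to the natural type scheme $\forall\stvar.(\stvar,\,(\stvar,\stvar)\to\stvar,\,\ftypeOf{\stvar})\to\stvar$ of the $\foldK$ built-in --- with the alternative of a local third argument --- would require; hence the same judgement is HFC-derivable, and since the signature used is a sub-environment of a legitimate HFC signature, nothing typeable here fails to be HFC-typeable. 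For direction (b), conversely, I would take an HFC derivation of a program satisfying R1 and R2 and rebuild it in \Cref{fig:restrictedTypes}: by R1 every built-in use is either $\mathtt{map}$, or a local-argument built-in (handled verbatim by \ruleNameSize{[T'-N-FUN]}/\ruleNameSize{[T'-APP]}), or a $\foldK$ application, which by R2 has $\FV{\e_3}$ of local type, so the side condition of \ruleNameSize{[T'-FOLD]} is satisfied and the application is re-derived by that rule.

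The main obstacle I anticipate is making the $\foldK$ correspondence airtight: one must fix the precise type scheme of the $\foldK$ built-in in HFC (including the ``or $\stype$'' alternative for $\e_3$ and the local-return-type constraints on $\e_1$, $\e_2$ and the result), check that \ruleNameSize{[T'-FOLD]} neither strengthens nor weakens it apart from the syntactic side condition, and confirm that R2's informal phrasing (``do not capture variables of neighbouring types'') is faithfully captured by a condition on $\FV{\e_3}$ alone --- which is legitimate because $\e_1$ and $\e_2$ are anyway constrained to local, resp.\ local-return, types by the other premises, so their free variables cannot be of neighbouring type either. Everything else is routine bookkeeping over the (syntax-directed) rules.
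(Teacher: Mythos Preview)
Your proposal is correct and follows essentially the same approach as the paper: identify that R1 is implemented via the restriction on the built-in type-scheme environment $\LTStypEnv_0$ (only $\mathtt{map}$ retains a field argument, with $\foldK$ promoted to a dedicated rule), that R2 is implemented via the free-variable side condition in \ruleNameSize{[T'-FOLD]}, and that all remaining rules coincide verbatim with the HFC type system. The paper's own proof is considerably terser---it simply states these three observations without the bidirectional induction you outline---so your version is more detailed than strictly required, but not wrong.
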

\begin{proof}
%	Restriction R3 is implemented by a change in the syntax of local types (restricting $(\overline\type) \to \rtype$ to $(\overline\type) \to \ftype$), as reflected in the Voronoi diagram of types given in \Cref{fig:restrictedTypes} (top).
	Restriction R1 is implicitly implemented in the definition of $\LTStypEnv_0$, in which the only built-in with field arguments is $\mathtt{map}$.
	Restriction R2 is implemented in Rule \ruleNameSize{[T'-FOLD]}, by requiring each free variable occurring in the third branch to be of local type -- in
	the type system for \HFC{} \cite{audrito2019tocl} $\foldK$ is a built-in function, so there is no special rule for $\foldK$-expressions.
	All the other rules are the same as for the Hindley-Milner type system for \HFC{}~\cite{audrito2019tocl}.
\end{proof}

\subsection{The \FSCAFIi{} Fragment of \FSCAFI{}} \label{ssec:alignedhfc}

\FSCAFIi{} is the fragment of \FSCAFI{} that can be typed by rules in \Cref{fig:restrictedTypes}. These rules are significantly more restrictive then the rules in \Cref{fig:SurfaceTyping}, in particular:

\begin{itemize}
	\item
	They acknowledge the existence of the four type categories, in particular of $\ftypeOf{\stype}$;
	\item
	They demand that (anonymous and defined) functions have an allowed return type $\rtype$;
	\item
	They require the sub-expressions of $\repK$, $\nbrK$ and $\foldK$ to have local return type $\stype$;
	\item
	Finally, they require captured variables to have local type, in the body of both anonymous functions and folding expressions.
\end{itemize}
For convenience in the comparison with \HFCprime{}, we assume that \FSCAFIi{} comprises the \texttt{consthood} and \texttt{map} built-ins, even though in \FSCAFIi{} those functions could also be defined as follows.
\begin{lstlisting}[language={hfc},mathescape=true]
def consthood(v) { v }
def map(f, v, $\ldots$) { f(v, $\ldots$) }
\end{lstlisting}
The embedding of \FSCAFIi{}  as a fragment of \FSCAFI{} can be formally characterised by means of the following definition and theorem.

\begin{defi}[Erasure]
	The \emph{erasure} of an \FSCAFIi{} type $\type$ is the type $\erasureof{\type}$ obtained from $\type$ by replacing all occurrences of $\ftypeOf{\ltype}$ with $\ltype$ and dropping the distinction between the different kinds of type variables (i.e., considering each of them as a standard type variable $t$).
	Similarly, the erasure of a type scheme $\forall\overline{\tvar}\overline{\ltvar}\overline{\rtvar}\overline{\stvar}.\ltype$ is the type scheme $\forall\overline{\tvar}\overline{\ltvar}\overline{\rtvar}\overline{\stvar}.\erasureof{\ltype}$ (dropping distinction between kinds of variables).
	Finally, the erasure of a type environment $\TtypEnv = \overline\xname : \overline\type$ is $\erasureof{\TtypEnv} = \overline\xname : \erasureof{\overline\type}$; and the erasure of a type-scheme environment is $\erasureof{\TStypEnv} = \overline\xname : \erasureof{\overline\ltypescheme}$.
\end{defi}

\begin{thm}[Typing Correspondence] \label{thm:correspondence}
	Assume that $\expTypJud{\TStypEnv}{\TtypEnv}{\e}{\type}$ in  \FSCAFIi{}. Then $\expTypJud{\erasureof{\TStypEnv}}{\erasureof{\TtypEnv}}{\e}{\erasureof{\type}}$ in \FSCAFI{}.
\end{thm}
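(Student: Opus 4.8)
The statement is proved by structural induction on the derivation of $\expTypJud{\TStypEnv}{\TtypEnv}{\e}{\type}$ in \FSCAFIi{}, i.e.\ using the rules of \Cref{fig:restrictedTypes}. For each such rule, the goal is to show that applying the erasure to every type occurring in the conclusion yields a judgement derivable in \FSCAFI{} via the homonymous rule of \Cref{fig:SurfaceTyping}, invoking the induction hypotheses on the premises. The overarching reason this goes through is that the \FSCAFI{} type system of \Cref{fig:SurfaceTyping} is strictly more permissive than the one of \Cref{fig:restrictedTypes}: it has a single category of types, imposes no locality constraint on captured variables, and allows arbitrary type substitutions when instantiating schemes. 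Hence every side condition of a \FSCAFIi{} rule lacking a counterpart in \FSCAFI{} -- the capture-locality premise $\expTypJud{\TStypEnv}{\TtypEnv}{\overline\yname}{\overline\ltype}$ of \ruleNameSize{[T'-A-FUN]} and the analogous $\expTypJud{\TStypEnv}{\TtypEnv}{\overline\xname}{\overline\ltype}$ of \ruleNameSize{[T'-FOLD]} -- is simply discarded. Note also that \FSCAFIi{} and \FSCAFI{} share the very syntax of \Cref{fig:source:syntax}, so no \FSCAFIi{}-typable expression has a shape lying outside the \FSCAFI{} grammar; in particular the update $(\xname)\toSymK{\e_2}$ in a $\repK$ construct is already a legal \FSCAFI{} expression.

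First I would record two routine facts about $\erasureof{\cdot}$, used to match the scheme-instantiation rules. (i) Erasure is a homomorphism for the function-type constructor, $\erasureof{(\overline\type)\to\type} = (\erasureof{\overline\type})\to\erasureof{\type}$, which is immediate since erasure only rewrites subterms of shape $\ftypeOf{\stype}$. (ii) Erasure commutes with type substitution, $\erasureof{\applySubstitution{\type}{\substitution{\overline\tvar}{\overline{\type}'}}} = \applySubstitution{\erasureof{\type}}{\substitution{\overline\tvar}{\erasureof{\overline{\type}'}}}$, once all flavours of type variables are identified. This needs a short check: a substitution admissible in \FSCAFIi{} replaces an $\stvar$ only by a local return type $\stype$, so an occurrence $\ftypeOf{\stvar}$ is never turned into a $\ftypeOf{\cdot}$ wrapping a non-local-return type nor into a nested $\ftypeOf{\ftypeOf{\cdot}}$; hence erasing after substituting equals substituting after erasing. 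Facts (i)--(ii) are exactly what is needed in cases \ruleNameSize{[T'-DAT]} and \ruleNameSize{[T'-N-FUN]}: there the erased scheme $\forall\overline\tvar.\erasureof{\type'}$ together with the erased witnessing types is precisely an instance admitted by \ruleNameSize{[T-DAT]} and \ruleNameSize{[T-N-FUN]}, and erasure of a type-scheme environment distributes as $\erasureof{\TStypEnv,\dc:\forall\overline\stvar.\stype'} = \erasureof{\TStypEnv},\dc:\forall\overline\tvar.\erasureof{\stype'}$ by definition.

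The inductive step is then a rule-by-rule matching. \ruleNameSize{[T'-VAR]} and \ruleNameSize{[T'-APP]} map to \ruleNameSize{[T-VAR]} and \ruleNameSize{[T-APP]} using (i). \ruleNameSize{[T'-A-FUN]} maps to \ruleNameSize{[T-A-FUN]}, dropping the capture-locality premise and using $\erasureof{(\overline\type)\to\rtype}=(\erasureof{\overline\type})\to\erasureof{\rtype}$. \ruleNameSize{[T'-REP]} maps to \ruleNameSize{[T-REP]}: the induction hypothesis on the update gives $\expTypJud{\erasureof{\TStypEnv}}{\erasureof{\TtypEnv}}{(\xname)\toSymK{\e_2}}{\erasureof{\stype}\to\erasureof{\stype}}$, and \ruleNameSize{[T-REP]} concludes with type $\erasureof{\stype}$. \ruleNameSize{[T'-NBR]} maps to \ruleNameSize{[T-NBR]}: this is the single place where a neighbouring type is created, and it is absorbed precisely because $\erasureof{\ftypeOf{\stype}} = \erasureof{\stype}$, while \ruleNameSize{[T-NBR]} assigns $\nbrK\{\e\}$ the same type as its body. \ruleNameSize{[T'-FOLD]} maps to \ruleNameSize{[T-FOLD]}: again dropping the capture-locality premise and noting that both admissible types of $\e_3$ in \FSCAFIi{} (namely $\ftypeOf{\stype}$ and $\stype$) erase to the common type $\erasureof{\stype}$ required by \ruleNameSize{[T-FOLD]}, so $\e_1$, $\e_3$ indeed receive the same erased type and $\e_2$ receives $(\erasureof{\stype},\erasureof{\stype})\to\erasureof{\stype}$. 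The function- and program-level correspondences (using \ruleNameSize{[T'-FUNCTION]}, \ruleNameSize{[T'-PROGRAM]} versus \ruleNameSize{[T-FUNCTION]}, \ruleNameSize{[T-PROGRAM]}), while not part of the stated theorem, follow as an immediate corollary of the expression case.

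I expect the work to be bookkeeping rather than conceptual. The only point requiring genuine care is fact (ii), the erasure/substitution commutation, together with the uniform identification of the four kinds of type variables; once that is in place, every case is a mechanical rewriting of a \FSCAFIi{} rule instance into a \FSCAFI{} one. There is no real obstacle: the content of the theorem is exactly that the type-category discipline of \FSCAFIi{} is a refinement which the coarser type system of \FSCAFI{} can always recover by forgetting the $\ftypeOf{\cdot}$ markers and the variable kinds.
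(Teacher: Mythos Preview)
Your proposal is correct and follows essentially the same approach as the paper: induction on the typing derivation, matching each \FSCAFIi{} rule to its \FSCAFI{} counterpart, with the $\nbrK$ case being the only non-trivial one (handled via $\erasureof{\ftypeOf{\stype}}=\erasureof{\stype}$) and the extra locality premises of \ruleNameSize{[T'-A-FUN]} and \ruleNameSize{[T'-FOLD]} simply dropped. Your treatment is in fact more explicit than the paper's---you spell out the erasure/substitution commutation needed for \ruleNameSize{[T'-DAT]} and \ruleNameSize{[T'-N-FUN]}, which the paper leaves implicit.
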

\begin{proof}
	We proceed by induction on the syntax of $\e$. If $\e = \nbrK\{\e'\}$, then $\type = \ftypeOf{\stype}$ where $\expTypJud{\TStypEnv}{\TtypEnv}{\e'}{\stype}$ by rule \ruleNameSize{[T'-NBR]}. By inductive hypothesis, $\expTypJud{\erasureof{\TStypEnv}}{\erasureof{\TtypEnv}}{\e'}{\erasureof{\stype}}$ in \FSCAFI{}. It follows by rule \ruleNameSize{[T-NBR]} that $\expTypJud{\erasureof{\TStypEnv}}{\erasureof{\TtypEnv}}{\e}{\erasureof{\stype}}$, and the thesis follows since $\erasureof{\stype} = \erasureof{\ftypeOf{\stype}}$.

	In all other cases, the thesis follows since the \FSCAFI{} rules can be obtained by removing restrictions from their counterparts in \FSCAFIi{}: that types are restricted to peculiar kinds $\ltype$, $\rtype$ or $\stype$ (every rule except for \ruleNameSize{[T'-VAR]}); that captured variables have local type (in rules \ruleNameSize{[T'-A-FUN]} and \ruleNameSize{[T'-FOLD]}).
\end{proof}

\subsection{Equivalence between \FSCAFIi{} and \HFCprime{}}\label{ssec:NCi-equiv-HFCprime}

In order to prove the equivalence between \HFCprime{} and \FSCAFIi{}, we first need to define what it means for an expression to have an equivalent behaviour in the two languages.

\begin{defi}[Coherence]
	Assume that $\expTypJud{\LTStypEnv}{\TtypEnv}{\e}{\type}$ in \HFCprime{}/\FSCAFIi{} where $\TtypEnv = \overline\xname : \overline\type$. Let $\overline\anyvalue$ be \HFCprime{} values of type $\overline\type$, and let $\anyvalue(\deviceId)$ be defined by cases as $\lvalue(\deviceId) = \lvalue$, $\fvalue(\deviceId) = \lvalue$ where $\lvalue$ is such that $\envmap{\deviceId}{\lvalue} \in \fvalue$.

	We say that $\applySubstitution{\e}{\substitution{\overline\xname}{\overline\anyvalue}}$ has the \emph{same behaviour} in \HFCprime{} and \FSCAFIi{} whenever:
	\begin{itemize}
		\item if $\type$ is a local type, $\bsopsem{\deviceId}{\Trees}{\senstate}{\applySubstitution{\e}{\substitution{\overline\xname}{\overline\anyvalue}}}{\mkvtree{}{\lvalue}{\overline\vtree}}$ if and only if $\bsopsem{\deviceId,\deviceId}{\Trees}{\senstate}{\applySubstitution{\e}{\substitution{\overline\xname}{\overline\anyvalue(\deviceId)}}}{\mkvtree{}{\lvalue}{\overline\vtree'}}$ for some $\overline\vtree$ and $\overline\vtree'$;
		\item if $\type$ is a field type, $\bsopsem{\deviceId}{\Trees}{\senstate}{\applySubstitution{\e}{\substitution{\overline\xname}{\overline\anyvalue}}}{\mkvtree{}{\fvalue}{\overline\vtree}}$ if and only if $\bsopsem{\deviceId,\deviceId_i}{\Trees}{\senstate}{\applySubstitution{\e}{\substitution{\overline\xname}{\overline\anyvalue(\deviceId_i)}}}{\mkvtree{}{\lvalue_i}{\overline\vtree'}}$ for some $\overline\vtree$ and $\overline\vtree'$ where $\fvalue = \envmap{\deviceId_i}{\lvalue_i}$.
	\end{itemize}
\end{defi}

Assuming that the above coherence condition holds for built-in functions, then it holds for every expression, as shown in the following theorem.

\begin{thm}[Equivalence between \HFCprime{} and  \FSCAFIi{}] \label{thm:equivalence}
	Assume that for every built-in function $\bname$ with local arguments, $\bname(\overline\xname)$ has the same behaviour in \HFCprime{} and \FSCAFIi{} for every substitution of $\overline\xname$. Assume that $\expTypJud{\LTStypEnv}{\TtypEnv}{\e}{\type}$ in \HFCprime{}/\FSCAFIi{} where $\TtypEnv = \overline\xname : \overline\type$, and let $\overline\anyvalue$ be \HFCprime{} values of type $\overline\type$. Then $\applySubstitution{\e}{\substitution{\overline\xname}{\overline\anyvalue}}$ has the same behaviour in \HFCprime{} and \FSCAFIi{}.
\end{thm}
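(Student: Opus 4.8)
The plan is to argue by induction on the length of the operational derivation (plain structural induction on $\e$ does not suffice, since a function call unfolds a possibly recursive body, and since the rules project the environment $\Trees$ into subexpressions via $\pi$). Because ``same behaviour'' constrains only the \emph{roots} of the produced value-trees, leaving their subtrees free, I would first strengthen the statement to a relation $\vtree \mathrel{\sim_\deviceId} \vtree'$ between \HFCprime{} and \FSCAFIi{} value-trees: two related trees have the same number of children, corresponding children are related, and their roots agree \emph{up to projection}, i.e.\ if the \HFCprime{}-root is a neighbouring value $\fvalue$ then the \FSCAFIi{}-root is $\fvalue(\deviceId)$, and otherwise the roots coincide. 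This lifts pointwise to environments: $\Trees \sim \Trees'$ iff they have the same domain and $\Trees(\deviceId_i) \mathrel{\sim_{\deviceId_i}} \Trees'(\deviceId_i)$ for every $\deviceId_i$. The strengthened claim is then: if $\Trees \sim \Trees'$, then $\applySubstitution{\e}{\substitution{\overline{\xname}}{\overline{\anyvalue}}}$ evaluates in \HFCprime{} at $\deviceId$ with $\Trees$ to $\vtree$ iff $\applySubstitution{\e}{\substitution{\overline{\xname}}{\overline{\anyvalue}(\deviceId)}}$ evaluates in \FSCAFIi{} at $\deviceId$ against $\deviceId$ with $\Trees'$ to some $\vtree'$ with $\vtree \mathrel{\sim_\deviceId} \vtree'$, and, when $\type$ is a neighbouring type, the \FSCAFIi{}-evaluation against a neighbour $\deviceId_i$ yields a tree with root $\fvalue(\deviceId_i)$ where $\fvalue$ is the \HFCprime{}-root. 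The theorem follows by reading off roots.

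Most cases are a routine rule-by-rule comparison using the induction hypothesis. Variables and values are immediate from the shape of the substitution. An anonymous function is itself a (local) function value, identical in both calculi; here rule \ruleNameSize{[T'-A-FUN]}, which forbids capturing neighbouring-typed variables, is exactly what is needed. For a call $\e'(\overline{\e})$, the head has a local type, hence evaluates to the same function value $\funvalue$ on both sides; if $\funvalue$ is a built-in with local arguments the claim is precisely the hypothesis of the theorem (applied to already-local argument values); if $\funvalue$ is $\mathtt{map}$ one uses its \FSCAFIi{} reading $\mathtt{map}(f,\overline{v}) = f(\overline{v})$, so that evaluating it against $\deviceId$ (resp.\ $\deviceId_i$) amounts to applying $\funvalue'$ to the $\deviceId$- (resp.\ $\deviceId_i$-) projections of its neighbouring arguments, matching the pointwise semantics of $\mathtt{map}$ in \HFCprime{}; if $\funvalue$ is user-defined or anonymous, one substitutes the argument values --- which by the induction hypothesis are the correspondingly projected ones --- into $\body{\funvalue}$ and applies the induction hypothesis, noting that the domain restriction along $\funvalue$ is performed identically on both sides and preserves $\sim$ (the first child of the call tree has root $\funvalue$ in both). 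Rule \ruleNameSize{[E-REP]} is direct: by \ruleNameSize{[T'-REP]} the initial value and the update have a local return type, so by the induction hypothesis they behave identically, and the value retrieved from $\Trees$ on later rounds is a local value, hence literally the same on both sides.

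The substantive cases are $\nbrK$ and $\foldK$. For $\nbrK\{\e\}$, by \ruleNameSize{[T'-NBR]} the argument has a local return type and so evaluates identically; in \HFCprime{} the result is the neighbouring value $\fvalue$ sending $\deviceId$ and each aligned neighbour to the value of $\e$ there, whereas in \FSCAFIi{} evaluating against $\deviceId$ (rule \ruleNameSize{[E-NBR-LOC]}) returns that same local value $\fvalue(\deviceId)$, and evaluating against a neighbour $\deviceId_i$ (rule \ruleNameSize{[E-NBR]}) returns $\Trees'(\deviceId_i)$, whose root is exactly the value $\deviceId_i$ computed for $\e$ in its last round, i.e.\ $\fvalue(\deviceId_i)$; the two index sets coincide because \HFC{}'s domain-alignment property makes $\domof{\fvalue}$ equal to $\{\deviceId\}\cup\domof{\Trees}$, which is also the set of neighbours against which the \FSCAFIi{}-evaluation does not fail. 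For $\foldK(\e_1,\e_2,\e_3)$, \HFCprime{} folds the aggregator $\funvalue$ (value of $\e_2$) from the local value $\lvalue$ (value of $\e_1$) over the neighbouring value $\fvalue$ (value of $\e_3$), excluding $\deviceId$, while rule \ruleNameSize{[E-FOLD]} in \FSCAFIi{} folds $\funvalue$ from $\lvalue$ over $\e_3$ evaluated against each neighbour of $\domof{\Trees'}\setminus\{\deviceId\}$ for which evaluation succeeds. By the induction hypothesis on $\e_1$, $\e_2$, $\e_3$ the per-neighbour operands equal the values $\fvalue(\deviceId_i)$, and $\funvalue$ --- a pure local function --- is applied in the same fixed order on device identifiers on both sides, so the folds yield the same result; the remaining point is that the index sets agree, and this is where restriction R2 (the body $\e_3$ captures no neighbouring-typed variable) together with domain alignment is used: it guarantees that the only neighbour-dependent subexpressions of $\e_3$ are $\nbrK$-expressions and relational sensors occurring \emph{literally} in $\e_3$, whose domains jointly determine $\domof{\fvalue}$ and for which, by alignment, ``$\deviceId_i$ is in the domain in \HFCprime{}'' coincides with ``the \FSCAFIi{}-evaluation of $\e_3$ against $\deviceId_i$ succeeds''.

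I expect the main obstacle to be exactly this reconciliation --- \HFC{}'s single, global notion of the domain of a neighbouring value versus \FSCAFIi{}'s per-neighbour success or failure of evaluation --- together with the bookkeeping needed to propagate the strengthened invariant $\sim$ through the repeated $\pi$-projections performed by the rules. Once one observes that restriction R1 forces every built-in other than $\mathtt{map}$ and $\foldK$ to take local arguments (so it is only ever applied to matching local values on both sides) and that restriction R2 aligns the index sets inside $\foldK$, the rest reduces to a mechanical matching of each \FSCAFIi{} rule with its \HFCprime{} counterpart.
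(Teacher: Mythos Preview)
Your proposal is correct and follows the same case-by-case route as the paper's proof, with the same key ingredients: matching each \FSCAFIi{} rule against its \HFCprime{} counterpart, invoking the built-in hypothesis for local-argument built-ins, handling $\mathtt{map}$ via its pointwise reading, and using restriction~R2 to reconcile the domain of the \HFCprime{} neighbouring value with the set of neighbours for which the \FSCAFIi{} evaluation of $\e_3$ succeeds.

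Where you differ from the paper is in rigour, not in strategy. The paper writes ``induction on the syntax of closed expressions $\e$, simultaneously for all possible network environments'' and then, in the function-call case, silently applies the inductive hypothesis to $\body{\funvalue}$ after substitution---which is not a structural subexpression of $\e(\overline\e)$. Your choice of induction on derivation length fixes this cleanly. Similarly, the paper's Definition of Coherence uses the \emph{same} $\Trees$ on both sides and leaves implicit how the invariant survives the $\pi$-projections; your strengthening to a relation $\vtree \mathrel{\sim_\deviceId} \vtree'$ on value-trees (and hence on environments) makes this bookkeeping explicit and is exactly what is needed to push the hypothesis through rules \ruleNameSize{[E-D-APP]} and \ruleNameSize{[E-FOLD]}. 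One small imprecision: when you say the neighbour-dependent subexpressions of $\e_3$ are those ``occurring literally in $\e_3$'', this is not quite right---$\nbrK$ may appear inside bodies of functions called from $\e_3$---but the argument you actually need (and that the paper also uses) is only that, thanks to R2, substituting the free variables of $\e_3$ injects no field values, so $\e_3$ becomes a closed expression whose neighbour dependence arises entirely from its own evaluation.
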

\begin{proof}
	\revised{The proof follows from standard induction on expressions (see Appendix \ref{apx:equivalence}).}
\end{proof}

\subsection{\FSCAFI{} Expressiveness} \label{ssec:expressiveness}

In this section, we argue that \FSCAFI{} is an expressive language for distributed computations. \Cref{sssec:universality} shows that \FSCAFIi{} contains many relevant aggregate programs, including those providing universality for distributed computations. In \Cref{sssec:patterns}, we argue that most of the \HFC{} programs which are not directly interpretable in \FSCAFI{} can be automatically refactored (while preserving their behaviour) in order to fit within \HFCprime{} (hence \FSCAFIi{}), enlarging the class of aggregate programs expressible in \FSCAFI{} to virtually all \HFC{} programs. This process is exemplified in \Cref{sssec:selfstab}, where the self-stabilising fragment of \HFC{} \cite{viroli:selfstabilisation} is ported to \FSCAFI{} through the refactoring just introduced. Finally, \Cref{sssec:scafi_nonaligned} argues that the \FSCAFI{} programs that are not in \FSCAFIi{} can fruitfully extend the expressive power of \HFC{}.

\subsubsection{\FSCAFIi{} examples: G and T blocks and universality} \label{sssec:universality}

\begin{figure}
\begin{lstlisting}[language={hfc}]
def G(source, initial, metric, accumulate) {
  rep ( pair(source, initial) ) { (x) =>
    foldhood( pair(source, initial), min,
              pair(nbr{fst(x)}+metric(), accumulate(nbr{snd(x)})) )
} }
def T(initial, zero, decay) {
  rep ( initial ) { (x) => min(max(decay(x), zero), initial)
} }
\end{lstlisting}
\caption{The G and T blocks (code reported from literature \cite{viroli:selfstabilisation}).} \label{fig:gct}
%def C(potential, local, null, accumulate) {
%  rep ( pair(local, uid()) ) { (x) =>
%    pair(accumulate(mux(nbr{potential} < potential && nbr{snd(x)} = uid(),
%           nbr{fst(x)}, null), local),
%         snd(foldhood(pair(potential, uid()), max, nbr{pair(potential, uid())})) )
%} }
\end{figure}

The correspondence between \FSCAFIi{}  and \HFCprime{} given by Theorem \ref{thm:equivalence}, although restricted to a fragment of the two languages, is applicable to many relevant aggregate programs, allowing for the automatic transfer of algorithms and properties proven in \HFC{} to \FSCAFI{}.
As a first paradigmatic example, consider the G and T blocks, proposed as part of a combinator basis able to express most aggregate systems \cite{bv:building:blocks}. The \HFC{} formalisation of their code \cite{viroli:selfstabilisation} is reported in \Cref{fig:gct}, and has the same behaviour in \FSCAFI{} since it actually belongs to \HFCprime{}/\FSCAFIi{}.

\begin{figure}
\begin{lstlisting}[language={hfc}]
// previous round value of v
def older(v, null) {
  fst(rep (pair(null, null)) { (old) => pair(snd(old), v) })
}
// gathers values from causal past events into a labelled DAG
def gather(node, dag) {
  let old = older(dag, dag_empty()) in
  let next = dag_join(foldhood(old, dag_union, nbr{dag}), node) in
  if (next == node) { dag } { gather(node, dag_union(dag, next)) }
}
def f_field(e, v...) {
  f( gather(dag_node(e, v...), dag_node(e, v...)) )
}
\end{lstlisting}
\caption{The universal translation of distributed Turing Machines (code reported from literature \cite{abdv:universality}).} \label{fig:universality}
% DELAYED UNIVERSALITY FOLLOWS:
%// gathers values from past events into a labelled tree with marked nodes
%def delayed_gather(e, v...) {
%  rep (dag_empty()) { (old) =>
%  // for each neighbour, link its old DAG with the neighbour event
%  let neigh = nbr{ dag_join(old, dag_node(e, False, v...)) } in
%  // merge the obtained trees, and link the result with the current event
%  dag_join(unionhood(old, neigh), dag_node(e, True, v...))
%} }
%// step to successor event e’ of e in G, if it exists and the gathering is complete up to e’
%def next_completed_event(G, e) {
%  if (last_event(G, e) or not dag_true(dag_restrict(G, next_event(G, e))) { e } {
%    next_event(G, e) }
%}
%def f_delayed(e, v...) {
%  let G = delayed_gather(e, v...) in
%  let delayed_event = rep (e) { (old) => next_completed_event(G, old) } in
%  f( dag_restrict(G, delayed_event) )
%}
\end{figure}

As a further example, consider the code in \Cref{fig:universality}, which is also in \HFCprime{}/\FSCAFIi{} and encodes the behaviour of any distributed Turing Machine (expressed as a function $\funvalue$ taking as input in every firing the whole collection of causally available data) as an \HFC{} function. This code was used to prove \emph{Turing Universality} for \HFC{},\footnote{A programming model for distributed systems is Turing-universal if and only if it is able to replicate the behaviour of any distributed Turing machine.} but in fact it also proves that \HFCprime{}/\FSCAFIi{} hence \FSCAFI{} are Turing-universal as well (assuming a sufficient collection of built-ins).

%\todo[inline]{@RC: please double-check and improve the code}

\subsubsection{Refactoring of \HFC{} programs into \HFCprime{}} \label{sssec:patterns}

Despite many common aggregate functions being in \HFCprime{}/\FSCAFIi{}, not every relevant such function belongs to this fragment. In particular, the restrictions imposed by the type system in Figure \ref{fig:restrictedTypes} prohibit the common pattern of functions with field arguments, folding those arguments in their body. However, refactoring strategies exist that we can use in order to turn most \HFC{} programs into \HFCprime{} while preserving their behaviour, by converting a field argument of a function into a local argument, thus allowing its capture within $\foldK$ statements.
\begin{enumerate}
    \item[(1)] \emph{Abstracting}: a field argument of a function may be passed ``by name'' through
            \[
            ( (\xname) \toSymK{ \e_1 } )(\e_2) \quad\longrightarrow\quad ( (\xname) \toSymK{ \applySubstitution{\e_1}{\substitution{\xname}{\xname()}} } )(() \toSymK{\e_2})
            \]
\end{enumerate}
This refactoring preserves the behaviour of a program, provided that either \emph{(i)} the abstracted parameter does not depend on the current domain (e.g., relational sensors such as \texttt{nbrRange}), or \emph{(ii)} the parameter does not occur within branches in the body of the function (so that the evaluation is performed in the function with respect to the same domain as in the argument). If instead it does occur within a branch and depends on the current domain, its deferred evaluation within the function body will generally produce a different result. For example, consider function \texttt{counthood} counting the number of neighbours in the current domain:
\begin{lstlisting}[language={hfc}]
def counthood() { foldhood(0, +, 1) }
\end{lstlisting}
and suppose that the argument is $\e_2 = \nbrK\{\mathtt{counthood}()\}$, and the function $\e_1$ is
\[
\ifK (\mathtt{temperature}() < 30) \{ 0 \} \{ \foldK(\mathtt{counthood}(), \max, \xname) \}.
\]
This functions returns 0 on \emph{``cold''} devices; otherwise, it returns the maximum number of \emph{\underline{total}} neighbours that any of the \emph{hot} neighbours has (a metric that may be used, e.g., to trigger an alarm). After the \emph{abstracting} refactoring, this result changes slightly to the maximum number of \emph{\underline{hot}} neighbours that any of the \emph{hot} neighbours has. This reduces the computed value, possibly reducing the effectiveness of the metric and delaying the triggering of the alarm. However, most programs with this characteristic can still be refactored, by resorting to the following extended refactoring whenever the domain dependency is due to sub-expressions of local type.
\begin{enumerate}
    \item[(2)] \emph{Abstracting with parameters}: a field argument $\e_2[\overline\e']$ where $\overline\e'$ have local type can be passed as a function with parameters through
            \[
            ( (\xname) \toSymK{ \e_1 } )(\e_2[\overline\e']) \quad\longrightarrow\quad ( (\xname,\overline\yname) \toSymK{ \applySubstitution{\e_1}{\substitution{\xname}{\xname(\overline\yname)}} } )((\overline\yname) \toSymK{\e_2[\overline\yname]}, \overline\e')
            \]
\end{enumerate}
This refactoring can address the previous example, by leaving the $\mathtt{counthood}()$ sub-expression (which is the one depending on the current domain and has local type) as a parameter:
\begin{align*}
((\xname,\yname) \toSymK{\ifK (\mathtt{temperature}() < 30) \{ 0 \} \{ \foldK(\mathtt{counthood}(), \max, \xname(\yname)) \}}) \quad \\
((\yname) \toSymK{\nbrK\{\yname\}}, \mathtt{counthood}())
\end{align*}

\begin{thm}[Correctness of the abstracting with parameters refactoring]\label{lem:abstracting:rewrite}
	Assume an \HFC{} expression $\e$ consists in a function $(\xname) \toSymK{ \e_1 }$ called with a field argument $\e_2[\overline\e']$ where $\overline\e'$ all have local type, and consider the refactored expression $\e''$ as in (2). If the argument $\xname$ does not occur within branches in $\e_1$, then $\e''$ has the same behaviour as $\e$. If the result of $\e_2[\overline\yname]$, for any given values for $\overline\yname$ and for any domain of computation $\domof{\Trees}$, is a field value $\fvalue$ such that $\fvalue(\deviceId)$ does not depend on $\domof{\Trees}$ for any $\deviceId$, then $\e''$ has the same behaviour as $\e$.
\end{thm}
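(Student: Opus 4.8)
The plan is a bisimulation-style argument at the level of the network semantics: I would show that, running the whole network with every device executing $\e$ versus every device executing $\e''$ (wherever $\e$ resp.~$\e''$ sits inside a program), the two runs stay in lock-step, producing the same output at every device in every round, by induction on the sequence of firing events. The invariant carried along is a correspondence $\sim$ between the value-tree environment $\Trees$ accumulated in the $\e$-run at a device and the environment $\Trees''$ accumulated in the $\e''$-run at that same device: morally, $\Trees''$ is obtained from $\Trees$ by splicing in, at the hole positions, the extra $((\overline\yname)\toSymK{\e_2[\overline\yname]})(\ldots)$ application layers introduced by the refactoring. All the real work is in the single-firing step, which I would handle by a structural induction on the expression; showing the two runs produce not only the same root value but $\sim$-corresponding value-trees is what lets the argument compose as a congruence.

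For a single firing I would unfold both evaluations by rule \ruleNameSize{[E-D-APP]}. Evaluating $\e = ((\xname)\toSymK{\e_1})(\e_2[\overline\e'])$ against $\Trees$: the function position gives the closure $\funvalue_1 = (\xname)\toSymK{\e_1}$; the argument $\e_2[\overline\e']$ is evaluated against $\piIof{2}{\Trees}$, and — since the $\overline\e'$ occur in $\e_2$ outside any branch — the local-typed $\overline\e'$ are evaluated against the full domain $D_0 := \domof{\Trees}$ to local values $\overline\anyvalue'$, producing a neighbouring value $\fvalue$ with $\domof{\fvalue} = D_0$ by domain alignment; finally $\applySubstitution{\e_1}{\substitution{\xname}{\fvalue}}$ is evaluated against $\piBof{\funvalue_1}{\Trees}$, still of domain $D_0$ since every device ran the syntactically identical $\funvalue_1$. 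Symmetrically, evaluating $\e'' = ((\xname,\overline\yname)\toSymK{\applySubstitution{\e_1}{\substitution{\xname}{\xname(\overline\yname)}}})((\overline\yname)\toSymK{\e_2[\overline\yname]}, \overline\e')$ against $\Trees''$ first evaluates the $\overline\e'$ against $D_0$ to the same $\overline\anyvalue'$ (by the correspondence invariant their evaluation in the $\e''$-run mirrors the $\e$-run), binds $\xname$ to the closure $\funvalue_2 := (\overline\yname)\toSymK{\e_2[\overline\yname]}$, and evaluates $\applySubstitution{\e_1}{\substitution{\xname}{\funvalue_2(\overline\anyvalue')}}$ against $D_0$. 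So it suffices to compare the evaluations of $\applySubstitution{\e_1}{\substitution{\xname}{\fvalue}}$ and $\applySubstitution{\e_1}{\substitution{\xname}{\funvalue_2(\overline\anyvalue')}}$.

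I would then prove a \emph{replacement lemma} by structural induction on $\e_1$: if at every occurrence of $\xname$ in $\e_1$ — reached while evaluating against an environment of some domain $D_i$ — the two replacement terms $\fvalue$ and $\funvalue_2(\overline\anyvalue')$ produce the same root value, then the two instantiations of $\e_1$ produce the same root value (and $\sim$-corresponding value-trees). Since $\xname$ has field type, it never occupies a function, $\nbrK$-body, or $\repK$-state position, so its value is consumed only as an argument of $\mathtt{map}/\foldK$ — built-ins whose interpretation depends only on the root neighbouring value — or is threaded into a nested call; thus only the root of its sub-derivation matters, and every sub-derivation not touching $\xname$ is unchanged. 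Two facts feed in: every $\xname$-occurrence is reached at a domain $D_i \subseteq D_0$ (branching and function application only shrink the domain, $\nbrK$- and $\foldK$-bodies keep it, and $\foldK$ aggregates over a subset of $\domof{\Trees}$), and $\domof{\fvalue} = D_0$, so $\proj{\fvalue}{D_i}$ is meaningful. To discharge the lemma's hypothesis, write $\fvalue_D$ for the neighbouring value obtained by evaluating $\e_2[\overline\anyvalue']$ against a coherent environment of domain $D$ (well defined for $D = D_0$ always, and for all $D$ under the second hypothesis): evaluating $\fvalue = \fvalue_{D_0}$ against $D_i$ yields $\proj{\fvalue}{D_i}$, while evaluating $\funvalue_2(\overline\anyvalue')$ against $D_i$ performs only a trivial alignment (as $\funvalue_2$ is uniform) and then evaluates $\e_2[\overline\anyvalue']$ against $D_i$, yielding $\fvalue_{D_i}$. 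Under the first hypothesis $\xname$ occurs outside every branch of $\e_1$, so the sole occurrence is reached at $D_i = D_0$ where $\proj{\fvalue}{D_0} = \fvalue = \fvalue_{D_0}$. Under the second hypothesis, $\e_2[\overline\yname]$ evaluated against any domain $D$ returns $\proj{\fvalue^*}{D}$ for a fixed field $\fvalue^*$ (its per-device components being domain-independent, its domain being $D$ by alignment), so $\fvalue = \proj{\fvalue^*}{D_0}$ and, using $D_i \subseteq D_0$, $\proj{\fvalue}{D_i} = \proj{\fvalue^*}{D_0 \cap D_i} = \proj{\fvalue^*}{D_i} = \fvalue_{D_i}$. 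Either way the hypothesis holds, closing the single-firing step, and the outer induction on firings gives the theorem. (I would also remark that the statement must be read with the $\overline\e'$ occurring in $\e_2$ outside branches, since otherwise hoisting them to the call site would already alter their domain-dependent local values.)

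I expect the main obstacle to be making the correspondence invariant $\sim$ and the replacement lemma precise enough to survive alignment: the sub-derivation produced for the field argument has different shape in the two programs — a one-node tree carrying a restricted neighbouring value in the $\e$-run versus a function-application tree (children: $\funvalue_2$, the $\overline\anyvalue'$, and the body) in the $\e''$-run — and one must check that no positional alignment $\piIof{i}{\cdot}$ nor tag-based alignment $\piBof{\funvalue}{\cdot}$ ever distinguishes the two surrounding runs. The reason it goes through is precisely that the inserted layer is the application of the closed, everywhere-identical $\funvalue_2$: the restriction $\piBof{\funvalue_2}{\cdot}$ drops no neighbour, and domain alignment forces the domain reached inside $\e_2$ in the $\e''$-run to equal the one against which the corresponding neighbour's stored data was originally recorded; threading this consistently through the induction is the delicate bookkeeping of the proof.
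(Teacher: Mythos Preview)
Your proposal is correct and rests on the same key observation as the paper's proof: the refactoring merely shifts where $\e_2[\overline\e']$ is evaluated, and the only way this can change the outcome is if the domain of aligned neighbours differs between the argument position and the occurrences of $\xname$ in the body. The paper dispatches this in two short paragraphs by appealing informally to ``compositionality except for the set of aligned neighbours'' and then noting that either the domain is irrelevant (second hypothesis) or unchanged (first hypothesis); you have unpacked the same argument into an explicit network-level bisimulation with a value-tree correspondence $\sim$ and a structural replacement lemma, which is more rigorous but not conceptually different.
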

\begin{proof}
	First, notice that the semantics of \HFC{} is compositional except for the set of aligned neighbours: in other words, the evaluation result of an expression depends on the enclosing expression only in the determination of the set of aligned neighbours; once that set is fixed, the result of the expression is fixed too. The effect of refactoring (2) is shifting the evaluation of $\e_2[\overline\e']$ in a different part of the evaluation tree: function call itself (by Rule \ruleNameSize{[E-APP]}) to several points in the function body (the occurrences of $\xname$).

	If the evaluation of $\e_2[\overline\e']$ does not depend on the domain of evaluation, then this computation shift has no effect, concluding the proof for that case. If $\xname$ does not occur within branches, the set of aligned neighbours for each of its occurrences is the same as the set of aligned neighbours for the function call itself. It follows that the evaluation of $\e_2[\overline\e']$ in place of every occurrence of $\xname$ has to produce the same result as in the argument, concluding the proof.
\end{proof}

The following simpler refactoring can address cases when the argument is a simple $\nbrK$-expression, regardless of branches in $\e_1$.
\begin{enumerate}
    \item[(3)] \emph{Deferring}: an $\nbrK$ in the argument may be transferred into the body as
            \[
            ( (\xname)~\toSymK{ \e_1 } )(\nbrK\{\e_2\}) \quad\longrightarrow\quad ( (\xname)~\toSymK{ \applySubstitution{\e_1}{\substitution{\xname}{\nbrK\{\xname\}}} } )(\e_2)
            \]
\end{enumerate}
This refactoring is a simplification of refactoring (2), abstracting $\nbrK\{\e_2\}$ with respect to the parameter $\e_2$ of local type. In fact, refactoring (2) in this case would produce:
\[
( (\xname,\yname) \toSymK{ \applySubstitution{\e_1}{\substitution{\xname}{\xname(\yname)}} } )((\yname) \toSymK{\nbrK\{\yname\}}, \e_2)
\]
and inlining $(\yname) \toSymK{\nbrK\{\yname\}}$ into the occurrences of the first parameter $\xname$ yields exactly the result of refactoring (3).

Notice that the general refactoring (2) and its common special cases (1) and (3) cover practically any realistic situation. In order for neither of those to be applicable, we would need the argument $\e_2$ to be a field expression which depends on the current domain (and not only because of sub-expressions of local type), and which is passed into a function $\e_1$ folding it inside a branching statement: we are not aware of any meaningful or realistic situation meeting these requirements.
Thus, through these refactorings we can convert practically any \HFC{} program into \HFCprime{}, hence into \FSCAFI{}, although the resulting refactoring may be extensive: since the refactoring of a function depends on its call patterns, more than one version of a function may be needed to cover all of them. However, this is \emph{not} an issue for a ``native'' \FSCAFI{} programmer, which would just interpret the above refactorings as \emph{programming patterns}, following them from the start of application development. Interpreted as programming patterns, the above rules suggest deferring $\nbrK$ applications when possible, or using call-by-name instead of call-by-value on field arguments otherwise.

\subsubsection{An example: refactoring the self-stabilising fragment of \HFC{}} \label{sssec:selfstab}

A paradigmatic example of \HFC{} programs that can be converted into \HFCprime{} through the refactorings in \Cref{sssec:patterns} is the \emph{self-stabilising fragment} \cite{viroli:selfstabilisation}. We say that a time- and space-distributed data is \emph{stabilising} iff it remains constant in every point after a certain time $t_0$, and its \emph{limit} is the value assumed after $t_0$. We say that a distributed program is \emph{self-stabilising} iff given stabilising inputs and topology, it produces a stabilising output which depends only on the limits of the inputs and topology (and not on the concrete scheduling of events, nor on the input values before stabilisation).

A subset of \HFC{}, called \emph{self-stabilising fragment}, has been proved in the literature \cite{viroli:selfstabilisation} to consist of self-stabilising programs only. This fragment makes use of functions folding field arguments, and thus is \emph{not} part of \HFCprime{} and cannot be directly translated into \FSCAFI{} preserving its behaviour (or self-stabilisation). However, the refactorings in \Cref{sssec:patterns} can be applied to obtain the equivalent fragment in Figure \ref{fig:fragment} that does belong to \HFCprime{}/\FSCAFIi{}. The self-stabilising fragment combines syntactic requirements with mathematical requirements on functions, annotated in the figure through superscripts on function names.

\paragraph{$\mathsf{C}$ (Converging)}
A function $\funvalue(\anyvalue_1, \anyvalue_2, \overline\anyvalue)$ is said converging iff, in every firing, its return value is closer to $\anyvalue_2$ than the maximal distance that the two arguments $\anyvalue_1$ and $\anyvalue_2$ have in any neighbour firing (according to any metric measuring that distance).

\paragraph{$\mathsf{M}$ (Monotonic non-decreasing)}
A stateless\footnote{A function $\funvalue(\overline\xname)$ is \emph{stateless} iff given fixed inputs $\overline\anyvalue$ always produces the same output, independently of the environment or specific firing event. In other words, its behaviour corresponds to that of a mathematical function.} function $\funvalue(\xname, \overline\xname)$ with arguments of local type is monotonic non-decreasing in its first argument iff whenever $\lvalue_1 \leq \lvalue_2$, also $\funvalue(\lvalue_1, \overline\lvalue) \leq \funvalue(\lvalue_2, \overline\lvalue)$.

\paragraph{$\mathsf{P}$ (Progressive)}
A stateless function $\funvalue(\xname, \overline\xname)$ with local arguments is progressive in its first argument iff $\funvalue(\lvalue, \overline\lvalue) > \lvalue$ or $\funvalue(\lvalue, \overline\lvalue) = \top$ (where $\top$ is the maximal element of the relevant type).

\paragraph{$\mathsf{R}$ (Raising)}
A function $\funvalue(\lvalue_1, \lvalue_2, \overline\anyvalue)$ is raising with respect to partial orders $<$, $\vartriangleleft$ iff:
\textit{(i)} $\funvalue(\lvalue, \lvalue, \overline\anyvalue) = \lvalue$;
\textit{(ii)} $\funvalue(\lvalue_1, \lvalue_2, \overline\anyvalue) \geq \min(\lvalue_1, \lvalue_2)$;
\textit{(iii)} either $\funvalue(\lvalue_1, \lvalue_2, \overline\anyvalue) \vartriangleright \lvalue_2$ or $\funvalue(\lvalue_1, \lvalue_2, \overline\anyvalue) = \lvalue_1$.

\begin{figure}[!t]
\centering
\begin{small}
\centerline{\framebox[\linewidth]{$
	\begin{array}{lcl@{\hspace{-15pt}}r}
		\s & \BNFcce &  \xname \; \BNFmid \; \anyvalue  \; \BNFmid \; \letK \; \xname = \s \; \inK \; \s \; \BNFmid \; \funvalue(\overline\s) \; \BNFmid \; \ifK (\s) \{ \s \} \{ \s \} \; \BNFmid \; \nbrK\{\s\}
		&   {\footnotesize \mbox{self-stabilising expression}} \\[3pt]
		&&  \; \BNFmid \;  \repK(\e)\{ (\xname) \toSymK{\funvalue^\mathsf{C}(\xname, \s, \overline\e)} \}\\[3pt] % deferring nbrs to function body
		&&  \; \BNFmid \;  \repK(\e)\{ (\xname) \toSymK{\funvalue((\yname,\zname) \toSymK{\muxK(\nbrK\{\yname\} < \yname, \nbrK\{\xname\}, \zname)}, \overline\s)} \}\\[3pt] % abstracting the mux expression
		&&  \; \BNFmid \;\repK(\e)\{ (\xname) \toSymK{\funvalue^\mathsf{R}(\foldK(\s, \min, \funvalue^\mathsf{MP}(\nbrK\{\xname\}, \overline\s)), \xname, \overline\e)} \} % rewriting of minHoodLoc in terms of foldhood
	\end{array}
	$}
}
\end{small}
\caption{Syntax of a self-stabilising fragment of field calculus expressions, where self-stabilising expressions $\s$ occurring inside a $\repK$ statement cannot contain free occurrences of the $\repK$-bound variable $\xname$.}
\label{fig:fragment}
\end{figure}

\begin{thm}[Self-stabilising fragment of  \FSCAFI{}]
	Every self-stabilising expression according to the fragment in \Cref{fig:fragment} is a self-stabilising \FSCAFI{} expression.
\end{thm}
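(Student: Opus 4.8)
The plan is to reduce the claim, through two already-established bridges, to the known fact~\cite{viroli:selfstabilisation} that the original self-stabilising fragment of \HFC{} consists of self-stabilising programs. The first bridge is that the grammar of \Cref{fig:fragment} is, by the patterns of \Cref{sssec:patterns}, the \emph{behaviour-preserving refactoring} of that original fragment; the second is \Cref{thm:equivalence}, which --- since (as we first check) every expression of \Cref{fig:fragment} is typable by the restrictive system of \Cref{fig:restrictedTypes} and hence lies in \HFCprime{}/\FSCAFIi{} --- equates the \HFCprime{} and \FSCAFIi{} behaviour of these expressions. Because self-stabilisation constrains only the observable output, it is preserved along both bridges, giving the theorem.

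Concretely, I would first verify that each of the four productions of \Cref{fig:fragment} type-checks in \Cref{fig:restrictedTypes}: every $\foldK$ body and every anonymous function captures only local-return-typed variables (the $\repK$-bound $\xname : \stype$ and the free variables of the stabilising subexpressions $\overline\s$), the $\nbrK$-tagged parts are applied to such local values, and the aggregators $\min$, $\muxK$, $\funvalue^\mathsf{MP}$, $\funvalue^\mathsf{R}$ have local argument/return types --- so the fragment is both \HFCprime{} and \FSCAFIi{}. Next I would match each production against the corresponding production of the self-stabilising fragment of~\cite{viroli:selfstabilisation}, exhibiting it as the image of the \emph{deferring} refactoring (3) on bare $\nbrK$-arguments together with the \emph{abstracting with parameters} refactoring (2) on field arguments that are folded in a function body (as in the last $\repK$-production). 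For each case the side conditions of \Cref{lem:abstracting:rewrite} must be checked; granted them, the refactoring leaves the per-device output unchanged on every run, and the mathematical annotations $\mathsf{C},\mathsf{M},\mathsf{P},\mathsf{R}$ are unaffected because only the call structure, not the functions, is altered. Hence every expression of \Cref{fig:fragment}, read in \HFC{} (equivalently \HFCprime{}, which inherits \HFC{}'s semantics), is self-stabilising by~\cite{viroli:selfstabilisation}.

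It then remains to move from \HFCprime{} to \FSCAFI{}. Self-stabilisation is a property of the map, over all fair schedules of sensor inputs and topology, to the resulting sequences of per-device output values (roots of value-trees). By \Cref{thm:equivalence} --- under the standing assumption that built-ins behave coherently across the two calculi --- each expression of \Cref{fig:fragment} has the same device behaviour in \HFCprime{} and \FSCAFIi{}; and since the network semantics of the two calculi are identical modulo the device semantics (\Cref{sec-calculus-network-semantics}), a routine induction on the transitions of \Cref{fig:networkSemantics} lifts this to equality of the observable output sequences on every run. Therefore self-stabilisation transfers from \HFCprime{} to \FSCAFIi{}, hence to \FSCAFI{}, which is the claim.

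The main obstacle is the middle part: establishing, production by production, that \Cref{fig:fragment} is genuinely the behaviour-preserving refactoring of the original self-stabilising fragment, i.e.\ that the hypotheses of \Cref{lem:abstracting:rewrite} (and of its special case, refactoring (3)) hold throughout. For the last $\repK$-production in particular, one must argue that the $\nbrK\{\xname\}$ occurring inside $\funvalue^\mathsf{MP}(\cdot,\overline\s)$ and then inside $\foldK$ sits in a position whose set of aligned neighbours coincides with that of the enclosing $\repK$ body (no intervening branch), and that any residual dependency of $\overline\s$ on the current domain is mediated only by local-typed subexpressions, so that the refactoring applies with its behaviour-preservation guarantee. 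The remaining work --- lifting the device-level equivalences of \Cref{lem:abstracting:rewrite} and \Cref{thm:equivalence} to equality of output sequences along arbitrary network evolutions --- is a standard induction on the network rules.
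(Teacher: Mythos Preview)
Your approach is essentially the paper's: show that the fragment of \Cref{fig:fragment} is a behaviour-preserving refactoring of the original self-stabilising fragment via \Cref{lem:abstracting:rewrite}, then (as you make explicit where the paper leaves it implicit) transfer to \FSCAFI{} through \Cref{thm:equivalence}. One minor miscalibration: the paper applies \emph{deferring} to the converging production and \emph{abstraction with parameters} to the \emph{acyclic} (second) $\repK$-production, while the third (minimising) production is treated as identical to the original modulo inlining of $\nbrlt$ and $\texttt{minHoodLoc}$---so your ``main obstacle'' concerning that last production does not actually arise.
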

\begin{proof}
	Notice that the fragment in \Cref{fig:fragment} can be obtained from that in \cite[Figure 2]{viroli:selfstabilisation} by means of the following two refactorings.
	\begin{itemize}
		\item In the \emph{converging} $\repK$ expression $\repK(\e)\{ (\xname) \toSymK{\funvalue^\mathsf{C}(\xname, \s, \overline\e)} \}$, deferring is applied to transfer $\nbrK$ in arguments to the body of the function $\funvalue^\mathsf{C}$, adapting the definition of \emph{converging} function accordingly. By Theorem \ref{lem:abstracting:rewrite}, these expressions have the same behaviour as those in \cite[Figure 2]{viroli:selfstabilisation}, which are self-stabilising by \cite[Theorem 1]{viroli:selfstabilisation}.
		\item In the \emph{acyclic} $\repK$ expression $\repK(\e)\{ (\xname) \toSymK{\funvalue((\yname,\zname) \toSymK{\muxK(\nbrK\{\yname\} < \yname, \nbrK\{\xname\}, \zname)}, \overline\s)} \}$, abstraction is applied to the first argument with respect to parameters $\yname, \zname$ (whose values are added into the additional parameters $\overline\s$ passed to $\funvalue$). By Theorem \ref{lem:abstracting:rewrite}, these expressions have the same behaviour as those in \cite[Figure 2]{viroli:selfstabilisation}, which are self-stabilising by \cite[Theorem 1]{viroli:selfstabilisation}.
	\end{itemize}
	The rest of the fragment is identical modulo inessential modifications (expansion of the $\nbrlt$ and $\texttt{minHoodLoc}$ functions into their definition), concluding the proof.
\end{proof}

\subsubsection{\FSCAFI{} programs that cannot be straightforwardly expressed as \HFC{} programs} \label{sssec:scafi_nonaligned}

As argued in \cite{eCAS:alignment}, programs such as \emph{updatable metrics} and \emph{combined Boolean restriction} are not conveniently expressed in \HFC{}. In the former case, we can use the following general scheme for updatable functions, first proposed in \cite{audrito2019tocl}:
\begin{lstlisting}[language={hfc}]
def up(injecter) {
  snd( rep(injecter()) {
    (x) => { foldhood(injecter(), max, nbr{x}) }
} ) }
\end{lstlisting}
where $\mathtt{injecter}$ is a function returning a pair $\langle \mathtt{version~number}, \mathtt{~function~code} \rangle$, and the built-in operator $\mathtt{max}$ selects the pair with the highest version number among its arguments. This procedure defines a perfectly reasonable ``upgradeable function'' by spreading functions with higher version number throughout devices. However, it is not allowed by the type system of \HFC{} for functions returning fields, such as metrics (which usually have type $() \rightarrow \ftypeOf{\ntype}$). %: whenever such a function is first injected, it is computed into the restricted environment consisting of only the current device, returning a field that is not aligned with respect to the larger calling context.
This scheme can instead be used in \FSCAFI{} (as shown in Section \ref{sec-case}), and works properly provided that new versions are injected at a slow rate, and an occasionally empty domain of a field-like expression does not produce critical effects.

Another situation where the permissive behaviour of \FSCAFI{} is crucial is that of \emph{combined Boolean restriction}. In this setting, a field-like expression $\e$ needs to be restricted to those devices agreeing on the value of $n$ Boolean parameters $\bname_1, \ldots, \bname_n$, before being folded with $\foldK(\anyvalue, \funvalue, \e)$. This rather abstract example might be concretely instantiated, e.g., in case an aggregation needs to be executed separately on devices with different configurations. In \HFC{}, this effect can be achieved only by restricting on each of the $2^n$ possibilities for the parameters, as in the following.
\begin{lstlisting}[language={hfc}]
if (b1 && b2 && ... bn) {foldhood(v, f, e)} {
  if (!b1 && b2 && ... bn) {foldhood(v, f, e)} {
    if (b1 && !b2 && ... bn) {foldhood(v, f, e)} { ... }}}
\end{lstlisting}
However, such a program might be infeasibly large even for small values of $n$. On the other hand, in \FSCAFI{} the above program can be concisely rewritten as:
\begin{lstlisting}[language={hfc}]
foldhood(v, f, e + if (b1) {nbr{0}} {nbr{0}} + if (b2) {nbr{0}} {nbr{0}} + ...)
\end{lstlisting}
whose size is linear in $n$.\footnote{In this code we assumed that $\xname$ has numerical type, but similar code can be obtained for any type by defining a binary operator which is the identity on its first argument.} The domain of the $i$-th 0-valued field-like subexpression above is equal to the set of devices agreeing on $\bname_i$, hence by intersecting all of them the resulting domain corresponds to the set of devices agreeing on each of the $n$ given parameters.

\section{\scafi{}: an Implementation of the \FeatherweightSCAFI{}}\label{sec-NC-at-work}

In this section,
 we present \scafi{},
 an implementation of \FSCAFI{} embedded in the Scala programming language.

While \emph{external} (or \emph{standalone}) domain-specific languages (DSLs)
 have their own syntax and semantics,
 \emph{internal} DSLs are \emph{embedded} into some other language~\cite{voelter2013dsl}.
Developing an internal DSL
 brings major benefits in terms of \emph{reuse}
 of features, toolchain support, and familiarity with
 the host language---at the expense of the syntactical and semantical constraints imposed by the host.
Scala has been chosen as the host language of \scafi{}
 for its
 modernity,
 its feature set (it provides mechanisms supporting the creation of expressive APIs and  DSLs~\cite{DBLP:conf/icfem/ArthoHKY15})
 and
 its ability to target and reuse libraries from various execution platforms
 (such as the JVM, but also Javascript through the ScalaJS project~\cite{doeraene2018cross-platform-language}).
With respect to other field calculi,
 \FSCAFI{} lends itself to smooth implementation in Scala,
 as e.g. there are no neighbouring fields to be dealt with
 at the typing, syntactical, and semantical level.
A Scala expression of type \texttt{Int} is automatically interpreted in \scafi{} as an aggregate program producing a field of \texttt{Int}, completely fulfilling the ``everything is a field'' view.

\subsection{The Scala implementation of \FSCAFI{}}

\setLanguage{scafi}

The following interface, implemented as a Scala trait, represents the basic \FSCAFI{} constructs as methods:
\begin{lstlisting}[escapechar=\%]
trait Constructs {
  // Key constructs
  def rep[A](init: => A)(fun: (A) => A): A
  def foldhood[A](init: => A)(aggr: (A, A) => A)(expr: => A): A
  def nbr[A](expr: => A): A
  def aggregate[A](b: => A): A

  // Abstract types
  type ID;         // type of device identifiers
  type LSNS, NSNS; // type of local and neighbour sensor names

  // Contextual, but foundational
  def mid(): ID
  def sense[A](name: LSNS): A
  def nbrvar[A](name: NSNS): A
}
\end{lstlisting}
In Scala, methods are introduced with the \texttt{def} keyword,
can be generic (with type parameters specified in square brackets),
may accept multiple parameter lists,
and specify a return type at the end of the signature (when this is not given the compiler attempts inference).
Function types may take the form \texttt{($I_1$,$\ldots$,$I_N$)=>$O$}, which is actually syntactic sugar over \texttt{FunctionN[$I_1$,$\ldots$,$I_N$,$O$]};
 curried function types can be written as \texttt{$I_1$=>$\cdots$=>$I_N$=>$O$} (\texttt{=>} is right associative).
Tuple types may take the form \texttt{($T_1$,$\ldots$,$T_N$)}, which is actually syntactic sugar over \texttt{TupleN[$T_1$,$\ldots$,$T_N$]};
 similarly, a literal tuple value can be denoted as \texttt{($v_1$,$\ldots$,$v_N$)}.
By-name parameters, denoted with type \texttt{=>$T$},
capture expressions or blocks of code that are passed unevaluated to the method and are actually evaluated every time the parameter is used---they are basically syntactic sugar over 0-ary function types.
As a relevant note on syntax, especially useful in DSLs to render constructs with code blocks, unary parameter lists in a method can be called also with curly brackets instead of parentheses.
E.g., all the following are valid invocations for \texttt{rep} method above: \texttt{rep($\cdot$)($\cdot$)}, \texttt{rep($\cdot$)\{$\cdot$\}}, \texttt{rep\{$\cdot$\}($\cdot$)}, \texttt{rep\{$\cdot$\}\{$\cdot$\}}.
Finally, nullary methods can be invoked without parentheses;
e.g., \texttt{mid} is a valid method call just like \texttt{mid()}.
%\metacomment{Would it be better to move this brief description of Scala syntax ahead to the impl section?}

First of all, notice that the input and output types of the constructs (especially \lstinline|nbr|) are proper, i.e., no field-like datatype appears in the signatures.
Then, compare trait \lstinline|Construct| with the \FSCAFI{} syntax from \Cref{sec-calculus-syntax}.
Beside the curried form of \lstinline|rep| and \lstinline|foldhood|, the only significant difference is an additional construct \lstinline|aggregate| which is used to turn standard Scala functions/methods into field calculus functions, i.e., as units of alignment (refer to~\Cref{sec-calculus-device-semantics-overall} for details).
We report here the encoding of $\ifK$, which is called \lstinline|branch| in \scafi{} due to the fact that \lstinline|if| is a reserved keyword in Scala:
\begin{lstlisting}
def branch[A](cond: => Boolean)(th: => A)(el: => A): A =
  mux(cond)(() => aggregate{ th })(() => aggregate{ el })()
\end{lstlisting}
Namely, the \lstinline|th|en and \lstinline|el|se expressions are passed unevaluated (as by-named arguments) to \lstinline|branch|, and there used in the bodies of two corresponding lambdas, wrapped with \lstinline|aggregate| (which could be seen as a form of tagging).
The \lstinline|mux| on \lstinline|cond| is used to select one of those functions, which is then invoked straight on via operator \lstinline|()|.

Moreover, it is crucial to note that
 the programming patterns discussed in \Cref{sssec:patterns} (abstraction and deferring)
 are supported
 in a syntactically transparent way
 thanks to the Scala support of by-name parameters.
Indeed, the refactoring shown in \Cref{sssec:patterns}
 are automatically applied.

Finally, in practice, an aggregate program can be defined in \scafi{}
 by extending class \lstinline|AggregateProgram| and implementing a method called \lstinline|main| which defines the main program expression.
This class provides an implementation of the \lstinline|Constructs| trait: by subclassing it, the syntax and semantics of the \scafi{} DSL is made available within the subclass definition, such that its objects could be used both for simulation purposes or to encapsulate the aggregate logic in actual distributed implementations.
\begin{lstlisting}
class GradientProgram extends AggregateProgram
  with Gradients        // brings gradient() in
  with StandardSensors  // brings nbrRange() in
{
  // Custom definitions
  def isSource = sense[Boolean]("source")

  // Main expression
  def main(): (Int,Double) = {
    // Return a 2-element tuple (round count, gradient value)
    (rep(0)(r=>r+1), gradient(isSource, nbrRange))
  }
}
\end{lstlisting}
A full description of \scafi{} is beyond the scope of this paper: the interested reader can refer to~\cite{cv2017actorcas} for more details.

\subsection{On pragmatics: neighbouring fields vs computation against a neighbour}
\label{nbrfields-vs-nc-in-practice}

In this section, we provide a hint to the practical issues that \FSCAFI{}/\scafi{} solve, enabling easy implementations \emph{and} a clean support for  aggregate programming.

Consider the \scafi{} expression of the classic gradient algorithm (cf. \Cref{ex:time-nbr-gradient}):
\begin{lstlisting}
rep(Double.PositiveInfinity) { case d =>
  mux(source) { 0.0 } { minHood(nbr(d) + metric()) }
}
\end{lstlisting}
and focus on expression \lstinline|nbr(d)+metric()|.
While in \FSCAFI{}/\scafi{} the latter is a sum of two local values, computed against aligned neighbours (by the \lstinline|*hood| operator), in the field calculus it would be a sum of two neighbouring fields.
While in a standalone DSL, implementations may transparently
 lift local operators to work with fields (i.e., collection-like objects),
 this would not typically be easy or even possible in internal DSLs.
Suppose a neighbouring field of \lstinline|T|-typed objects is represented as an instance of type \lstinline|Field[T]|.
In a purely functional world, two fields could be combined with a ``classical'' \lstinline|map2| function:
\begin{lstlisting}
map2(nbr(d), metric(), _+_)
\end{lstlisting}
However, this would introduce some clutter in aggregate programs
and require a versatile set of functional combinators.
In a powerful object-oriented language with mechanisms like extension methods (such as Scala or C\#), it would be possibly to manually add specific operators to work with fields.
\begin{lstlisting}[escapechar=\%]
implicit class NumericField[T:Numeric](f: Field[T]){
  def +(f2: Field[T]): Field[T] = map2(f, f2, implicitly[Numeric[T]].plus(_,_))
  // ...
}
// Assuming nbr(d) has type Field[Double]
nbr(d) + metric // possible by Scala's implicits (cf. %\cite{oliveira2010typeclasses}%) + operator syntax
\end{lstlisting}
This example leverages Scala's \emph{pimp-my-library idiom}~\cite{oliveira2010typeclasses}, which uses implicits to enable automatic, compile-time conversion of a \lstinline|Field| of \lstinline|Numeric|s to a \lstinline|NumericField| that could accept a method \lstinline|+|.
However, the lifting of local operators to fields would be manual, and not automatic---introducing a overhead on library development.
Some language could provide advanced features to mitigate this issue (e.g., through powerful compile-time macro mechanisms), but in this case we would pose severe constraints on the host language for the internal DSL, hence limiting the development environment where aggregate programming could be supported.

\section{Case Study} \label{sec-case}

\setLanguage{scafi}

%\metacomment{\begin{itemize}
% \item Show Scala tricks applied to aggregate functions
% \item Present a bit of the library
% \item Illustrate a scenario
% \item Provide few simulation snapshots (Varini's)
%\end{itemize}}

%\DoneByFD{***FERRUCCIO SAYS(@ALL): è necessario che alcune delle funzioni che utilizza siano già state introdotte e spegate nelle sezione precedenti, in modo da dare una sensazione di continuità. Lo scopo di questa sezione è illustrare un esempio significativo basato sul codice introdotto in pecedenza.***}

The goal of this section is to show how \FSCAFI{}/\scafi{}
 can be used to eloquently express collective behaviour.
% and map its operational unfolding into a system of actors.
%
%In doing so, we
% \emph{(i)} exemplify the process of passing from requirements
% to an aggregate specification,
% \emph{(ii)} show how \scafi primitives can be composed into
% higher-level, reusable building blocks, and
% \emph{(iii)} suggest an application of aggregate computing techniques in the context of
% %decentralised edge computing~\cite{garcia2015edgecomputing}
% edge computing~\cite{shi2016edgecomp}
% and ad-hoc cloudlets~\cite{chen2015adhoccloudlet}---though
% a more complete and in-depth account of the role of collective computing
% in such scenarios is left as an important future work.
%
More specifically, the goal is not to show that \FSCAFI{}/\scafi{}
 necessarily provide a better programming support with respect to e.g. \HFC{}/Protelis---though the Scala embedding does provide various practical benefits, through reuse of Scala's powerful type system and features.
Indeed, the benefit of \FSCAFI{} lies in a conceptual frugality
 (by avoiding the ``neighbouring field'' notion)
 that leads to a different computational model (cf., \Cref{sec-properties})
 with easier DSL embeddability in mainstream programming languages (cf. \Cref{sec-NC-at-work}).
Therefore, this section shows that such a model and implementation can be used to smoothly program increasingly complex field coordination-based systems.
In particular, we apply the programming model to the context of
%decentralised edge computing~\cite{garcia2015edgecomputing}
edge computing~\cite{shi2016edgecomp}
and ad-hoc cloudlets~\cite{chen2015adhoccloudlet} (introduced in \Cref{case:edge-comp}).
Evaluation of correctness of the proposed solution is performed by simulation.
The source code, configuration files, and instructions for launching the experiments and reproducing the presented results is publicly available at an online and permanent repository\footnote{\url{https://github.com/metaphori/experiments-neighbours-calculus}.}~\cite{zenodo-experiment}.

\subsection{Ad-hoc Edge Computing Support}
\label{case:edge-comp}

Edge computing~\cite{shi2016edgecomp}
 is a recent paradigm that aims to bring
 cloud-like functionality (e.g., virtualisation, elastic provisioning of resources, and ``anything-as-a-service'') closer to end devices
 at the edge of the network.
Its goal is to ultimately improve the Quality of Service (QoS) of situated systems as in the IoT,
 e.g., through reduction of communication latencies, bandwidth and energy consumption,
 essentially achieved by shortening the round-trip path of data to/from elastic resource pools.

In this case study for \scafi{}, we focus on \emph{ad-hoc} edge computing---i.e.,
 a decentralised form of edge computing
 that does not rely on pre-existing infrastructure
 but is rather supported by a large-scale network of devices
 logically interacting in a peer-to-peer fashion.
This may also be a first step for merging volunteer computing~\cite{durrani2014volunteercomputing} and edge computing into
 a form of \emph{volunteer edge computing}
 where devices with spare resources make them available (possibly
 by incentives) to nearby users.

We assume to operate in a region of a smart city
 where hundreds of devices willing to offer resources
 run the Aggregate Computing middleware
 and may hence participate in the Aggregate Edge Computing (AEC) application.
In a vision of smart city operating systems,
 it is indeed sensible to assume that devices willing to fully benefit from smart city services
 are asked to give their ``social contribution'' by participating in the system as well, where, of course, we also assume proper security and privacy systems are in place.
The system is \emph{open}, in the sense that devices may
 enter or leave the system as they like.
For this paper we abstract from the way in which tasks are assigned to devices.

The goal of AEC is to support a \emph{self-organisation process}
 to monitor resource availability and usage (i.e., load) in the system,
 and spread information as well as directives
 which can be exploited by devices for decentralised control activities
 and by users for determining advantageous deployment options.
The idea is to provide support for edge computing by leveraging
 the space-time distribution of resources.
 Managing edge resources in a collective, space-time-oriented fashion
  is meant to provide the following benefits:
 \emph{(i)} it makes straightforward to exploit the locality principles that often sustain IoT, CPS, and smart city applications,
 \emph{(ii)} it allows us to assume that devices are only able to communicate with other nearby devices through short-range wireless technologies (e.g., $10$ to $100$ meters range);
 \emph{(iii)} it allows us to relate the spatio-temporal distribution of resources
 with human social activity;
 and
 \emph{(iv)} it provides a natural way to handle collective metrics
  and take into account contextual data and constraints at multiple scales.

%By a non-functional perspective,
% we want our system to be resilient to and adaptive

\subsection{Model and Design}

We model devices of the AEC system as self-aware, situated entities
 that can sense
 the resources they own (e.g., number of CPUs, memory, and storage)
 and their current levels of utilisation.
\begin{lstlisting}[]
def getLocalProcessing() = sense[ProcessingStat]("processing")
def getLocalStorage() = sense[StorageStat]("storage")
// ...
def getLocalResources() = LocalStats(getLocalProcessing, getLocalStorage, ...)
\end{lstlisting}
Data can be easily modelled in Scala with case classes (i.e., algebraic data types with built-in equality, pattern matching support, etc.).
\begin{lstlisting}[]
// Just for simplcity of exposition, CPUs are not shared.
case class ProcessingStat(numCPUs: Int = 0, mHz: Int = 0, numCPUsInUse: Int = 0)

trait DiskType; case object HDD extends DiskType; case object SSD extends DiskType
case class StorageStat(MiBs: Int = 0, dtype: DiskType = HDD, MiBsInUse: Int = 0)

case class LocalStats(ps: ProcessingStat, ss: StorageStat, ...)
\end{lstlisting}
We also assume the devices can communicate with other devices
 in their vicinity---i.e., the neighbouring relationship is basically spatial.

To divide the complexity of the resource monitoring task,
 we split the whole environment
 into manageable \emph{areas} (which may also be called \emph{cloudlets})
 that can be monitored and controlled more easily.
In order to solve issues related to distributed consensus,
 we want the system to resiliently self-organise to elect a leader in each area,
 so as to provide consistent views of the state of areas
 as well as enacting global decisions for an area.
That is, the leaders effectively act as decentralised management points
 which are responsible for orchestrating a subset of devices in the system---so, in a sense, we apply a hybrid approach that combines
 orchestration and choreography as suggested in \cite{velasquez2018fogorchestration}.
In practice, leaders act as sinks of contributions from the devices of the
 corresponding area, and as propagators of area-wide context information
 back to its feeders;
 but, in order to implement such a process,
 we need mechanisms for resiliently streaming information
 across dynamic space-time regions.
The following section shows library functions that can be used
 to streamline the implementation of aforementioned mechanisms.

\subsection{Building Blocks}

Implementing the above ideas requires a careful crafting of the aggregate specification.
There is need of identifying some recurrent patterns
 (e.g., peer-to-peer propagation of information in- and out-ward an area)
 that may come handy in many situations.
They can then be used to raise the abstraction level
 through a reusable API that hides low-level detail (e.g. the interplay of \texttt{rep} and \texttt{nbr})
 only exposing the functional contract
 expressing the relationship between input fields and output fields.
%
%Finally, working directly with \texttt{rep}s and \texttt{nbr}s
% might clutter the high-level intent with low-level machinery,
% resulting in code that is not as declarative as it should be.
%
%The solution is to properly raise the abstraction level
% through a reusable API, hiding low-level detail to users
% and only exposing the functional contract
% expressing the relationship between input fields and output fields.
%
In doing so, we progressively move from a device-centric view
 (i.e., local sensing and neighbour-oriented communication)
 to an aggregate view (i.e., collective behaviour and data)
 of the system, regaining declarativity and intent.

In this case study,
 we leverage the following building blocks,
 which are available in the \scafi{} standard library.
\begin{lstlisting}[]
case class Gradient(algorithm: (Boolean,()=>Double)=>Double,
                    source: Boolean, metric: () => Double)
def G[V](gradient: Gradient, field: V, acc: V => V): V
def C[P: Bounded, V](potential: P, acc: (V, V) => V, local: V, Null: V): V
def S(grain: Double, metric: () => Double): Boolean
\end{lstlisting}

Their implementation is already described in other papers~\cite{BPV-COMPUTER2015,viroli:selfstabilisation}
 so here we present them by a functional perspective
 and just provide some insight about how they could work internally.

Recall the notion of gradient from \Cref{ex:time-nbr-gradient}:
 a field computation that,
 from a Boolean field denoting a region,
 returns the field of minimum distances from that region.
The gradient field is key in spatial computations
 because it provides a direction from any point to a target:
 by descending (or ascending, resp.) the gradient field, i.e.,
 by following the minimum (or maximum, resp.) values locally observed,
 one moves close (or away, resp.) to a certain target.
With case class \texttt{Gradient}, we package together the algorithm
 and input fields---e.g., it may be useful to ensure that both a particular gradient
  and another function using it adopt the same metric.
The \emph{gradient-cast} operator, also known as \texttt{G}, is a generalised form of the gradient computation
 that accumulates values through a function \texttt{acc}
 along a gradient field, starting with the given \texttt{field}
 at sources and new devices.
So, functionally \texttt{G} maps a source field and a field of values
 to a field whose values are progressively transformed, distance-wise,
 from source values.
For instance, a gradient field can be constructed via \texttt{G} as follows.
\begin{lstlisting}[]
val src = sense[Boolean]("source")
val classicGradient = Gradient(gradient(_,_), src, nbrRange)
G[Double](classicGradient, mux(src){ 0 }{ inf }, _+classicGradient.metric())
\end{lstlisting}
With \texttt{G}, it is also trivial to define a \texttt{broadcast} function for propagating
 information outward from a source, by ascending the potential field.
\begin{lstlisting}[]
def broadcast[V](source: Boolean, field: V): V =
  G[V](classicGradient, field, (v: V) => v)
\end{lstlisting}
A change of \texttt{field} where \texttt{source} is true
 basically generates a new wave
 in this continuous streaming of information.
Notice that \texttt{field} has a value everywhere
 but only the value of the \texttt{source} is selected
 and then identically preserved in the propagation.
Also, notice that both \texttt{G} and \texttt{broadcast} may
 admit a different signature where the ``potential field''
 is given as input and not calculated internally.

Dual to the gradient-cast operator is the \emph{converge-cast} operator,
 also known as \texttt{C}\revised{~\cite{DBLP:journals/cee/AudritoCDPV21}}:
 indeed, the conceptually inverse notion of a propagation from a point outward
 is a propagation inward, from the edge towards a point.
Functionally, \texttt{C}
 \texttt{acc}umulates a \texttt{local} field
 along some \texttt{potential} field
 (yielding \texttt{Null} if no direction to descend can be found),
 where the final results of the accumulation
 end up collected at points of zero potential.
In the signature above, we show that we may abstract from the concrete
 type \texttt{P} of the potential field (which is usually a \texttt{Double}),
 provided that there exists an ordering over \texttt{P} values---as
 enforced by the context bound constraint \texttt{P:Bounded} (which is the mechanism
 for the typeclass idiom in Scala~\cite{oliveira2010typeclasses}, where the method may be called provided
 an implicit instance of type \texttt{Bounded[P]} can be statically resolved).
It is also worth pointing out that both \texttt{G} and \texttt{C}
 are self-stabilising operators,
 whose compositions are also self-stabilising~\cite{viroli:selfstabilisation} (see the discussion in \Cref{sssec:universality});
 hence, there are formal guarantees that programs built using only
 self-stabilising constructs like \texttt{G} and \texttt{C}
 will eventually reach a fixpoint once inputs stop changing.

The last fundamental building block that we cover is the \emph{sparse-choice operator}, also known as \texttt{S},
 which yields a Boolean field holding \texttt{true}
 in correspondence of elected nodes (also called \emph{leaders})
 which are selected in a manner such that adjacent leaders
 are at about distance \texttt{grain}.
In practice, it performs a leader election process
 which also results in a split of the space
 into areas of a diameter which is approximately \texttt{grain},
 since any connected device will be at a distance at most \texttt{grain}
 from a leader (think of a gradient from the leader field).
Internally, an implementation of \texttt{S}
 may work by performing a distance competition among devices:
 initially, every device will propose itself as a leader,
 but initially assigned random values (ultimately discriminated by device IDs in case of breakeven) will be used to break symmetry (e.g., by selecting the minimum);
 if the gradient from the currently selected leader is larger than \texttt{grain}
  at a device, then it will start another competition for leadership of another area.

\subsubsection{Upgradeable functions}
The case study also uses the notion of \emph{upgradeable function}
 %introduced in \cite{eCAS:alignment} and
 explained in \Cref{sssec:scafi_nonaligned}.
The idea is to support hot upgrades of application algorithms
 by allowing injection of versioned functions at some device of the system
 and then spreading such novel versions through a gossiping process
 that eventually settles everywhere the function of the higher version.
Concretely, we model versioned functions and injectors as types of the form
\begin{lstlisting}[]
case class Fun[R](ver: Int, fun: ()=>R)
type Injecter[R] = () => Fun[R]
\end{lstlisting}
and define a function \texttt{up},
\begin{lstlisting}[]
def up[R](injecter: Injecter[R]): Fun[R] = rep(injecter()){ f =>
  foldhood(injecter())((f1,f2) => if(f1.ver>f2.ver) f1 else f2)(nbr{f})
}
\end{lstlisting}
 to handle the upgrade process
 through corresponding propagation and selection of functions.
%
%where \texttt{maxHoodSelector} selects the most up-to-date function \texttt{f} advertised by neighbours by maximising the corresponding \texttt{ver}sion.

\subsection{Implementation}
\label{sec:cs:impl}

Considering the modelling and library functions
 covered in previous sections,
 the core specification of the case study can be expressed
 as in the following \scafi{} program.
\begin{lstlisting}[]
class AdhocCloud extends AggregateProgram
  with StandardSensors with BlockG with BlockC with FieldUtils {

  val grain = 150 // parameter: mean distance between leaders

  def main = {
    // Get a metric for S, dynamically
    var Smetric = up[Double](metricInjecter(k)).fun
    // Resilient leader election
    val leaders = S(grain, Smetric)
    // Use the standard neighbouring range metric for the other operations
    val metric = nbrRange
    // Potential field to leaders
    val potential = gradient(leaders, metric)
    // Query local resources
    val localResources = getLocalResources()
    // Collect resources information towards leaders
    val resourcesPerArea =
      C[Double,Stats](potential, _++_, Map(mid->localResources), Map())
    // Broadcast from leaders outwards aggregated statistics
    val allResourcesInArea = broadcast(potential, metric,
      resourcesPerArea.values.foldLeft(AggregateStats())(_.aggregation(_)))
    // ...
  }
}
\end{lstlisting}
The flow is simple and reasoning is simplified by the
 collective stance and compositionality of Aggregate Computing:
 we use \texttt{S} to split the system into cloudlets
 monitored by corresponding leaders
 (where the metric used to create the areas can be dynamically updated);
 we build a gradient from leaders to set the pathways for collecting
 data into leaders and propagating the determinations of leaders to their workers;
 and finally, we use those pathways to stream
 the cloudlet-wise estimation of resource availability and usage to all the members.
By the way, notice that this resource monitoring example represents the application of a more general pattern~\cite{casadei19scr} that finds application in other large-scale, distributed coordination scenarios such as for situated problem solving~\cite{casadei2019scc} and client/server task allocation~\cite{casadei2019fmec}.

\subsection{Evaluation: Correctness} % and Qualitative Aspects}

\begin{figure}[t!]
\centering
\subfloat[Initial metric.]{\includegraphics[width=0.48\textwidth]{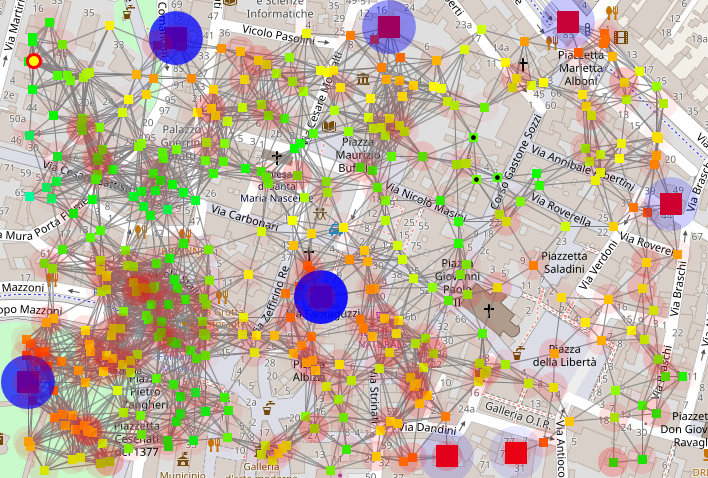}}
\hfill
\subfloat[Updated metric.]{\includegraphics[width=0.48\textwidth]{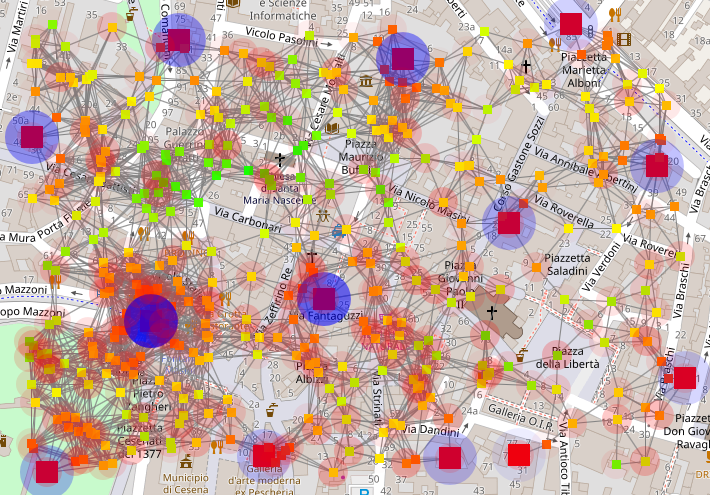}}
\caption{A visual representation of the scenario, with two snapshots taken before and after a hot update of the metric.
The big red squares represent leader nodes. Superimposed with any leader is a semi-transparent blue ball whose intensity is proportional to the amount of resources available in the corresponding area. Other nodes have a colour representing the gradient field towards leaders (warmer, red-like colours for nodes close to the leader, and colder, green-like colours for peripheral nodes).
The updated metric in (b) provides an overestimation of distances which results
 in smaller areas or, equivalently, in more cloudlets (and leaders).}
\label{fig:scenario}
\end{figure}

\begin{figure}[t!]
\centering
\subfloat{\includegraphics[width=0.48\textwidth]{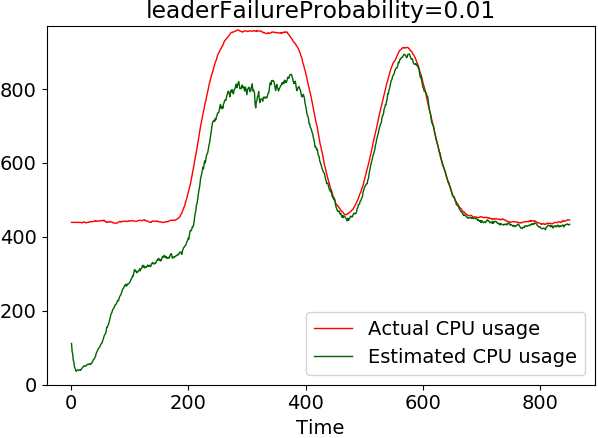}}
\hfill
\subfloat{\includegraphics[width=0.48\textwidth]{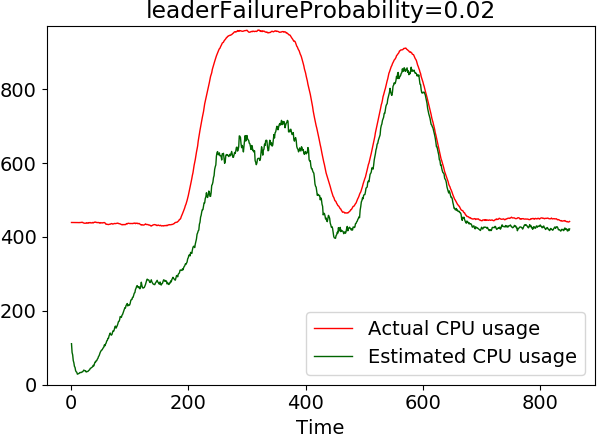}}
\caption{These graphs show the load estimation capability at leader nodes,
for different round-wise probabilities of leader failure (leading to a disproportionate  number of failures ($150$ to $250$) in the considered timespan). The plots are obtained by taking the mean of $30$ runs with different random seeds.
The scenario is modelled as follows: the spikes of high load start at time $t=200$ and $t=500$, the new metric is injected at time $t=450$, and we can observe that,
 after such upgrade, the system is able to perform a much more precise estimation of CPU usage.}
\label{fig:sim}
\end{figure}

%For the evaluation part of this case study,
% there are two main aspects to be considered.

%The first aspect relates to the functional correctness of the proposed solution.
%
The system has been simulated in the Alchemist simulator~\cite{alchemist-jos2013,casadei2016simulating}---a graphical snapshot is given by \Cref{fig:scenario}\footnote{Full-size colour pictures are available at the provided repository.},
whereas \Cref{fig:sim} provides empirical evidence of its correctness.
The simulated system consists of $400$ devices, each with a random amount of resources,
 dispersed unevenly around the downtown of Cesena, Italy. Mobility is not considered.
A certain level of load is simulated in the system,
 and we perturb it with spikes of high load.
The goal of the system is to reactively adjust the estimate of the load.
As for additional perturbations,
 we also introduce a probability for temporary failure of leaders.
Finally, in order to show the positive impact of updatable metrics,
 we inject a new metric to fine tune the dimension of areas at runtime---other aspects of configuration can be examined in the repository.

\section{Related Work} \label{sec-related}

Scenarios like the IoT, CPS, smart cities, and the like,
 foster a vision of rich computational ecosystems
 providing services by leveraging strict cooperation of large collectives of smart devices, which mostly operate in a contextual way.
Engineering complex behaviour in these settings calls for approaches (from formal languages to execution platforms) providing some abstraction of the notion of \emph{ensemble}, neglecting as much as possible the more traditional view of focussing on the single device and the messages it exchanges with peers.
Several works developed in different research communities share this attempt, often using different terminology, witnessed by various surveys, focussing on organisation of aggregates of devices
\revised{\cite{SpatialIGI2013,Viroli-et-al:JLAMP-2019,DBLP:journals/corr/abs-2201-03473}}, on developing frameworks for general-purpose self-organisation \cite{jsa}, addressing the issue of autonomic communication \cite{autonomicommunications}, and so on.
By factoring out common ideas from these works, and neglecting the diversity in lower-level concepts, one can identify families of approaches which have relations with the programming framework developed in this paper.
So, whereas the main related work is deeply covered in \Cref{sec-properties},
 in the following we comprehensively describe the research area in which our contribution can be positioned.

\subsubsection*{Device abstraction}
A first family is that of languages providing some form of abstraction over device behaviour and interaction.
TOTA~\cite{tota} and SAPERE~\cite{Zambonelli2015} define platforms for pervasive computing focussing on agent coordination, where agents indirectly interact by injecting/perceiving ``tuples'' equipped with diffusion and aggregation behaviour (Java-defined in TOTA, declaratively specified by rules in SAPERE), and resulting in ``fields of tuples'' spread over the network; such works provide archetypal approaches to create computational fields that were an inspiration for Aggregate Computing and \FSCAFI{}/\scafi{}---though, differently from them, we provide an expressive language to better control the shape and dynamics of fields over time.
Hood~\cite{hood} defines data types to model an agent's neighbourhood and attributes, with operations to read/modify such attributes across neighbours, and a platform optimising execution of such operations by proper caching techniques; \scafi{} provides a comparable neighbourhood abstraction, in that operator \texttt{nbr} is essentially used to declaratively access the set of neighbours as well as to combine observation of neighbours' attributes (indicated by the expression passed as argument) with modification of the same attribute locally.
Finally, works such as
\revised{ActorSpace~\cite{callsen1994actorspace},
 SCEL~\cite{SCEL,DBLP:journals/scp/AlrahmanNL20},
 and AbC~\cite{DBLP:journals/scp/AlrahmanNL20}} rely on so-called \emph{attribute-based communication}, where each actor/agent exposes a list of attributes, and communication can be directed to the \emph{group} of actors whose attribute match a given pattern; \scafi{} can achieve a similar expressiveness with construct \texttt{branch}, by which one can define subcomputations carried on by a subset  of nodes, which are those that execute the same branch and hence remain actually ``observable'' by operator \texttt{nbr}.
Generally speaking, it is worth noting that Aggregate Computing and \FSCAFI{}/\scafi{} address the key feature of fitting useful device abstractions (such as neighbourhood, message exchange, attribute-based filtering) into a purely functional approach, which can then smoothly interoperate with more traditional programming frameworks and languages.

\subsubsection*{Geometric/topological abstractions} Another class of related approaches falls under the umbrella of languages to express geometric constructs and topological patterns. In fact, in several application contexts concerning environment sensing and controlling, what is key is the physical (geometric, topological) shape that coalitions of mobile agents take, or that certain data items create while diffusing in the environment. In the Growing Point Language~\cite{coore1999botanical}, an \emph{amorphous medium}~\cite{bealamorphous} (essentially defined by an ad-hoc network) can be programmed by a nature-inspired approach of ``botanical computing'', where computational processes are seen as ``growing points'' increasingly expanding across neighbours until reaching a fixpoint shape defined by declarative constraints; \FSCAFI{}/\scafi{} and Aggregate Programming work on similar hypothesis on structure and behaviour of the underlying network, though adopting a different functional paradigm that is more expressive as it can address dynamical aspects as well, and which could be used as a lower-level language to reproduce the expressiveness of the growing point abstraction. The Origami Shape Language defined in \cite{nagpal2008programmable} is used to achieve similar goals of the Growing Point Language though focussing on programming a ``computational surface'', intended as a set of small devices working independently of their density in the surface: this language defines geometrical constructs to create basic regions and compose them, which could be turned into an API of \FSCAFI{}/\scafi{} blocks to be functionally composed to achieve similar complex geometrical structures.
\revised{
A more recent work is PLEIADES~\cite{DBLP:conf/dsn/BougetBLT18},
 which aim at expressing self-stabilising overlay structures in large-scale distributed systems; however, these goals are not achieved through programs as in \FSCAFI{}/\scafi{} but rather through shape formation protocols and equation-based specifications.
}
In general, due to the universal character of field computations \cite{abdv:universality}, one could consider \FSCAFI{}/\scafi{} as a viable implementation framework for a number of approaches to organise the shape of computational entities in a physical environment, with the additional byproduct of leveraging the theory of field computations to assess formal validity of certain properties, such as density independence as developed in \cite{BVPD-TAAS2017} or self-stabilisation in \cite{viroli:selfstabilisation}.

\subsubsection*{WSN-based discovery and streaming} A number of works originating in the context of information systems for sensor networks, such as TinyDB~\cite{tinydb}, Regiment~\cite{regiment}, and Cougar~\cite{Yao02thecougar}, address the problem of gathering information extracted from sensors in a given region of space, aggregating them somehow, and redirecting results over the devices in another region. They either focus on spatial query languages (Cougar), diffusion/aggregation policies (TinyDB) or functional models to express and manipulate streams of events (Regiment).
\revised{
A more recent approach is makeSense~\cite{DBLP:journals/tse/MottolaPOEFFGKM19},
 with leverages graphical models of business processes
 compiling to an imperative object-oriented macro-programming language (mPL)
 which supports
 many-to-one report actions
 and many-to-many collective actions, i.e.,
 actions possibly spanning multiple nodes.
}
Similar approaches for mobile ad-hoc networks (MANET) have been considered that do not use data-oriented techniques but rather focus on services: SpatialViews~\cite{ni2005manetspatialview} works by abstracting a MANET into \emph{spatial views} that can be iterated on to visit nodes and request services; AmbientTalk~\cite{van2014ambienttalk} is another language for MANETs that provides resilience against transient network partitions by automatically buffering sent messages. A clear advantage of functional-based field computations as supported in \FSCAFI{}/\scafi{} is that the various bricks of information collection, aggregation, diffusion, can be defined with the same language, wrapped in homogeneous components represented by \FSCAFI{}/\scafi{} functions, and composed to create more complex applications as exemplified in Section~\ref{sec-case}, while leveraging the discovery of nearby services through an actor-based runtime.

\subsubsection*{Distributed/parallel computing} Aggregate Computing and its incarnation in \scafi{} can evidently be considered as a declarative model for distributed programming, relying on abstractions and assumptions to make it easy to express certain kinds of programs by delegating important features (e.g., synthesising the concrete execution plan) to an underlying platform. As a notable example, we can draw a bridge with big data processing frameworks like MapReduce~\cite{dean2008mapreduce} and its derivation Apache Spark~\cite{zaharia2016apachespark}: they essentially provide a highly declarative language of stream processing (based on simple map, reduce, filter and fold operations) and delegate to the underlying platform the duty of breaking tasks into smaller chunks to be allocated by the available computational/storage resources. An approach which is execution-wise technically more similar to the one developed in this paper is given by Bulk-Synchronous Parallel (BSP)-inspired frameworks, such as the large-scale graph processing framework Apache Giraph~\cite{sakr2016giraph}, which defines computations in terms of transformation of large graphs of data, typically stored in a distributed database.
 In this respect, \FSCAFI{}/\scafi{} can be seen as relying on a similar approach to achieve a more complex goal, namely, that of declaratively specifying a dynamic collective behaviour (based on distributed field computation), ultimately digesting information coming from distributed sensors and producing instructions for actuators, in such a way so as to make it breakable into smaller pieces (single rounds of computations) allocated to each device in the network.

\subsubsection*{Service choreographies} Choreographies~\cite{peltz2003webserviceorchestrationchoreography} are an approach to service composition
 where the interaction protocol of collaborative workflows is specified by a global viewpoint.
So, they define the cooperative contract of multiple parties
 playing certain roles and collaborating to achieve a global goal.
There are strong similarities with Aggregate Computing; indeed, aggregate programs
 \emph{(i)} globally define the interactions supporting a collective computation, cf. the \texttt{nbr} construct;
 \emph{(ii)} define services which are fundamentally collaborative in nature;
 and \emph{(iii)} define roles for devices implicitly by the set of (sub-)computations executed by them or, equivalently, by the set of domain branches that they select in the collective workflow.
By contrast, however, while classical choreographies usually express goal-oriented workflows where the roles are few and statically assigned to parties,
 aggregate computations typically carry on continuous processes
 where devices repeatedly participate in the collective service
 and can also play different roles across time depending on the context.
Additionally, choreographies abstract from the particular services carried on by the involved parties,
 focussing instead only on when and what messages are exchanged,
 whereas aggregate programs specify \emph{computations},
 which are collective and yield global results represented as fields.
Moreover, interaction in field computations is only possible (by alignment)
 between entities playing the same (sub-)computation,
 and is neighbour-driven (rather than peer-to-peer and role-driven).
Finally, choreographies and choreographic programming~\revised{\cite{DBLP:journals/tcs/Cruz-FilipeM20}}
 mainly focus on checking conformance or building correct-by-construction concurrent programs (e.g., deadlock-free), but do little for functional composability of adaptive behaviours, which is instead the core of Aggregate Computing.

\subsubsection*{Spatial \revised{and field-based} computing languages}
Aggregate Computing directly descends from the class of so-called general-purpose spatial computing languages, all addressing the problem of engineering distributed (or parallel) computing by providing mechanisms to manipulate data structures diffused in space and evolving in time. Notable examples include the StarLisp approach for parallel computing \cite{starlisp}, the SDEF programming system inspired by systolic computing \cite{SDEF}, and topological computing with MGS~\cite{GiavittoMGS05}. They typically provide specific abstractions that significantly differ from that of computational fields: for instance, MGS defines computations over manifolds, the goal of which is to alter the manifold itself as a way to represent input-output transformation.

\revised{
A particular class of spatial computing languages is given by
 \emph{field-based coordination languages}~\cite{Viroli-et-al:JLAMP-2019}.
Beside the field calculus, covered in this paper,
 other works like TOTA~\cite{tota} and
 calculi like SMuC~\cite{DBLP:journals/corr/Lluch-LafuenteL16}
 also come under this umbrella.
SMuC, however,
 does not express fields as functional compositions,
 but as formulas on fixpoint steps
 whose result may serve as continuations of the program.
}
Specific programming languages to work with computational fields have been introduced as well, with the Proto~\cite{proto06a} programming language as common ancestor, \Protelis{}~\cite{Protelis15} as its Java-oriented DSL version,
 \revised{and the more recent FCPP~\cite{FCPP-ACSOS-2020}}. %(whose pragmatic and theoretical differences are highlighted and discussed in \Cref{ssec:why-scafi} and \Cref{ssec:alignedhfc}, respectively).
The field calculus, deeply described and compared to in Section~\ref{sec-properties}, has been designed as common core calculus for such languages, and for studying behavioural properties and semantic aspects \cite{BVPD-TAAS2017,viroli:selfstabilisation,audrito2019tocl}. Though rather similar to those languages, \FSCAFI{} %/\scafi{}
evolved in a different way: its design is profoundly influenced by the need of smoothly integrating field computations in the syntactic, semantic, and typing structures of modern, conventional languages (like Scala---see \Cref{sec-NC-at-work}), and this required key semantic changes that motivated a more general and expressive calculus, as presented in this paper.

\section{Conclusions} \label{sec-conclusion}

%\subsubsection*{Wrap-up}
%
Aggregate Computing is a recent paradigm
 for ``holistically'' engineering CASs and
 smart situated ecosystems,
 that aims to exploit, both functionally
 and operationally, the increasing computational capabilities of
 our environments---as fostered by driver scenarios like IoT, CPS, and smart cities.
It formally builds on computational fields and corresponding calculi
 to functionally compose macro behavioural specifications
 that capture, in a declarative way, the adaptive logic for
 turning local activity into global, resilient behaviour.
In order to promote conceptual frugality and foster smooth embedding of this programming model in mainstream languages,
 we propose a field calculus variant, called \FSCAFI{},
 which substitutes the notion of a ``neighbouring field''
 with a novel notion of a ``computation against a neighbour''.
We formalise \FSCAFI{} and thoroughly compare to the higher-order field calculus (HFC),
 stressing differences in expressiveness
 and identifying a common fragment
 allowing straightforward transfer of interesting properties
 such as self-stabilisation.
To witness the benefits of the novel calculus, we cover the \scafi{} aggregate programming language, which implements \FSCAFI{} as a DSL embedded in Scala.
Finally, we use a simulated case study in edge computing
 to show that \scafi{}/\FSCAFI{} are effectively expressive in practice.

The availability of a Scala-based implementation of field computations naturally suggests a number of future works.
From the linguistic viewpoint, it is interesting the study the interplay of field programs with advanced functional programming techniques, like the use of monads to structure the specification of increasingly complex field computations and field processes~\cite{casadei19processes,casadei2021eaai-procs}, or the use of implicit parameters to define common contexts for library functions, there included the ability of dynamically select the most proper implementation of building blocks for the application at hand \cite{viroli:selfstabilisation}.
At the \scafi{} platform level, instead, we plan to deeply investigate and implement techniques for infrastructure- and QoS-aware
adaptation of deployment and execution strategies for aggregate system execution, along the lines of~\cite{casadei2020pulverization}.
This is to determine suitable application partitioning schemas and build a monitoring and control plan to effectively carry out configuration transitions.
The case study we discussed in the paper also suggests the importance of collective, self-adaptive/self-organising techniques
for (decentralised) edge computing and volunteer computing in ad-hoc cloudlets.
In particular, a question relates to how much such edge computing systems
 need orchestration, what benefits and challenges
 can emerge from a hybrid approach that brings self-organising processes in,
 and how aggregate techniques allows us to tackle such integration of paradigms.

\appendix

\section{\HFC{} Operational Device Semantics} \label{apx:hfc:opsem}

We now present the \HFC{} operational device semantics, modelling computation of a device within one round, as developed in literature \cite{audrito2019tocl} and reported in \Cref{fig:hfc:deviceSemantics}. The network semantics of \HFC{} is not reported, since it is the same as that of \FSCAFI{} (cf.~\Cref{sec-calculus-network-semantics}). The \HFC{} device semantics is based on value-trees and value-tree environments, as that of \FSCAFI{}. The syntax of them, together with the auxiliary functions and syntactic shorthands, are not reported here since they are the same as in \FSCAFI{} (cf.~\Cref{fig:deviceSemantics}). The derived judgements are slightly different, and follow the form $\bsopsem{\deviceId}{\Trees}{\senstate}{\e}{\vtree}$, to be read ``expression $\e$ evaluates to  value-tree $\vtree$ on device $\deviceId$  with respect to the value-tree environment $\Trees$ and sensor state $\senstate$''.

\begin{figure}[!t]{
 \framebox[1\textwidth]{
 $\begin{array}{l}
%%%  EVALUATION RULES
%\textbf{Rules for expression evaluation:} \hspace{5.7cm} %\hfill
%%\vspace{-0.2cm}
%  \boxed{\bsopsem{\deviceId}{\Trees}{{\senstate}}{\e}{\vtree}}
%\skiptransition%[-5pt]
\begin{array}{c}
%\vspace{-0.1cm}
\nullsurfaceTyping{E-LOC}{
\bsopsem{\deviceId}{\Trees}{{\senstate}}{\lvalue}{\mkvtree{E-LOC}{\lvalue}{}}
}
\qquad\qquad
\surfaceTyping{E-FLD}{\qquad \fvalue' = \proj{\fvalue}{\domof{\Trees}\cup\{\deviceId\}}}{
\bsopsem{\deviceId}{\Trees}{{\senstate}}{\fvalue}{\mkvtree{E-FLD}{\fvalue'}{}}
}
\skiptransition\\[-3pt]
\surfaceTyping{E-B-APP}{\quad
\begin{array}{ll}
	\bsopsem{\deviceId}{\piIof{1}{\Trees}}{\senstate}{\e}{\vtree}
	&
	\bsopsem{\deviceId}{\piIof{i+1}{\Trees}}{\senstate}{\e_i}{\vtree_i}  \quad \text{for all}\; i \in 1, \ldots, n
	\\
	\bname = \vrootOf{\vtree}
	&
	\anyvalue=\builtinop{{\bname}}{\deviceId}{\Trees{,\senstate}}(\vrootOf{\overline{\vtree}})
\end{array}
}{
	\bsopsem{\deviceId}{\Trees}{{\senstate}}{\e(\overline{\e})}{\mkvtree{E-B-APP}{\anyvalue}{\vtree,\overline{\vtree},\anyvalue}}
}
\skiptransition\\
\surfaceTyping{E-D-APP}{ \quad
\begin{array}{ll}
	\bsopsem{\deviceId}{\piIof{1}{\Trees}}{\senstate}{\e}{\vtree}
	&
	\bsopsem{\deviceId}{\piIof{i+1}{\Trees}}{\senstate}{\e_i}{\vtree_i}  \quad \text{for all}\; i \in 1, \ldots, n
	\\
	\funvalue = \vrootOf{\vtree} {\text{ is not a built-in}}
	&
	\bsopsem{\deviceId}{\piBof{\funvalue}{\Trees}}{{\senstate}}{\applySubstitution{\body{\funvalue}}{\substitution{\args{\funvalue}}{\vrootOf{\overline{\vtree}}}}}{\vtree'}
\end{array}
}{
	\bsopsem{\deviceId}{\Trees}{{\senstate}}{\e(\overline{\e})}{\mkvtree{E-D-APP}{\vrootOf{\vtree'}}{\vtree,\overline{\vtree},\vtree'}}
}
\skiptransition\\
\surfaceTyping{E-NBR}{
         \qquad
\Trees_1=\piIof{1}{\Trees}
\qquad
     \bsopsem{\deviceId}{\Trees_1}{{\senstate}}{\e}{\vtree_1}
\qquad
 \fvalue=\mapupdate{\vrootOf{\Trees_{1}}}{\envmap{\deviceId}{\vrootOf{\vtree_1}}}
 }{
\bsopsem{\deviceId}{\Trees}{{\senstate}}{\nbrK\{\e\}}{\mkvtree{E-NBR}{\fvalue}{\vtree_1}}
}
\skiptransition\\
\surfaceTyping{E-REP}{
	\begin{array}{ll}
     \bsopsem{\deviceId}{\piIof{1}{\Trees}}{{\senstate}}{\e_1}{\vtree_1} & \lvalue_1=\vrootOf{\vtree_{1}}\\
     \bsopsem{\deviceId}{\piIof{2}{\Trees}}{{\senstate}}{\applySubstitution{\e_2}{\substitution{\xname}{\lvalue_0}}}{\vtree_2}~~& \lvalue_2=\vrootOf{\vtree_{2}}
	\end{array}
	\quad
	\lvalue_0 = \left\{\begin{array}{lc}
                             \vrootOf{\piIof{2}{\Trees}}(\deviceId) & \mbox{if} \;  \deviceId \in \domof{\Trees} \\
                             \lvalue_1 & \mbox{otherwise}
                           \end{array}\right.
	\!\!\!\!
 }{
\bsopsem{\deviceId}{\Trees}{{\senstate}}{\repK(\e_1)\{(\xname) \; \toSymK{\e_2}\}}{\mkvtree{E-REP}{\lvalue_2}{\vtree_1,\vtree_2}}
}
\end{array}
\end{array}$}
}
\vspace{-0.1cm}
\caption{Big-step rules for \HFC{} expression evaluation.} \label{fig:hfc:deviceSemantics}
\end{figure}

Rules \ruleNameSize{[E-LOC]} and \ruleNameSize{[E-FLD]} model the evaluation of expressions that are either a local value or a neighbouring field value, respectively. For instance, evaluating the expression $\mathtt{1}$ produces (by rule  \ruleNameSize{[E-LOC]}) the value-tree $\mkvtree{E-LOC}{1}{}$, while evaluating the expression $\mathtt{+}$ produces the value-tree $\mkvtree{E-LOC}{+}{}$. Note that, in order to ensure that domain alignment is obeyed, rule \ruleNameSize{[E-FLD]} restricts the domain of the neighbouring field value $\fvalue$ to the domain of $\Trees$ augmented by $\deviceId$.

Rule \ruleNameSize{[E-B-APP]} models the application of built-in functions. It is used to evaluate expressions of the form $\e(\e_1 \cdots \e_n)$ such that the evaluation of $\e$ produces a value-tree $\vtree$ whose root $\rho(\vtree)$ is a built-in function $\bname$. It produces the value-tree $\mkvtree{E-B-APP}{\anyvalue}{\vtree,\vtree_{1},\ldots,\vtree_{n},\anyvalue}$, where  $\vtree_{1},\ldots,\vtree_{n}$ are the value-trees produced by the evaluation of the actual parameters $\e_{1},\ldots,\e_{n}$ ($n\ge 0$) and $\anyvalue$ is the value returned by the function.
Rule \ruleNameSize{[E-B-APP]}  exploits the special auxiliary function $\builtinop{\bname}{\deviceId}{\Trees, \senstate}$, whose actual definition is abstracted away, in order to allow for customised sets of built-in functions. This auxiliary function is such that $\builtinop{\bname}{\deviceId}{\Trees, \senstate}(\overline\anyvalue)$ computes the result of applying built-in function $\bname$ to values $\overline\anyvalue$ in the current environment and sensor state of device $\deviceId$. We require that $\builtinop{\bname}{\deviceId}{\Trees, \senstate}(\overline\anyvalue)$ always yields values of the expected type $\type$ where $\bname$ has a suitable type $(\overline{\type}) \to \type$.

Rule \ruleNameSize{[E-D-APP]} models the application of user-defined or anonymous functions, i.e., it is used to evaluate expressions of the form $\e(\e_1 \cdots \e_n)$ such that the evaluation of $\e$ produces a value-tree $\vtree$ whose root  $\funvalue=\vrootOf{\vtree}$ is a user-defined function name or an anonymous function value.
It is similar to rule \ruleNameSize{[E-B-APP]}, except for the last subtree $\vtree'$ of the result, which is produced by evaluating the body of the function $\funvalue$ with respect to the value-tree environment $\piBof{{\funvalue}}{\Trees}$ containing only the value-trees associated to the evaluation of functions with the same name as $\funvalue$.

The evaluation of rule \ruleNameSize{[E-REP]} depends on whether it is performed against a tree environment with or without $\deviceId$ in its domain. If it is present, $\lvalue_0$ is obtained from it (being the previously computed value for the $\repK$ construct), otherwise it is set to the result of $\e_1$. The evaluation concludes substituting $\lvalue_0$ for $\xname$ in the body of $\e_2$. Notice that this substitution corresponds to the result of applying $(\xname) \toSymK{\e_2}(\lvalue_0)$ according to rule \ruleNameSize{[E-D-APP]} (skipping some branches of the resulting value-tree).

Value-trees also support modelling information exchange through the $\nbrK$ construct, as of rule \ruleNameSize{[E-NBR]}.
In this rule, the neighbours' values for $\e$ are extracted into a neighbouring field value as $\fvalue = \vrootOf{\Trees_1}$. Then $\fvalue(\deviceId)$ is updated to the more recent value $\lvalue = \vrootOf{\vtree_1}$, as represented by the notation $\mapupdate{\fvalue}{\envmap{\deviceId}{\lvalue}}$.

\section{Proofs} \label{apx:proofs}

\subsection{Computation Determinism}\hfill\\ \label{apx:preservation}

\begin{replem}[Computation Determinism]{lem:completeness}
	Let $\e$ be a well-typed closed expression, $\Trees \in \coherentEnv{\e}{\type}{\LTStypEnv;\TtypEnv}$. Then for all device identifiers $\deviceId$, $\deviceId'$ and sensor state $\senstate$:
	\begin{enumerate}
		\item
		$\bsopsemFAIL{\deviceId,\deviceId}{\Trees}{\senstate}{\e}$ cannot hold.
		\item
		There is at most one derivation of the kind $\bsopsem{\deviceId,\deviceId'}{\Trees}{\senstate}{\e}{\vtree}$ or $\bsopsemFAIL{\deviceId,\deviceId'}{\Trees}{\senstate}{\e}$.
%		If the evaluation of $\e$ does not diverge, then $\bsopsemFAIL{\deviceId,\deviceId'}{\Trees}{\senstate}{\e}$
%or $\bsopsem{\deviceId,\deviceId'}{\Trees}{\senstate}{\e}{\vtree}$ for a unique $\vtree$.
%		\item
%		The rules in Figures \ref{fig:deviceSemantics} and \ref{fig:deviceSemanticsFAIL} are all mutually exclusive.
	\end{enumerate}
\end{replem}
\begin{proof}
	\begin{enumerate}
		\item \label{subfact:noneighbourok}
		Notice that a failure can occur only if in a certain subexpression (which is a relational sensor application or $\nbrK$-expression) the neighbour $\deviceId'$ is not in $\domof{\Trees} \cup \{\deviceId\}$, thus in particular $\deviceId' \neq \deviceId$. Any construct other than a $\foldK$ either propagates the neighbour $\deviceId'$ as is or resets it to $\deviceId$, so for a failure to occur in a sub-evaluation of $\bsopsemFAIL{\deviceId,\deviceId}{\Trees}{\senstate}{\e}$ the involved subexpression must be inside a $\foldK$-expression. However, $\foldK$-expressions never fail, as they ``absorb'' failures by skipping failing neighbours, concluding the proof.
		\item
		We assume that $\e$ has at least one possible derivation of the big-step operational semantics, and prove by induction on its length that this derivation is unique.

		If $\e$ is a value, a $\repK$- or $\foldK$-expression, then it cannot fail since the conclusions of every rule in Figure \ref{fig:deviceSemanticsFAIL} is either a function application or a $\nbrK$-expression. Furthermore, its evaluation produces a value by rules \ruleNameSize{[E-VAL]}, \ruleNameSize{[E-REP]} or \ruleNameSize{[E-FOLD]} (respectively), using fact (\ref{subfact:noneighbourok}) and observing that the only sub-evaluation which may fail occur in the third argument of a $\foldK$, and does not influence the evaluation of the $\foldK$ itself. Such value must be unique, since by inductive hypothesis so are the results of its subexpressions and rules are deterministic.

		If $\e$ is an $\nbrK$-expression, then exactly one of the rules \ruleNameSize{[E-NBR]}, \ruleNameSize{[E-NBR-LOC]} or \ruleNameSize{[E-NBR-FAIL]} must be applicable, depending on whether resp.~$\deviceId' = \deviceId$, $\deviceId' \in \domof{\Trees} \setminus \deviceId$, or neither holds. In the first two cases, the result is then unique by inductive hypothesis and rule determinism.

		Finally, if $\e = \e'(\overline\e)$ is a function application, then the evaluation of $\e'$ does not fail by fact (\ref{subfact:noneighbourok}), and thus produces a uniquely determined function $\funvalue$ (with a possibly infinite derivation) by inductive hypothesis. If the evaluation of some of the arguments fails, then rule \ruleNameSize{[E-APP-ARG-FAIL]} applies and $\e$ fails, and none of the other function application rules (\ruleNameSize{[E-B-APP]}, \ruleNameSize{[E-D-APP]}, \ruleNameSize{[E-R-APP-FAIL]}, \ruleNameSize{[E-D-APP-FAIL]}) applies since they all require all arguments to evaluate to a value.

		Otherwise, $\overline\e$ evaluates (uniquely) to $\overline\anyvalue$. If $\funvalue = \snvalue$ is a relational sensor and $\deviceId'$ is not in $\domof{\piBof{\snvalue}{\Trees}} \cup \{\deviceId\}$, then $\e$ fails by Rule \ruleNameSize{[E-R-APP-FAIL]} and does not produce a value since $\builtinop{\snvalue}{\deviceId,\deviceId'}{\piBof{\snvalue}{\Trees},\senstate}$ is undefined making rule \ruleNameSize{[E-B-APP]} inapplicable. If instead $\funvalue$ is a built-in function which is not a relational sensor or $\deviceId'$ is in $\domof{\piBof{\funvalue}{\Trees}} \cup \{\deviceId\}$, then $\builtinop{\funvalue}{\deviceId,\deviceId'}{\piBof{\funvalue}{\Trees},\senstate}$ is defined and rule \ruleNameSize{[E-B-APP]} is applicable while rule \ruleNameSize{[E-R-APP-FAIL]} is not. The result is then unique by inductive hypothesis and rule determinism.

		The only remaining case is when $\funvalue$ is a user-defined or anonymous function. By inductive hypothesis, $\applySubstitution{\body{\funvalue}}{\substitution{\args{\funvalue}}{\vrootOf{\overline{\vtree}}}}$ either fails or it produces a unique value. In the first case, rule \ruleNameSize{[E-D-APP]} applies (producing an unique value) while rule \ruleNameSize{[E-D-APP-FAIL]} does not; in the second case the converse happens, concluding that derivations are unique (when present).
%
%		Finally, when no derivations exist the term has to be diverging, since the rules in \Cref{fig:deviceSemantics} and \ref{fig:deviceSemanticsFAIL} match every possible well-typed expression in the language (as just showed above), concluding the proof.
%
%		\item
%		It follows by the fact shown in (2) that derivations are unique.
		  \qedhere
	\end{enumerate}
\end{proof}

\subsection{Calculus Equivalence}\hfill\\ \label{apx:equivalence}

\begin{repthm}[Equivalence between \HFCprime{} and  \FSCAFIi{}]{thm:equivalence}
	Assume that for every built-in function $\bname$ with local arguments, $\bname(\overline\xname)$ has the same behaviour in \HFCprime{} and \FSCAFIi{} for every substitution of $\overline\xname$. Assume that $\expTypJud{\LTStypEnv}{\TtypEnv}{\e}{\type}$ in \HFCprime{}/\FSCAFIi{} where $\TtypEnv = \overline\xname : \overline\type$, let $\overline\anyvalue$ be \HFCprime{} values of type $\overline\type$. Then $\applySubstitution{\e}{\substitution{\overline\xname}{\overline\anyvalue}}$ has the same behaviour in \HFCprime{} and \FSCAFIi{}.
\end{repthm}
\begin{proof}
	We proceed by induction on the syntax of closed expressions $\e$, simultaneously for all possible network environments.
	\begin{itemize}
		\item
        $\e = \anyvalue$: since \FSCAFI{} rule \ruleNameSize{[E-VAL]} is identical to \HFC{} rule \ruleNameSize{[E-LOC]}, the thesis holds whenever $\e = \anyvalue$ is a local value. Since field values are not allowed to appear in source \HFC{} programs, and do not exist in \FSCAFI{} programs, the thesis follows.

        \item
        $\e = \xname$: if $\type$ is a local type, the same reasoning for $\e = \anyvalue$ applies. If $\type$ is a field type and $\xname$ is substituted with $\fvalue$, the evaluation result is $\proj{\fvalue}{\domof{\Trees}}$ in \HFC{}. In \FSCAFI{}, for every $\deviceId \in \domof{\Trees}$, $\applySubstitution{\e}{\substitution{\overline\xname}{\overline\anyvalue(\deviceId)}}$ is $\fvalue(\deviceId)$, which is a local value hence it evaluates to itself by rule \ruleNameSize{[E-VAL]}, concluding this part of the proof.

        \item
        $\e = \e_{n+1}(\overline\e)$: notice that rules \ruleNameSize{[E-B-APP]} and \ruleNameSize{[E-D-APP]} are identical in \HFC{} and \FSCAFI{} (ignoring the neighbour device $\deviceId'$). This concludes the proof in case all $\overline\e$ and the resulting output of $\e_{n+1}$ are local. Otherwise, there are three possibilities:
        \begin{itemize}
            \item $\e_{n+1}$ evaluates to a built-in function, so that the thesis follows by inductive hypothesis and the coherence hypothesis on built-in functions.
            \item $\e_{n+1}$ evaluates to a user-defined or anonymous function $\fname$. In this case, the domain of computation is restricted both in \HFC{} and in \FSCAFI{} to the aligned neighbours $\piBof{\fname}{\Trees}$; which in fact reduces all fields involved in the computation to the new domain, both in \HFC{} by Rule \ruleNameSize{[E-FLD]} and in \FSCAFI{} by Rule \ruleNameSize{[E-FOLD]} (which aggregates only values from devices in $\domof{\Trees}$).

            After alignment, in both cases the argument's values are substituted into the body of $\fname$. If all arguments have local type, the substitution is performed in the same way both in \HFC{} and \FSCAFI{}, hence the result of $\e$ corresponds by inductive hypothesis.
            If some argument has field type, then those arguments cannot occur within folding expressions, and are either ignored or manipulated point-wise to form a field result. In this case, the results of $\e$ for every neighbour $\deviceId'$ correspond to the point-wise results of field in \HFC{}, concluding the proof in this case.
        \end{itemize}

        \item
        $\e = \repK(\e_1)\{(\xname) \toSymK{\e_2}\}$: the thesis follows (with a further induction on firing events) by noticing that rule \ruleNameSize{[E-REP]} in \HFC{} corresponds to that of \FSCAFI{} together with the expansion of the anonymous function application.

        \item
        $\e = \nbrK\{\e_1\}$: by inductive hypothesis, $\e_1$ is an expression of local type which always evaluates to the same trees in \HFC{} and \FSCAFI{}. It follows by \HFC{} rule \ruleNameSize{[E-NBR]} that $\e$ in $\deviceId$ evaluates to the field mapping $\overline\deviceId$ to the corresponding values $\overline\anyvalue$ of $\e_1$ in those devices. By \FSCAFI{} rules \ruleNameSize{[E-NBR]} (for $\deviceId_i \neq \deviceId$) and \ruleNameSize{[E-NBR-LOC]} (for $\deviceId_i = \deviceId$) it follows that $\bsopsem{\deviceId,\deviceId_i}{\Trees}{\senstate}{\e}{\mkvtree{}{\anyvalue_i}{\overline\vtree}}$ for all $i = 1 \ldots n$.

        \item
        $\e = \foldK(\e_1, \e_2, \e_3)$: by inductive hypothesis, $\e_1$ and $\e_2$ evaluate to the same local values $\anyvalue'$, $\funvalue$ while $\e_3$ does not contain free variables of field type and it satisfies that
        \[
        \bsopsem{\deviceId}{\Trees}{\senstate}{\e_3}{\mkvtree{}{(\envmap{\overline\deviceId}{\overline{\anyvalue}})}{\overline\vtree}} \iff \bsopsem{\deviceId,\overline\deviceId}{\Trees}{\senstate}{\e_3}{\mkvtree{}{\overline\anyvalue}{\overline\vtree}}
        \]
        By \FSCAFI{} rule \ruleNameSize{[E-FOLD]}, the overall result of the expression is then $\funvalue(\anyvalue', \overline\anyvalue)$ (computed two elements at a time), which corresponds to the result of applying $\foldK$ in \HFC{}. \qedhere
	\end{itemize}
\end{proof}

% Acknowledgments
%\begin{acks}
\section*{Acknowledgment}

This work has been partially supported by the EU/MUR FSE REACT-EU PON R\&I 2014-2020.
%\end{acks}

% Bibliography
%\bibliographystyle{ACM-Reference-Format-Journals}
\bibliographystyle{alphaurl}
\bibliography{long}
                             % Sample .bib file with references that match those in
                             % the 'Specifications Document (V1.5)' as well containing
                             % 'legacy' bibs and bibs with 'alternate codings'.
                             % Gerry Murray - March 2012

\end{document}